\newtheorem{theorem}{Theorem}
\newtheorem{lemma}[theorem]{Lemma}
\newtheorem{claim}{Claim}
\newtheorem{definition}{Definition}
\providecommand{\keywords}[1]{\vspace{\baselineskip}\noindent\textbf{\textit{Keywords:}} #1}
\begin{document}
\title{On Differentially Private Online Collaborative Recommendation Systems}
\author{Seth Gilbert \and Xiao Liu \and Haifeng Yu}
\date{}
\maketitle

\begin{abstract}
In collaborative recommendation systems, privacy may be compromised, as users' opinions are used to generate recommendations for others. In this paper, we consider an online collaborative recommendation system, and we measure users' privacy in terms of the standard differential privacy. We give the first quantitative analysis of the trade-offs between recommendation quality and users' privacy in such a system by showing a lower bound on the best achievable privacy for any non-trivial algorithm, and proposing a near-optimal algorithm. From our results, we find that there is actually little trade-off between recommendation quality and privacy for any non-trivial algorithm. Our results also identify the key parameters that determine the best achievable privacy.

\keywords{differential privacy, collaborative recommendation system, lower bound, online algorithm}
\end{abstract}

\section{Introduction}
In this paper we consider an \emph{online collaborative recommendation system} that attempts to predict which objects its users will like. Imagine, for example, a news website which publishes articles every day. When a user enjoys an article, he/she votes on the article (e.g., upvotes it, likes it, +1s it, etc). Users can also ask the system for a recommendation, i.e., to suggest an article that they might like. After reading the recommended article, the user gives the system feedback on the recommendation so that it can improve its recommendation quality. In~this paper, we work with a simplified, abstract version of this very common paradigm.

Due to the way it works, a collaborative recommendation system has the risks of leaking its users' privacy. Clearly, there are trade-offs between recommendation quality and privacy: a system that gives completely random recommendations certainly leaks no one's privacy, but it is also useless; in contrast, a recommendation system that gives high quality recommendations has to make ``full use'' of its users' data, which is more prone to privacy leakage.

In this paper, we adopt $\epsilon$-differential privacy~\cite{dwork2006calibrating} as our formal definition of privacy, and we give the first quantitative analysis of these trade-offs for online collaborative recommendation systems. Prior to this paper, the topic of differentially private recommendation systems has primarily been examined under \emph{offline matrix} models~\cite{mcsherry2009differentially, chow2012practical, xin2014controlling, hardt2012beating, chaudhuri2013near, hardt2013beyond, kapralov2013differentially}. From the theoretical perspective, our recommendation model can be viewed as a variant of an \emph{online learning} problem. Currently, there are only a limited number of existing papers on differentially private online learning~\cite{dwork2010differential*, jain2011differentially, thakurta2013nearly}, and their privacy models do not fit the recommendation problem (see Section~\ref{section:related-work} for more details).

We first study the best achievable privacy for a fixed recommendation quality by showing a near-tight lower bound on the privacy parameter $\epsilon$ (smaller $\epsilon$ means better privacy). For example, if we were to guarantee a trivial recommendation quality only, then we can achieve ``perfect privacy'' (i.e., $\epsilon = 0$) by ignoring users' opinions on objects and recommending randomly. As we set better and better target recommendation quality, it might be expected that the best achievable $\epsilon$ smoothly gets larger and larger. However, we show that the transition is sharp: although $\epsilon = 0$ is achievable for the trivial recommendation quality, the lower bound of $\epsilon$ rises to a certain level as long as non-trivial recommendation quality is to be guaranteed, and it remains essentially the same (up to a logarithmic factor) as the target recommendation quality increases.

We then propose a novel $\epsilon$-differentially private algorithm. Our algorithm's $\epsilon$ is within a logarithmic factor to the aforementioned lower bound, and meanwhile its recommendation quality is also near-optimal up to a logarithmic factor, even when compared to algorithms providing no privacy guarantee.

Our near matching results surprisingly imply that there are actually little trade-offs between recommendation quality and privacy --- an inherent ``amount of privacy'' (up to a logarithmic factor) must be ``leaked'' for \emph{any} algorithm with non-trivial recommendation quality. Our results also identify the key parameters that fundamentally determine the best achievable recommendation quality and privacy. We provide more details about our results in Section~\ref{section:results}.

\section{Model and Problem Statement}
\subsection{Recommendation System Model}
We now describe the model in more detail, abstracting away some of the complications in the scenario above in order to focus on the fundamental trade-offs.

We consider an online collaborative recommendation system that contains \emph{voters}, \emph{clients} and \emph{objects}, and it repeatedly recommends objects to clients based on voters' opinions on objects. A voter/client either \emph{likes} or \emph{dislikes} an object. Voters submit their opinions on objects to the system in the form of \emph{votes}, where a vote by voter $i$ on object $j$ indicates that voter $i$ likes object $j$; clients receive recommendations from the system and provide \emph{feedback} to the system which tells whether they like the recommended objects or not. Since every client has his/her own personalized preferences, the system will serve each client separately.

We now describe how the model operates for a particular client $C$. The system runs for $T$ \emph{rounds}. In each round $t \in \{1, \dotsc, T\}$, a set of $m$ \emph{new} candidate objects arrives in the system, out of which the client $C$ likes at least one of them. We assume that $m$ is a constant, and totally the system has $mT$ objects over all the $T$ rounds. Let $\mathcal{U}$ denote the set of all the voters, and $\mathcal{B}_t$ denote the set of candidate objects in the $t$th round. After $\mathcal{B}_t$ arrives, each voter $i \in \mathcal{U}$ votes on one object in $\mathcal{B}_t$; the system then recommends one object $b_t \in \mathcal{B}_t$ to the client $C$ (based on the voters' votes and the previous execution history), and $C$ responses the system with his/her feedback which tells whether he/she likes $b_t$ or not. The system proceeds into the next round after that.

We measure the recommendation quality by \emph{loss}, which is defined as the number of objects that the algorithm recommends to the client $C$ but $C$ dislikes.

A client $C$ is fully characterized by specifying $C$'s preferences on every object. However, in a recommendation system, whether a client $C$ likes an object $j$ or not is unknown until the system has recommended $j$ to $C$ and gotten the feedback.

We denote the votes of all the voters in $\mathcal{U}$ by $\mathcal{V}\langle\mathcal{U}\rangle$, and we call $\mathcal{V}\langle\mathcal{U}\rangle$ the \emph{voting pattern of $\mathcal{U}$}, or simply a \emph{voting pattern} when $\mathcal{U}$ is clear from the context. Given a client $C$ and a voting pattern $\mathcal{V}\langle\mathcal{U}\rangle$, a (randomized) recommendation algorithm $\mathcal{A}$ maps the pair $(C, \mathcal{V}\langle\mathcal{U}\rangle)$ to a (random) sequence of objects in $\mathcal{B}_1 \times \dotsb \times \mathcal{B}_T$. We call a particular sequence in $\mathcal{B}_1 \times \dotsb \times \mathcal{B}_T$ a \emph{recommendation sequence}.

\subsection{Differential Privacy in Recommendation Systems}\label{section:model-differential-privacy}
Voters' votes are assumed to be securely stored by the system, which are not accessible from the public. Nevertheless, a curious client may still try to infer voters' votes by analyzing the recommendation results. In this paper, we adopt \emph{differential privacy}~\cite{dwork2006calibrating} as our definition of privacy. Roughly speaking, differential privacy protects privacy by ensuring that the outputs are ``similar'' for two voting patterns $\mathcal{V}\langle\mathcal{U}\rangle$ and $\mathcal{V}\langle\mathcal{U}'\rangle$ if they differ by one voter. Such a pair of voting patterns are called \emph{adjacent voting patterns}, and they are formally defined as:
\begin{definition}[Adjacent Voting Patterns]\label{definition:adjacent-voting-pattern}
Two voting patterns $\mathcal{V}\langle\mathcal{U}\rangle$ and $\mathcal{V}\langle\mathcal{U}'\rangle$ are adjacent voting patterns iff i) $|\mathcal{U} \,\triangle\, \mathcal{U}'| = 1$, and ii) for any voter $i \in \mathcal{U} \cap \mathcal{U}'$ and in any round $t \in \{1, \dotsc, T\}$, $i$ always votes on the same object in both $\mathcal{V}\langle\mathcal{U}\rangle$ and $\mathcal{V}\langle\mathcal{U}'\rangle$.
\end{definition}
Generalizing Definition~\ref{definition:adjacent-voting-pattern}, we say that two voting patterns $\mathcal{V}\langle\mathcal{U}\rangle$ and $\mathcal{V}\langle\mathcal{U}'\rangle$ are \emph{$k$-step adjacent}, if there exists a sequence of $k + 1$ voting patterns $\mathcal{V}\langle\mathcal{U}_0\rangle = \mathcal{V}\langle\mathcal{U}\rangle, \mathcal{V}\langle\mathcal{U}_1\rangle, \dotsc, \mathcal{V}\langle\mathcal{U}_{k - 1}\rangle, \mathcal{V}\langle\mathcal{U}_k\rangle = \mathcal{V}\langle\mathcal{U}'\rangle$ such that $\mathcal{V}\langle\mathcal{U}_\ell\rangle$ and $\mathcal{V}\langle\mathcal{U}_{\ell + 1}\rangle$ are adjacent for any $\ell = 0, \dotsc, k - 1$.

Having defined adjacent voting patterns, we can then apply the standard differential privacy in \cite{dwork2006calibrating} to our setting:
\begin{definition}[$\epsilon$-Differential Privacy]\label{definition:differential-privacy}
A recommendation algorithm $\mathcal{A}$ preserves $\epsilon$-differential privacy if $\Pr[\mathcal{A}(C, \mathcal{V}\langle\mathcal{U}\rangle) \in S] \le e^{\epsilon}\Pr[\mathcal{A}(C, \mathcal{V}\langle\mathcal{U}'\rangle) \in S]$ for any client $C$, any pair of adjacent voting patterns $\mathcal{V}\langle\mathcal{U}\rangle, \mathcal{V}\langle\mathcal{U}'\rangle$, and any subset $S \subseteq \mathcal{B}_1 \times \dotsb \times \mathcal{B}_T$, where the probabilities are over $\mathcal{A}$'s coin flips.
\end{definition}

\subsection{Attack Model, Power of the Adversary}
As indicated by Definition~\ref{definition:adjacent-voting-pattern} and Definition~\ref{definition:differential-privacy}, we protect voters' privacy against the client. We do not need to protect the client's privacy because voters receive nothing from the system.

Our research goal is to study the theoretical hardness of the aforementioned recommendation problem, therefore we assume that there is an adversary with unlimited computational power who controls how the voters vote and which objects the client likes. The adversary tries to compromise our algorithm's loss/privacy by feeding the algorithm with ``bad'' inputs. From the perspective of game theory, our recommendation model can be viewed as a repeated game between the algorithm, who chooses the objects to recommend, and the adversary, who chooses the client's preferences on objects and the voting pattern. For our lower bounds, we consider an \emph{oblivious adversary} that chooses the client's preferences on objects and the voting patterns in advance; for our upper bounds, we consider an \emph{adaptive adversary} whose choice in time $t$ can depend on the execution history prior to time $t$. By doing so, our results are only strengthened.

\subsection{Notations}
Next we introduce some notations that characterize the system. Some of them are also the key parameters that determine the best achievable loss/privacy.

\paragraph{The client's diversity of preferences.} A client $C$'s \emph{diversity of preferences} $D_C$ is defined to be the number of rounds in which $C$ likes more than one objects.

\paragraph{The client's peers.} Inherently, a collaborative recommendation system can achieve small loss only if some voters have similar preferences to the client. Let the \emph{distance} between a client $C$ and a voter $i$ be the total number of objects that are voted on by $i$ but are disliked by $C$. Given a radius parameter $R \in \{0, \dotsc, T\}$, we define a voter $i$ to be a client $C$'s \emph{peer} if their distance is within $R$. Given a client $C$, a voting pattern $\mathcal{V}\langle\mathcal{U}\rangle$ and a radius parameter $R$, we can count the number of $C$'s peers in $\mathcal{U}$, and we denote it by $P_{C, \mathcal{V}\langle\mathcal{U}\rangle, R}$.

\paragraph{Other notations.} We define $n$ to be an upper bound of $|\mathcal{U}|$ (i.e., the number of voters), $D$ to be an upper bound of $D_C$ (i.e., the client's diversity of preferences), and $P$ to be a lower bound of $P_{C, \mathcal{V}\langle\mathcal{U}\rangle, R}$ (i.e., the number of the client's peers). The reader may wonder why these parameters are defined as upper/lower bounds. The purpose is to give a succinct presentation. Take $n$ as an example: since differential privacy needs to consider two voting patterns with different numbers of voters, if we define $n$ as the number of voters, it would be unclear which voting pattern we are referring to. The reader can verify that by choosing the right directions for the parameters (e.g., we define $n$ to be an upper bound, and $P$ to be a lower bound), our definition does not weaken our results.

In general, we consider a large system that consists of many voters, many objects (over all the rounds), and runs for a long time. That is, $n$ and $T$ can be very large. In this paper, we also impose a (quite loose) requirement that $n = O(\mathrm{poly}(T))$, i.e., $n$ is not super large compared to $T$.

In reality, a client shall find more peers as more voters join the system. Otherwise, the client has an ``esoteric tastes'' and it is inherently hard for any collaborative system to help him/her. Thus, in this paper, we consider the case that $P \ge 6m$, i.e., the client has at least a constant number of peers.

Table~\ref{table:notations} summarizes the key notations in this paper.
\begin{table}[t]
	\centering
	\renewcommand\arraystretch{1.15}
	\caption{List of key notations.}\label{table:notations}.
	\begin{tabularx}{\textwidth}{cX}
		\hline
		Notation & Description\\
		\hline
		$T$ & the number of rounds\\
		$\mathcal{B}_t$ & the set of candidate objects in round $t$\\
		$m$ & the constant $m$ is the number of candidate objects in each round, i.e., $m = |\mathcal{B}_t|$ and totally the system has $mT$ objects\\
		$C$ & a client\\
		$\mathcal{U}$ & a set of voters\\
		$\mathcal{V}\langle\mathcal{U}\rangle$ & the voting pattern of $\mathcal{U}$\\
		$n$ & $n$ is an upper bound of the number of voters\\
		$R$ & $R \in \{0, \dotsc, T\}$ is a parameter that defines the radius of the client's peer group, i.e., a voter is a client's peer if their distance is within $R$\\
		$P_{C, \mathcal{V}\langle\mathcal{U}\rangle, R}$ & the number of the client $C$'s peers in the voting pattern $\mathcal{V}\langle\mathcal{U}\rangle$, given the radius parameter is $R$\\
		$P$ & $P$ is a lower bound of the number of the client's peers\\
		$D_C$ & the client $C$'s diversity of preferences, i.e., the client $C$ likes more than one objects in $D_C$ rounds\\
		$D$ & $D \in \{0, \dotsc, T\}$ is an upper bound of the clients' diversities of preferences\\
		\hline
	\end{tabularx}
\end{table}

\subsection{Loss/Privacy Goal}
In this paper, we consider the \emph{worst-case expected loss} of the algorithm, that is, we aim to bound the algorithm's expected loss for any client $C$ and any voting pattern $\mathcal{V}\langle\mathcal{U}\rangle$ such that $|\mathcal{U}| \le n$, $D_C \le D$ and $P_{C, \mathcal{V}\langle\mathcal{U}\rangle, R} \ge P$. Notice that $O(T)$ loss can be trivially achieved by ignoring voters' votes and recommending objects randomly. However, such an algorithm is useless, and hence we consider the more interesting case when sub-linear loss (in terms of $T$) is to be guaranteed. It can be shown that the worst-case expected loss is $\Omega(R)$ for any algorithm (Theorem~\ref{theorem:lower-bound-loss}). Therefore, sub-linear loss is achievable only when $R$ is sub-linear. In this paper, we focus on the case when $R = O(T^\eta)$ for some constant $\eta < 1$.\footnote{Technically, the assumptions that $n = O(\mathrm{polylog}(T))$, $P \ge 6m$ and $R = O(T^{\eta})$ are only for showing the near-optimality of our lower bound. Our lower bound itself remains to hold without these assumptions.}

For the privacy, we aim to preserve $\epsilon$-differential privacy. We study the best achievable $\epsilon$-differential privacy for any given target loss.

\section{Related Work}\label{section:related-work}
\paragraph{Recommendation Systems and Online Learning.} The research on recommendation systems has a long history~\cite{adomavicius2005toward, su2009survey}. A classic recommendation model is the \emph{offline matrix-based} model, in which the user-object relation is represented by a matrix. In this paper, we consider a very different \emph{online} recommendation model. From the theoretical perspective, our model can be viewed as a variant of the ``Prediction with Expert Advice'' (PEA) problem in online learning~\cite{cesa2006prediction}. Such an approach that models the recommendation systems as online learning problems has been adopted by other researchers as well, e.g., in \cite{awerbuch2007online, lee2001collaborative, nakamura1998collaborative, resnick2007influence, yu2009dsybil}.

\paragraph{Differential Privacy.} There has been abundant research on differential privacy~\cite{dwork2008differential, dwork2009differential, dwork2011firm, dwork2010differential}. Much of the early research focused on answering a single query on a dataset. Progress on answering multiple queries with non-trivial errors was made later on, for both offline settings~\cite{blum2008learning, dwork2009complexity, hardt2010multiplicative, roth2010interactive} (where the input is available\\ in advance), and online settings~\cite{dwork2010differential*, chan2011private, bolot2013private, chan2012differentially, kellaris2014differentially, jain2011differentially, thakurta2013nearly} (where the input continuously comes). We will introduce the work on \emph{differentially private online learning} in \cite{dwork2010differential*, jain2011differentially, thakurta2013nearly} with more details soon after, as they are most related to this paper.

\paragraph{Protecting Privacy in Recommendation Systems.} People are well aware of the privacy risks in collaborative recommendation systems. Two recent attacks were demonstrated in \cite{narayanan2008robust} (which de-anonymized a dataset published by Netflix) and \cite{calandrino2011you} (which inferred users' historical data by combining passive observation of a recommendation system with auxiliary information). 

Many of the existing privacy-preserving recommendation systems adopted~privacy notions other than differential privacy (e.g., \cite{berkovsky2007enhancing, canny2002collaborative, canny2002collaborative*, polat2003privacy, polat2005svd, shokri2009preserving}). For studies on \emph{differentially private} recommendation systems, prior to our paper, most of them were for \emph{offline matrix-based} models. Some experimentally studied the empirical trade-offs between loss and privacy (e.g.,~\cite{mcsherry2009differentially, chow2012practical, xin2014controlling}); the others focused on techniques that manipulate matrices in privacy-preserving ways (e.g.,~\cite{hardt2012beating, chaudhuri2013near, hardt2013beyond, kapralov2013differentially}).

\paragraph{Differentially Private Online Learning.} This paper is most related to \emph{differentially private online learning}, as our recommendation model is a variant of the PEA problem in online learning. Currently, only a limited number of studies have been done on this area~\cite{dwork2010differential*, jain2011differentially, thakurta2013nearly}. In \cite{dwork2010differential*}, Dwork et~al.\ proposed a differentially private algorithm for the PEA problem by plugging privacy-preserving online counters into ``Follow the Perturbed Leader'' algorithm~\cite{kalai2005efficient}. In \cite{jain2011differentially} and \cite{thakurta2013nearly}, differential privacy was considered under a more general online learning model called ``Online Convex Programming.''

Despite the similarity between our recommendation model and the learning models in \cite{dwork2010differential*, jain2011differentially, thakurta2013nearly}, there is an important difference. Since their research is not for recommendation systems, they considered somewhat different notions of privacy from ours. Roughly speaking, if interpreting their models as recommendation problems, then their privacy goal is to ensure that each \emph{voter} is ``followed'' with similar probabilities when running the algorithm with two adjacent voting patterns. Such a guarantee is not sufficient for a recommendation system. For example, an algorithm that always ``follows'' voter Alice is perfectly private in terms of their privacy definition, but completely discloses Alice's private votes.\footnote{On the other hand, our privacy definition does not imply their definitions either. Therefore these two types of privacy models are incomparable.} Besides the difference in privacy definition, we provide both lower bound and upper bound results, while \cite{dwork2010differential*, jain2011differentially, thakurta2013nearly} only have upper bound results.

\section{Our Results and Contributions}\label{section:results}
\paragraph{Main results.} Our first result is a lower bound on the best achievable privacy:
\begin{theorem}\label{theorem:lower-bound-privacy}
For any recommendation algorithm that guarantees $L = O(T^\beta)$ worst-case expected loss ($\beta < 1$ is a constant) and preserves $\epsilon$-differential privacy, $\epsilon = \Omega(\frac{1}{P}(D + R + \log\frac{T}{L})) = \Omega(\frac{1}{P}(D + R + \log{T}))$, even for an oblivious adversary.
\end{theorem}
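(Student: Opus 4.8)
The plan is to prove the three lower bounds $\epsilon = \Omega(D/P)$, $\epsilon = \Omega(R/P)$, and $\epsilon = \Omega(\tfrac{1}{P}\log\tfrac{T}{L})$ separately; since $D + R + \log\tfrac{T}{L} = \Theta(\max\{D, R, \log\tfrac{T}{L}\})$, taking the largest of the three gives the theorem, and the two forms in the statement coincide because $L = O(T^\beta)$ with $\beta < 1$ forces $\log\tfrac{T}{L} = \Theta(\log T)$. The only privacy ingredient is the group-privacy corollary of Definition~\ref{definition:differential-privacy}: if $\mathcal{V}\langle\mathcal{U}\rangle$ and $\mathcal{V}\langle\mathcal{U}'\rangle$ are $k$-step adjacent, then $\Pr[\mathcal{A}(C,\mathcal{V}\langle\mathcal{U}\rangle)\in S] \le e^{k\epsilon}\,\Pr[\mathcal{A}(C,\mathcal{V}\langle\mathcal{U}'\rangle)\in S]$ for every event $S$, hence $\epsilon \ge \tfrac1k\ln\frac{\Pr[\mathcal{A}(C,\mathcal{V}\langle\mathcal{U}\rangle)\in S]}{\Pr[\mathcal{A}(C,\mathcal{V}\langle\mathcal{U}'\rangle)\in S]}$. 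So for each term it suffices to exhibit one client $C$, a pair of voting patterns differing by $k = O(P)$ adjacency steps, and an event $S$ whose two probabilities differ by $e^{\Omega(D)}$, $e^{\Omega(R)}$, resp.\ $(T/L)^{\Omega(1)}$. The factor $\tfrac1P$ always arises the same way: to turn the ``honest'' voting pattern into the ``perturbed'' one we delete the $P$ signal-carrying peers and re-insert $P$ modified voters, which costs $2P$ steps.

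All three constructions follow a common template. Fix a base client $C$ and a family $\{\mathcal{V}_{\vec s}\}_{\vec s}$ of voting patterns, each consisting of exactly $P$ ``signal'' voters (plus, where needed, padding voters that are never peers), such that (i) every $\mathcal{V}_{\vec s}$ is a valid input for $C$ (it has at least $P$ peers and $C$'s diversity is at most $D$), and (ii) $\mathcal{V}_{\vec s}$ is simultaneously the ``honest advice'' for a companion client $C_{\vec s}$ that is indistinguishable from $C$ along any execution in which the algorithm never elicits ``dislike'' on a round where $C$ and $C_{\vec s}$ disagree. Applying the worst-case $L$-loss guarantee to the valid input $(C_{\vec s},\mathcal{V}_{\vec s})$ together with Markov's inequality, with probability $\ge \tfrac12$ the recommendation sequence lies within Hamming distance $2L$ of $C_{\vec s}$'s preference pattern, i.e.\ the algorithm must track $\vec s$ on the designated ``signal rounds.'' A coupling argument — run $\mathcal{A}(C_{\vec s},\mathcal{V}_{\vec s})$ and $\mathcal{A}(C,\mathcal{V}_{\vec s})$ on the same coins; they are identical until the first round whose recommendation draws different feedback from $C$ and from $C_{\vec s}$ — transfers this tracking to $\mathcal{A}(C,\mathcal{V}_{\vec s})$, so that the output distribution of $\mathcal{A}(C,\mathcal{V}_{\vec s})$ concentrates on a region $S_{\vec s}$ that depends on $\vec s$. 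Taking $S = S_{\vec s}$ for one $\vec s$ and a second $\vec s'$ whose companion pattern is far from $S_{\vec s}$ yields the probability gap, and the gap's exponent is governed by the amount of signal: the $D$ rounds in which $C$ may like more than one object let the signal voters point anywhere there for free ($\Omega(D)$ bits); the $R$ rounds in which $C_{\vec s}$ is allowed to disagree with $C$ keep the signal voters within distance $R$ and hence still peers ($\Omega(R)$ bits); and the $\log\tfrac{T}{L}$ term comes from averaging $\mathbb{E}[\text{loss}]\le L$ over the $T$ rounds, which pins the algorithm down to be correct with probability $\ge 1-O(L/T)$ on some single round, so on that round its output carries $\log\tfrac{T}{L}$ bits of separation.

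The main obstacle is the slack created by a positive loss budget $L$: because the algorithm is not forced to follow the signal voters exactly, the coupling between the $C_{\vec s}$-execution and the $C$-execution can break the instant the algorithm deviates on a signal round, and a lazy algorithm could, a priori, spend its entire loss budget on signal rounds and thereby wipe out the separation. To control this I would interleave the signal rounds with many ``filler'' rounds on which $C$ (and every $C_{\vec s}$) likes a unique object and the signal voters vote honestly, so that the algorithm cannot recognize a signal round before committing and hence cannot concentrate its losses there; combined with a Yao-type averaging over a uniformly random $\vec s$ (the fixed algorithm has no information about which way the signal points), this forces the deviations to be spread out and keeps a constant fraction of the signal rounds inside the coupled prefix with constant probability. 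What remains is routine: Chernoff/Markov bookkeeping to convert ``tracks $\Omega(D)$ (resp.\ $\Omega(R)$) signal rounds'' into the claimed $e^{\Omega(D)}$ (resp.\ $e^{\Omega(R)}$) probability ratio, and checking that each voting pattern used genuinely has $\ge P$ peers and that the honest and perturbed patterns differ by only $O(P)$ adjacency steps.
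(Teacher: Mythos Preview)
Your decomposition into three separate bounds and the use of group privacy over $2P$ adjacency steps is sound, and for the $\log(T/L)$ term your one-round averaging argument is essentially the paper's Lemma~\ref{lemma:lower-bound-privacy-simple-setting} in contrapositive form. The real divergence from the paper, and the place where your plan has a genuine gap, is in the $D$ and $R$ terms.

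The coupling step does not do what you need. You want to conclude that $\mathcal{A}(C,\mathcal{V}_{\vec s})$ tracks $\vec s$ on a constant fraction of signal rounds by coupling it to $\mathcal{A}(C_{\vec s},\mathcal{V}_{\vec s})$, but the coupling breaks at the \emph{first} signal round on which $\mathcal{A}(C_{\vec s},\mathcal{V}_{\vec s})$ errs, and nothing in the loss bound prevents that first error from landing on the very first signal round. Your proposed fix---interleaving with filler rounds so the algorithm cannot recognise signal rounds in advance---does not repair this: even if every round looks identical \emph{a priori}, the $\le 2L$ mistakes need not be spread uniformly, and ``a constant fraction of signal rounds inside the coupled prefix with constant probability'' is exactly the statement that remains unproved. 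The Yao averaging you invoke controls the \emph{expected number} of tracked signal rounds, but the event $S_{\vec s}$ you need is a tail event (agreement on $\Omega(D)$ rounds), and a single early coupling break can destroy it regardless of how few mistakes follow.

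The paper sidesteps this entirely by working with relative entropy instead of a fixed event. It builds a random configuration (client plus two $2$-step-adjacent voting patterns) and shows that in every round the algorithm is forced into one of two cases: either it does not follow the voter reliably (Case~1), incurring $\Omega(1)$ expected loss when the round is a ``plain'' round, or it does follow reliably (Case~2), incurring $\Omega(1)$ expected KL divergence $\mathbb{E}[\mathcal{E}_t\mid b_{<t}]$ when the round is a ``signal'' round (Setting~3 or~4). Because KL divergence is additive over rounds and always nonnegative, these per-round contributions accumulate without any coupling: the $o(T)$ loss bound caps the number of Case~1 rounds, forcing $\Omega(T)$ rounds into Case~2, of which a random $\Theta((D+R)/T)$ fraction are signal rounds, yielding $\mathbb{E}[\mathcal{E}] = \Omega(D+R)$. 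The existence of a single recommendation sequence $b$ with $\ln\frac{\Pr[b]}{\Pr'[b]} = \Omega(D+R)$ then follows immediately, giving $\epsilon = \Omega((D+R)/P)$ after scaling to $2P$-step adjacency. This per-round accumulation of KL is precisely the device that replaces your fragile coupling.
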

Our second result is a near-optimal algorithm (the p-REC algorithm in Section~\ref{section:p-REC}):
\begin{theorem}\label{theorem:p-REC-loss-and-privacy}
The p-REC algorithm guarantees $O((R + 1)\log\frac{n}{P})$ worst-case expected loss, and it preserves $O(\frac{1}{P}(D + R + 1)\log\frac{T}{R + 1}) $-differential privacy, even for an adaptive adversary.
\end{theorem}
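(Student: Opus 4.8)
The p-REC algorithm is, I expect, a privatized multiplicative-weights / randomized-weighted-majority scheme: it treats each voter $i\in\mathcal{U}$ as an expert whose round-$t$ prediction is the object $i$ voted for in $\mathcal{B}_t$, it keeps a weight on each voter that is driven down when that voter is observed to back an object the client dislikes, and it recommends an object drawn with probability (roughly) proportional to the weight backing it, suitably smoothed so that no single voter is ever decisive; I would also expect the execution to be organized into phases whose lengths grow geometrically. I would prove the loss bound and the privacy bound essentially separately. The restriction to an oblivious adversary is used only in Theorem~\ref{theorem:lower-bound-privacy}, so for Theorem~\ref{theorem:p-REC-loss-and-privacy} I must additionally check that both arguments survive an adaptive adversary; for the loss this is automatic (multiplicative-weights regret bounds hold sequence by sequence), and for the privacy it follows because $\epsilon$-differential privacy composes even adaptively.

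\textbf{Loss.} The plan is a potential argument on the total voter weight $W_t=\sum_i w_i^{(t)}$. With $w_i^{(1)}=1$ we have $W_1\le n$; since a peer backs a disliked object in at most $R$ rounds, each of the $\ge P$ peers keeps weight at least $\beta^{R}$ under a constant per-mistake penalty $\beta<1$, so $W_{\mathrm{final}}\ge P\beta^{R}$. Conversely, in a round where the algorithm recommends a disliked object — an event of probability $p_t$ over its coins, with $\sum_t p_t=\mathbb{E}[\mathrm{loss}]$ — a constant fraction of the weight backing disliked objects is penalized, so $\mathbb{E}[\ln(W_1/W_{\mathrm{final}})]\gtrsim \sum_t p_t$ up to the loss incurred by the limited feedback. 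Combining the two bounds gives $O(R+\log(n/P))$ in a full-feedback idealization; the extra multiplicative $\log(n/P)$ in the theorem is the price of the restricted feedback (only the recommended object is rated) and of the smoothing, which I would pay for with a phase structure that successively shrinks the candidate pool: each phase either certifies that the pool is peer-rich, in which case weighted majority over it loses only $O(R+1)$, or harvests enough feedback to discard a constant fraction of the non-peers, so that $O(\log(n/P))$ phases telescope to $O((R+1)\log(n/P))$ total loss.

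\textbf{Privacy.} Fix a client $C$ and adjacent voting patterns $\mathcal{V}\langle\mathcal{U}\rangle,\mathcal{V}\langle\mathcal{U}'\rangle$ differing in a single voter $i$. I would condition on the execution history and analyze phase by phase, bounding the privacy loss contributed by one phase and then composing linearly over the $O(\log\frac{T}{R+1})$ phases (it is precisely because the phase lengths grow geometrically that only $\log\frac{T}{R+1}$ of them are needed to cover $T$ rounds). Within one phase, the crucial point is that voter $i$'s influence on that phase's recommendations is diluted by the $\ge P$ peers: if the recommendation rule depends on $i$ only through an aggregate in which $i$ contributes a $\tfrac{1}{P}$-fraction (e.g.\ $i$ is one random draw among $\Theta(P)$, or $i$'s weight is capped at a $\tfrac{1}{P}$-fraction of the total), then flipping $i$'s votes changes each per-round recommendation distribution by a multiplicative $e^{O(1/P)}$ factor, and $i$ can affect only $O(D+R+1)$ of the phase's rounds beyond what the smoothing floor already absorbs ($R$ from the rounds a peer may err, $D$ from the diversity rounds with a non-unique liked object, and $O(1)$ slack). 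This gives an $O(\tfrac{1}{P}(D+R+1))$ bound per phase, and summing over phases yields $\epsilon=O(\tfrac{1}{P}(D+R+1)\log\frac{T}{R+1})$.

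\textbf{Main obstacle.} The hard part is obtaining the $\tfrac{1}{P}$ factor in the privacy bound: a naive composition over the rounds of a phase, or over all $T$ rounds, only yields something like $\epsilon=O(T)$ with no dependence on $P$, so the real work is in designing the recommendation rule so that a single voter is provably only a $\tfrac{1}{P}$-fraction of the signal in every round while the aggregate still carries enough information for the $O((R+1)\log(n/P))$ loss bound. These two requirements pull in opposite directions, and reconciling them — presumably via the phase structure together with a careful coupling of the two executions that is robust to an adaptive adversary — is where I expect the difficulty to lie.
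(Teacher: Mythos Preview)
Your proposal has a genuine structural gap: there is no phase structure in p-REC, and the $\log\frac{T}{R+1}$ factor does not come from composing over phases. The algorithm runs as a single continuous weighted-majority scheme; the logarithm arises from the temperature $\lambda = 2m\ln\frac{T}{R+1}$ in a \emph{truncated-and-shifted} exponential rule $\phi(x)=\max(0,e^{\lambda x}-e^{\lambda\rho})$, $\rho=\tfrac{1}{2m}$, mixed with a small uniform component $\gamma$. The weights are binary, governed by a two-type credit system (each voter starts with $2D$ units of ``$D$-credit'' and $2R{+}1$ units of ``$R$-credit''); a voter is kicked out only when the credits are exhausted, and peers provably never are. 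So your potential argument with multiplicative penalties $\beta^R$ and a phase-by-phase telescoping is not the mechanism in play.

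The privacy argument you sketch also misses the key idea. The paper does not bound the number of rounds a single voter can ``affect'' and then compose over phases; instead it shows two lemmas: (i) in any round $t$ the per-round log-likelihood ratio satisfies $|\mathcal{E}_t(b)|\le 3\lambda/W_t(b)$ where $W_t(b)$ is the number of surviving voters, and (ii) \emph{whenever} $|\mathcal{E}_t(b)|\neq 0$, a constant fraction of the total surviving credit $C_t(b)$ is deducted. Point (ii) is exactly where the truncation at $\rho$ earns its keep: if the extra voter changes the distribution at all, then the recommended object must be backed by more than $\tfrac{1}{3m}$ of the survivors, so many voters lose credit. Converting (i) via $C_t\le(2D{+}2R{+}1)W_t$ and summing the resulting geometric series over the privacy-leaking rounds gives $|\mathcal{E}(b)|\le O\bigl((D{+}R{+}1)\lambda/C_{\min}\bigr)$, and the doubled initial credits guarantee $C_{\min}\ge P(D{+}R{+}1)$, which is what produces the $\tfrac{1}{P}$ factor. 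Your ``$i$ is one draw among $\Theta(P)$'' intuition is in the right direction for (i), but without the truncation-driven geometric collapse of (ii) you would be stuck with naive composition over all $T$ rounds, exactly the obstacle you identify. The loss bound similarly follows from a single potential on the total $R$-credit (which drops by a $(1-\tfrac{\rho}{2R+1})$ factor per loss round), not from phases.
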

It can be shown that the worst-case expected loss is $\Omega(R + \log\frac{n}{P})$ even for algorithms with no privacy guarantee (Theorem~\ref{theorem:lower-bound-loss}). Thus, p-REC's worst-case expected loss is within a logarithmic factor to the optimal. Recall that $R = O(T^\eta)$ for a constant $\eta < 1$ and $\log{n} = O(\log{T})$, hence p-REC's worst-case expected loss is within $O(T^\beta)$ for some constant $\beta < 1$ too. Then, by Theorem~\ref{theorem:lower-bound-privacy}, p-REC's privacy is also within a logarithmic factor to the optimal.

\paragraph{Discussion of our results.} Theorem~\ref{theorem:lower-bound-privacy} shows that a minimal amount of ``privacy leakage'' is inevitable, even for the fairly weak $O(T^\beta)$ target loss.

Moreover, unlike many other systems in which the utility downgrades linear to the privacy parameter $\epsilon$, the loss in an online recommendation system is~much more sensitive to $\epsilon$: according to Theorem~\ref{theorem:p-REC-loss-and-privacy}, we can achieve near-optimal loss for an $\epsilon = O(\frac{1}{P}(D + R + 1)\log\frac{T}{R + 1})$; meanwhile, only trivial loss is achievable for just a slightly smaller $\epsilon = o(\frac{1}{P}(D + R + \log{T}))$. In other words, the trade-offs between loss and privacy are rather little --- the best achievable $\epsilon$ is essentially the same (up to a logarithmic factor) for \emph{all} the algorithms with $O(T^\beta)$ worst-case expected loss.\footnote{This statement actually holds for all the algorithms with $o(T)$ loss. In Theorem~\ref{theorem:lower-bound-privacy}, we choose $O(T^\beta)$ target loss to get a clean expression for the lower bound on $\epsilon$, and a similar (but messier) lower bound on $\epsilon$ holds for $o(T)$ target loss too.} For this reason, instead of designing an algorithm that has a tunable privacy parameter $\epsilon$, we directly propose the p-REC algorithm that simultaneously guarantees both near-optimal loss and privacy.

From our results, we identify the key parameters $D$, $P$ and $R$ that determine the best achievable loss and/or privacy.

The parameter $R$ characterizes the correlation between the client and the voters, and it is not surprised that the best achievable loss is inherently limited by $R$, because a basic assumption for any collaborative system is the existence of correlation in the data (e.g., the low-rank assumption in matrix-based recommendation systems), and the system works by exploring/exploiting the correlation.

We notice that a larger $P$ gives better privacy. This is consistent with our intuition, as an individual's privacy is obtained by hiding oneself in a population.

We also notice that the best achievable privacy linearly depends on the client's diversity of preferences $D$ and the radius parameter $R$. The parameter $D$ looks to be unnatural at the first sight, and no prior research on recommendation systems has studied it. The reason might be that most of the prior research focused on the loss, and $D$ has no impact on the loss (the loss should only be smaller if a client likes more objects). Nevertheless, in this paper, we discover that $D$ is one of the fundamental parameters that determine the best achievable privacy. We provide an intuitive explanation of $\epsilon$'s linear dependence on $D$ and $R$ with an illustrative example in Section~\ref{section:lower-bound-general-setting}.



\section{Preliminaries}
Let $\mathcal{P}$ and $\mathcal{Q}$ be two distributions over sample space $\Omega$. The \emph{relative entropy} between $\mathcal{P}$ and $\mathcal{Q}$ is defined as $\sum_{\omega \in \Omega}\mathcal{P}(\omega)\ln\frac{\mathcal{P}(\omega)}{\mathcal{Q}(\omega)}$, where $\mathcal{P}(\omega)$ and $\mathcal{Q}(\omega)$ is the probability of $\omega$ in $\mathcal{P}$ and $\mathcal{Q}$, respectively. We adopt the conventions that $0\log{\frac{0}{0}} = 0$, $0\log{\frac{0}{x}} = 0$ for any $x > 0$ and $x\log{\frac{x}{0}} = \infty$ for any $x > 0$. It is well known that relative entropy is always non-negative for any distributions $\mathcal{P}$ and $\mathcal{Q}$~\cite{kullback1968information}.

In this paper, we often simultaneously discuss two executions $\mathcal{A}(C, \mathcal{V}\langle\mathcal{U}\rangle)$ and $\mathcal{A}(C, \mathcal{V}\langle\mathcal{U}'\rangle)$ for some algorithm $\mathcal{A}$, some client $C$ and two voting patterns $\mathcal{V}\langle\mathcal{U}\rangle$ and $\mathcal{V}\langle\mathcal{U}'\rangle$. As a notation convention, we will use $\Pr[\cdot]$ and $\Pr'[\cdot]$ to denote the probability of some event in the execution of $\mathcal{A}(C, \mathcal{V}\langle\mathcal{U}\rangle)$ and $\mathcal{A}(C, \mathcal{V}\langle\mathcal{U}'\rangle)$, respectively. For any recommendation sequence $b = (b_1, \dotsc, b_T) \in \mathcal{B}_1 \times \dotsb \times \mathcal{B}_T$ and any round $t$, we define the random variables $\mathcal{E}_t(b) = \ln{\frac{\Pr[b_t | b_1, \dotsc, b_{t - 1}]}{\Pr'[b_t | b_1, \dotsc, b_{t - 1}]}}$ and $\mathcal{E}(b) = \ln{\frac{\Pr[b]}{\Pr'[b]}}$. It then follows that $\mathcal{E}(b) = \sum_{t = 1}^T{\mathcal{E}_t(b)}$. We also define random variable $\mathcal{L}_t$ to be the loss of execution $\mathcal{A}(C, \mathcal{V}\langle\mathcal{U}\rangle)$ in the $t$th round.

Finally, we list some useful properties of the following ``truncated-and-shifted exponential'' function
\begin{equation*}
\phi(x) = \begin{cases}
0 & \text{if $x \le \rho$},\\
e^{\lambda x} - e^{\lambda \rho} & \text{otherwise},
\end{cases}
\end{equation*}
where $-\infty < \rho < \infty$ and $\lambda > 0$. These properties will be useful when analyzing our algorithms.

\begin{lemma}\label{lemma:phi(x)-property-1}
$\phi(x) \le e^{\lambda |x - x'|}\phi(x') + (e^{\lambda |x - x'|} - 1)e^{\lambda \rho}$ for any reals $x, x'$.
\end{lemma}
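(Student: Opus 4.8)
The plan is to prove the inequality $\phi(x) \le e^{\lambda|x-x'|}\phi(x') + (e^{\lambda|x-x'|}-1)e^{\lambda\rho}$ by a straightforward case analysis on the positions of $x$ and $x'$ relative to the threshold $\rho$. First I would observe that the right-hand side is always nonnegative (both summands are, since $\phi \ge 0$ and $e^{\lambda|x-x'|} \ge 1$), so whenever $x \le \rho$ the left-hand side is $0$ and the inequality holds trivially. This disposes of the two cases $x \le \rho$ immediately, leaving only the case $x > \rho$.

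For the remaining case $x > \rho$, I would split into the subcase $x' > \rho$ and the subcase $x' \le \rho$. When $x' > \rho$, both $\phi(x) = e^{\lambda x} - e^{\lambda\rho}$ and $\phi(x') = e^{\lambda x'} - e^{\lambda\rho}$ are given by the exponential branch. Then the claim reduces, after substituting and cancelling, to showing $e^{\lambda x} \le e^{\lambda|x-x'|} e^{\lambda x'}$, i.e. $\lambda x \le \lambda|x-x'| + \lambda x'$, which is just $x - x' \le |x-x'|$ — true by definition of absolute value. When $x' \le \rho$, we have $\phi(x') = 0$, so the right-hand side is $(e^{\lambda|x-x'|}-1)e^{\lambda\rho}$ and I need $e^{\lambda x} - e^{\lambda\rho} \le (e^{\lambda|x-x'|}-1)e^{\lambda\rho} = e^{\lambda|x-x'|}e^{\lambda\rho} - e^{\lambda\rho}$, which simplifies to $e^{\lambda x} \le e^{\lambda|x-x'|}e^{\lambda\rho}$, i.e. $x \le |x-x'| + \rho$. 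Since $x' \le \rho < x$, we have $x - \rho \le x - x' = |x-x'|$, giving exactly what is needed.

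There is no real obstacle here — the lemma is an elementary calculus fact and the only thing to be careful about is organizing the cases so that the degenerate branches ($\phi = 0$) are handled cleanly and the sign of $x - x'$ is tracked correctly when replacing it by $|x-x'|$. If one wants a slicker presentation, I could alternatively note that $\phi$ is $0$ on $(-\infty,\rho]$ and equals $e^{\lambda x} - e^{\lambda\rho}$ with derivative $\lambda e^{\lambda x}$ on $(\rho,\infty)$, so $\phi$ is convex and $\lambda$-Lipschitz-in-the-exponent; but the four-way case check is the most transparent route and is what I would write up.
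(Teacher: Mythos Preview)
Your proof is correct and follows essentially the same approach as the paper: a direct case analysis based on the piecewise definition of $\phi$. The only cosmetic difference is the initial split --- the paper first disposes of $x \le x'$ using monotonicity of $\phi$, whereas you first dispose of $x \le \rho$ using $\phi(x)=0$ --- but the remaining subcases and the algebra are identical.
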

\begin{proof}
If $x \le x'$, since $\phi(x)$ is non-decreasing, $\phi(x) \le \phi(x')$ and this lemma holds trivially. When $x > x'$, we have two cases:
\begin{itemize}
\item Case~1: $x' \le \rho$. In this case, $\phi(x') = 0$ and
\begin{align*}
\phi(x) & = \phi(x' + (x - x'))\\
& \le \phi(\rho + (x - x'))\\
& = (e^{\lambda (x - x')} - 1)e^{\lambda \rho}\\
& = e^{\lambda (x - x')}\phi(x') + (e^{\lambda (x - x')} - 1)e^{\lambda \rho};
\end{align*}
\item Case~2: $x' > \rho$. In this case, we have
\begin{equation*}
\frac{\phi(x) + e^{\lambda \rho}}{\phi(x') + e^{\lambda \rho}} = \frac{e^{\lambda x}}{e^{\lambda x'}} = e^{\lambda(x - x')}.
\end{equation*}
We get the desired result by rearranging terms.
\end{itemize}
This finishes the proof of Lemma~\ref{lemma:phi(x)-property-1}.
\end{proof}

\begin{lemma}\label{lemma:phi(x)-property-2}
$\frac{\phi(x)}{\phi(x')} \le e^{2\lambda (x - x')}$ for any $x \ge x' \ge \rho + \frac{\ln{2}}{\lambda}$.
\end{lemma}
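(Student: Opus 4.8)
The plan is to reduce the statement to a clean exponential inequality by exploiting the region where both $x$ and $x'$ live. Since $x \ge x' \ge \rho + \frac{\ln 2}{\lambda}$, both arguments exceed $\rho$, so $\phi$ is given by the ``otherwise'' branch at both points: $\phi(x) = e^{\lambda x} - e^{\lambda\rho}$ and $\phi(x') = e^{\lambda x'} - e^{\lambda\rho}$, both strictly positive. The key structural observation is that the condition $x' \ge \rho + \frac{\ln 2}{\lambda}$ is exactly $e^{\lambda x'} \ge 2 e^{\lambda\rho}$, i.e. $e^{\lambda\rho} \le \tfrac12 e^{\lambda x'}$, which means the subtracted term $e^{\lambda\rho}$ is at most half of $e^{\lambda x'}$; this is what will let me control the ratio.

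First I would write
\begin{equation*}
\frac{\phi(x)}{\phi(x')} = \frac{e^{\lambda x} - e^{\lambda\rho}}{e^{\lambda x'} - e^{\lambda\rho}}.
\end{equation*}
Then I would aim to show this is at most $e^{2\lambda(x-x')} = \bigl(e^{\lambda(x-x')}\bigr)^2$. Setting $u = e^{\lambda x'} > 0$ and $v = e^{\lambda x} \ge u$, and $c = e^{\lambda\rho}$ with $c \le u/2$, the goal becomes $\frac{v-c}{u-c} \le \bigl(\frac{v}{u}\bigr)^2$, i.e. $u^2(v-c) \le v^2(u-c)$, i.e. $u^2 v - u^2 c \le v^2 u - v^2 c$, which rearranges to $uv(v-u) \ge c(v^2 - u^2) = c(v-u)(v+u)$. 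Since $v \ge u$, if $v = u$ both sides are $0$ and we are done; otherwise divide by $v - u > 0$ to reduce to $uv \ge c(u+v)$. Now using $c \le u/2$ it suffices to check $uv \ge \tfrac12 u(u+v) = \tfrac12 u^2 + \tfrac12 uv$, i.e. $\tfrac12 uv \ge \tfrac12 u^2$, i.e. $v \ge u$, which holds.

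I expect the only mild subtlety — not really an obstacle — to be bookkeeping the degenerate edge cases ($x = x'$, and confirming strict positivity of $\phi(x')$ so that the division is legitimate), both of which are handled above by the $x' \ge \rho + \frac{\ln 2}{\lambda}$ hypothesis. The core of the argument is the single clean reduction $\frac{\phi(x)}{\phi(x')} \le (e^{\lambda(x-x')})^2 \iff uv \ge c(u+v)$, closed off by $c \le u/2$ and $v \ge u$. I would then conclude that $\frac{\phi(x)}{\phi(x')} \le e^{2\lambda(x-x')}$, completing the proof of Lemma~\ref{lemma:phi(x)-property-2}.
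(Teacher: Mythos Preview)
Your proof is correct and follows essentially the same route as the paper's: both reduce the target inequality, after clearing denominators and cancelling the common factor $v-u$ (equivalently $e^{\lambda(x-x')}-1$), to the single condition $uv \ge c(u+v)$, which is then closed off using $c \le u/2$ and $v \ge u$. Your substitution $u=e^{\lambda x'}$, $v=e^{\lambda x}$, $c=e^{\lambda\rho}$ makes the bookkeeping a bit cleaner than the paper's chain of equivalences in the original exponential variables, but the argument is the same.
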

\begin{proof}
Given $x \ge x' > \rho$, it follows that
\begin{align*}
& \quad\ \frac{e^{\lambda x} - e^{\lambda \rho}}{e^{\lambda x'} - e^{\lambda \rho}} \le e^{2\lambda (x - x')}\\
& \Leftrightarrow e^{\lambda x} - e^{\lambda \rho} \le e^{2\lambda (x - x')}(e^{\lambda x'} - e^{\lambda \rho})\\
& \Leftrightarrow e^{\lambda x'}(e^{2\lambda(x - x')} - e^{\lambda(x - x')}) \ge e^{\lambda \rho}(e^{2\lambda(x - x')} - 1)\\
& \Leftarrow e^{\lambda x'}e^{\lambda(x - x')} \ge e^{\lambda \rho}(e^{\lambda(x - x')} + 1)\\
& \Leftrightarrow \frac{e^{\lambda x'}}{e^{\lambda \rho}} \ge \frac{e^{\lambda(x - x')} + 1}{e^{\lambda(x - x')}}\\
& \Leftarrow \frac{e^{\lambda x'}}{e^{\lambda \rho}} \ge 2\\
& \Leftrightarrow x' \ge \rho + \frac{\ln{2}}{\lambda}.
\end{align*}
This finishes the proof of Lemma~\ref{lemma:phi(x)-property-2}.
\end{proof}

\begin{lemma}\label{lemma:phi(x)-property-3}
Let $\theta_1, \dotsc, \theta_n$ be non-negative reals. Then
\begin{equation*}
\sum_{i = 1}^{n}{\phi(x_i + \theta_i)} + \phi(x_{n + 1}) \le \sum_{i = 1}^{n}{\phi(x_i)} + \phi\left(x_{n + 1} + \sum_{i = 1}^{n}{\theta_i}\right)
\end{equation*}
for any reals $x_1 \le \dotsb \le x_n \le x_{n + 1}$.
\end{lemma}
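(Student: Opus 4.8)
The plan is to prove the inequality by induction on $n$, with the heart of the argument being a single ``smoothing'' step: moving a non-negative amount $s$ from a smaller argument of $\phi$ to a larger one can only increase the total. First I would record two elementary facts: $\phi$ is non-decreasing, and $\phi$ is convex (its derivative is $0$ on $(-\infty,\rho)$ and $\lambda e^{\lambda x}$ on $(\rho,\infty)$, hence non-decreasing, and $\phi$ is continuous at $\rho$). Convexity gives non-decreasing increments, i.e.\ for all reals $a \le b$ and all $s \ge 0$,
\begin{equation*}
\phi(a+s) - \phi(a) \le \phi(b+s) - \phi(b), \qquad\text{equivalently}\qquad \phi(a+s) + \phi(b) \le \phi(a) + \phi(b+s).
\end{equation*}
This is exactly the $n=1$ case of the lemma (take $x_1=a$, $x_2=b$, $\theta_1=s$), and it is also the workhorse of the inductive step. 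If one prefers not to invoke convexity as a black box, this smoothing inequality can instead be verified by a short case analysis on where $\rho$ sits relative to $a, a+s, b, b+s$, in the style of the case analyses in Lemmas~\ref{lemma:phi(x)-property-1} and~\ref{lemma:phi(x)-property-2}.

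For the inductive step, assume the statement for $n-1$ and fix $x_1 \le \cdots \le x_{n+1}$ and non-negative $\theta_1,\dots,\theta_n$. I would first apply the smoothing inequality with $a = x_n \le b = x_{n+1}$ and $s = \theta_n$ to peel off the $n$th offset:
\begin{equation*}
\sum_{i=1}^{n}\phi(x_i+\theta_i) + \phi(x_{n+1}) \le \sum_{i=1}^{n-1}\phi(x_i+\theta_i) + \phi(x_n) + \phi(x_{n+1}+\theta_n).
\end{equation*}
Since $x_1 \le \cdots \le x_{n-1} \le x_n$ still holds, the induction hypothesis (with $x_n$ as the top element and offsets $\theta_1,\dots,\theta_{n-1}$) bounds $\sum_{i=1}^{n-1}\phi(x_i+\theta_i) + \phi(x_n)$ by $\sum_{i=1}^{n-1}\phi(x_i) + \phi\bigl(x_n + \sum_{i=1}^{n-1}\theta_i\bigr)$. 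Finally, a second application of the smoothing inequality, now with $a = x_n \le b = x_{n+1}+\theta_n$ and $s = \sum_{i=1}^{n-1}\theta_i \ge 0$, converts $\phi\bigl(x_n+\sum_{i=1}^{n-1}\theta_i\bigr) + \phi(x_{n+1}+\theta_n)$ into $\phi(x_n) + \phi\bigl(x_{n+1}+\sum_{i=1}^{n}\theta_i\bigr)$. Chaining the three bounds gives the claim; the base case $n=0$ (or $n=1$) is immediate, so the induction is uniform.

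I expect the only genuine subtlety to be organizational rather than mathematical: keeping careful track of which argument plays the role of the ``top'' element — it is $x_{n+1}$ in the lemma statement but becomes $x_n$ after the first reduction — and checking each time that the monotonicity hypothesis $x_1 \le \cdots$ is still satisfied so the induction hypothesis applies. Everything else (the smoothing inequality, the base case) is routine.
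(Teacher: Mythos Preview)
Your proposal is correct and follows essentially the same approach as the paper: induction on $n$ with the $n=1$ ``smoothing'' inequality $\phi(a+s)+\phi(b)\le\phi(a)+\phi(b+s)$ for $a\le b$, $s\ge0$ as the engine. The paper's inductive step is marginally slicker --- it first applies the hypothesis for $k$ to $\sum_{i\le k}\phi(x_i+\theta_i)+\phi(x_{k+2})$ (skipping $x_{k+1}$ entirely as the top element) and then applies the $n=1$ case once to absorb $\theta_{k+1}$, so two applications rather than your three --- but the substance is identical, and your explicit appeal to convexity for the base case is a detail the paper simply waves at.
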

\begin{proof}
We prove this lemma by induction. When $n = 1$, it is easy to verify that $\phi(x_1 + \theta_1) + \phi(x_2) \le \phi(x_1) + \phi(x_2 + \theta_1)$ if $x_1 \le x_2$. Suppose this lemma holds for all $n = 1, \dotsc, k$. When $n = k + 1$, we have
\begin{align*}
& \quad\ \sum_{i = 1}^{k + 1}{\phi(x_i + \theta_i)} + \phi(x_{k + 2})\\
& = \sum_{i = 1}^{k}{\phi(x_i + \theta_i)} + \phi(x_{k + 1} + \theta_{k + 1}) + \phi(x_{k + 2})\\
& \le \sum_{i = 1}^{k}{\phi(x_i)} + \phi(x_{k + 1} + \theta_{k + 1}) + \phi\left(x_{k + 2} + \sum_{i = 1}^{k}{\theta_i}\right) && \tag{by induction hypothesis}\\
& \le \sum_{i = 1}^{k + 1}{\phi(x_i)} + \phi\left(x_{k + 2} + \sum_{i = 1}^{k + 1}{\theta_i}\right). && \tag{by induction hypothesis, since $x_{k + 2} + \sum_{i = 1}^{k}{\theta_i} \ge x_{k + 1}$}
\end{align*}
This finishes the induction step.
\end{proof}

\section{A Lower Bound on the Worst-case Expected Loss}
Before considering privacy, let us first look at the best achievable loss. We have the following lower bound on the worst-case expected loss, even for algorithms providing no privacy guarantee:
\begin{theorem}\label{theorem:lower-bound-loss}
The worst-case expected loss of any recommendation algorithm is $\Omega(R + \log{\frac{n}{P}})$, even for an algorithm providing no privacy guarantee and an oblivious adversary.
\end{theorem}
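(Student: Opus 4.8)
The plan is to prove the two terms of the lower bound separately, by constructing two distributions over adversary inputs and showing that no algorithm can do well against both.

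\textbf{The $\Omega(R)$ term.} First I would isolate the difficulty coming from the radius parameter $R$. The idea is that when $R$ is large, there is a lot of ``slack'' in who counts as a peer, and the algorithm cannot tell the client's true preferences apart from many plausible alternatives. Concretely, I would fix a single voter (or a small pool of $P$ identical voters, to meet the $P_{C,\mathcal{V}\langle\mathcal{U}\rangle,R} \ge P$ constraint), let each voter vote on a fixed object $o_t \in \mathcal{B}_t$ in each round, and then have the adversary choose the client $C$ so that $C$ likes exactly one object per round, with the liked object agreeing with $o_t$ in all but (up to) $R$ of the rounds. The set of rounds where $C$ disagrees with the voters, and which object $C$ likes there, is chosen uniformly at random among the allowed configurations. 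Since the algorithm only learns $C$'s preference in a round after it has recommended and received feedback, and since the voters' votes give no information about which of the $R$ ``disagreement rounds'' are which, a standard information-theoretic / Yao's-principle argument shows the algorithm must incur $\Omega(R)$ expected loss: in each disagreement round it is essentially guessing among $m$ objects with at most $1/m$ chance of being right (for the first time it encounters that round), and there are $\Theta(R)$ such rounds. One has to be a little careful that recommending the same object repeatedly in later rounds doesn't help — but the candidate objects are \emph{new} every round, so past feedback about round $t$ is useless for round $t' \ne t$; this is exactly where the ``new objects each round'' feature of the model does the work.

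\textbf{The $\Omega(\log\frac{n}{P})$ term.} Here the obstruction is the needle-in-a-haystack phenomenon: among up to $n$ voters, only $P$ of them are guaranteed to be peers, and the algorithm must find one. I would set $R = 0$ (so ``peer'' means ``distance $0$'', i.e., every object the voter votes on is liked by $C$), partition the rounds into $\Theta(\log\frac{n}{P})$ phases, and use the votes to encode a ``search'': in each phase the voters split into groups, the client's liked objects are consistent with exactly one sub-population of size $\ge P$, and the algorithm learns which sub-population only by paying loss when it recommends (via) a non-peer voter's choice. Equivalently, think of it as $\log(n/P)$ rounds of a halving game where each wrong guess costs $\Omega(1)$ loss. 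An oblivious adversary picks the "correct" branch uniformly at random in advance; by Yao's principle the expected loss against this distribution is $\Omega(\log\frac{n}{P})$. The assumption $P \ge 6m$ guarantees there is always a large enough surviving peer group to continue the recursion.

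\textbf{Combining and the main obstacle.} Finally I would combine the two bounds: since $\max(a,b) = \Omega(a+b)$, running whichever of the two constructions gives the larger bound (or, better, a single construction that superimposes the $R$-slack in one block of rounds and the population-search in another block of rounds, using disjoint rounds so the two lower-bound arguments don't interfere) yields worst-case expected loss $\Omega(R + \log\frac{n}{P})$. The main obstacle I anticipate is bookkeeping the two constraints simultaneously — ensuring that in \emph{both} sub-constructions the instance still has $|\mathcal{U}| \le n$, $D_C \le D$ (take $D_C = 0$, every liked set a singleton, which is the hardest case for the algorithm anyway), and $P_{C,\mathcal{V}\langle\mathcal{U}\rangle,R} \ge P$ — and carefully formalizing ``the algorithm gains no information'' via a clean application of Yao's minimax principle to reduce to a deterministic algorithm against a fixed input distribution, so that the conditional-entropy / guessing argument can be made rigorous per round.
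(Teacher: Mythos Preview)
Your plan matches the paper's: prove the $\Omega(R)$ and $\Omega(\log\frac{n}{P})$ terms separately and combine via $\max(a,b)=\Omega(a+b)$. For the logarithmic term the paper simply cites an adaptation of a known bound, and your halving-game sketch is exactly that idea.

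For the $\Omega(R)$ term your construction is more elaborate than necessary, and your stated justification is a little off. You randomize \emph{which} rounds are disagreement rounds and force the client to like a non-$o_t$ object there, then argue the algorithm is ``guessing among $m$ objects'' in those rounds. But the algorithm does not know which rounds are disagreement rounds, so the per-round guessing picture is not clean; and in disagreement rounds there are only $m-1$ candidates, which for $m=2$ collapses your stated per-round bound to zero and forces you to also charge for mistakes in agreement rounds. The construction can still be pushed through, but the accounting becomes delicate (and if you randomize rounds i.i.d.\ rather than fixing exactly $R$ of them, you also have to worry about the voter ceasing to be a peer). The paper sidesteps all of this with a simpler instance: fix the \emph{first} $R$ rounds, and in each of those rounds let the client like a uniformly random object among \emph{all} $m$ candidates (so possibly $o_t$ itself). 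The $P$ voters always vote on the fixed object, hence their distance to the client is at most $R$ regardless of the randomness, and in each of those $R$ rounds the client's liked object is uniform and independent of everything the algorithm can see, so the per-round expected loss is exactly $1-1/m$ no matter what the algorithm does. One averaging step then extracts a deterministic client achieving $(1-1/m)R$ expected loss. This is cleaner than your version and avoids the bookkeeping you anticipated as the ``main obstacle.''
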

\begin{proof}[Proof of Theorem~\ref{theorem:lower-bound-loss}]
In \cite{yu2009dsybil}, Yu et~al.\ have shown that the worst-case expected loss is $\Omega(\log{n})$ for the case of $P = 1$ (Theorem 6 in \cite{yu2009dsybil}). A direct adaption of their proof shows that the worst-case expected loss is at least $\frac{1}{2}(m - 1)\lfloor\frac{\log{(n / P)}}{\log{m}}\rfloor = \Omega(\log{\frac{n}{P}})$ for any $P \ge 1$. In the remaining parts of this proof, we show that the worst-case expected loss is $\Omega(R)$.
	
We will show this by constructing a client with random preferences. In each of the first $R$ rounds, we independently and uniformly at random choose one object. We let the client like this object, and we let he/she dislike the other objects. In each of the remaining $T - R$ rounds, there is no randomness, the client always likes the object with the smallest index in that round.
	
There is a fixed voting pattern $\mathcal{V}\langle\mathcal{U}\rangle$ containing $P$ voters. In each round, all the $P$ voters always vote on the object with the smallest index in that round. By doing so, we ensure that all these $P$ voters are always peers of the client regardless of the instantiation of the client's random preferences.
	
One easily sees that the expected loss is at least $(1 - \frac{1}{m})R$ for any recommendation algorithm $\mathcal{A}$ when running with the random client and the voting pattern $\mathcal{V}\langle\mathcal{U}\rangle$ (over both the client's coin flips and $\mathcal{A}$'s coin flips), because the expected loss in each of the first $R$ rounds is $1 - \frac{1}{m}$, no matter what $\mathcal{A}$ does. Therefore, there must exist one instantiation of preferences for the client such that the expected loss is at least $(1 - \frac{1}{m})R$. This finishes the proof.
\end{proof}

\section{The Special Setting where \texorpdfstring{$D = R = 0$}{D = R = 0}}
Now let us consider differential privacy in recommendation systems. In order to better explain our ideas, we start by discussing the simple (yet non-trivial) setting where $D = R = 0$. That is, the client likes exactly one object in every round, and the client's peers never vote on any object that the client dislikes. We discuss the general setting where $D + R \ge 0$ in the next section.

\subsection{Lower Bound}
When $D = R = 0$, we have the following lower bound on the best achievable privacy.
\begin{theorem}\label{theorem:lower-bound-privacy-simple-setting}
For any recommendation algorithm that guarantees $L = o(T)$ worst-case expected loss and preserves $\epsilon$-differential privacy, if $D = R = 0$, then $\epsilon = \Omega(\frac{1}{P}\log\frac{T}{L})$, even for an oblivious adversary.
\end{theorem}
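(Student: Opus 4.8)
The plan is to exhibit, for any algorithm $\mathcal{A}$ with worst-case expected loss at most $L$ that preserves $\epsilon$-differential privacy, one oblivious adversary --- a single client together with a single voting pattern --- for which the loss guarantee, pushed through group privacy, forces $\epsilon$ to be large. The key structural observation to start from is that when $D = R = 0$ the adversary has essentially only one degree of freedom: the sequence $\gamma = (\gamma_1,\dots,\gamma_T) \in \{1,\dots,m\}^T$ recording which object the client likes in each round. Indeed, with $R = 0$ every peer must vote for an object the client likes, and with $D = 0$ there is exactly one such object per round, so the peers' votes are completely determined by $\gamma$; the voting pattern cannot be varied independently of the client. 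Hence the ``hard distribution'' must live in the choice of $\gamma$, and the neighbouring instance used in the privacy step will simply be the empty voting pattern $\mathcal{V}\langle\emptyset\rangle$.

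First I would show that there is a sequence $\gamma^\ast$ such that, running $\mathcal{A}$ with the client $C^\ast$ that likes object $\gamma^\ast_t$ in round $t$ and with the empty voting pattern, the expected loss is at least $(1-\tfrac1m)T$. This follows by averaging over $\gamma$ drawn uniformly from $\{1,\dots,m\}^T$: with no voters present, the recommendation $b_t$ is a function of $\mathcal{A}$'s coins, of $b_1,\dots,b_{t-1}$, and of the feedbacks from the first $t-1$ rounds, and the feedback in round $s$ is ``like'' exactly when $b_s = \gamma_s$, so $b_t$ depends on $\gamma$ only through $\gamma_1,\dots,\gamma_{t-1}$. Therefore $b_t$ is independent of $\gamma_t$, giving $\Pr[b_t = \gamma_t] = \tfrac1m$ and expected loss $\sum_{t=1}^T \Pr[b_t \ne \gamma_t] = (1-\tfrac1m)T$; some $\gamma^\ast$ must meet this average.

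Next I would fix this $\gamma^\ast$ and let $\mathcal{V}^\ast$ be the voting pattern with exactly $P$ peers, each voting for object $\gamma^\ast_t$ in round $t$ (legal as $P \le n$, else the theorem is vacuous). The instance $(C^\ast, \mathcal{V}^\ast)$ is admissible --- $D_{C^\ast} = 0$ and $P_{C^\ast,\mathcal{V}^\ast,0} = P$ --- so $\mathcal{A}$'s expected loss on it is at most $L$. At the same time $\mathcal{V}^\ast$ is $P$-step adjacent to $\mathcal{V}\langle\emptyset\rangle$ (delete the peers one at a time), so applying $\epsilon$-differential privacy along this chain gives, for every round $t$,
\[
\Pr_{(C^\ast,\, \mathcal{V}^\ast)}[\,b_t \ne \gamma^\ast_t\,] \;\ge\; e^{-P\epsilon}\,\Pr_{(C^\ast,\, \mathcal{V}\langle\emptyset\rangle)}[\,b_t \ne \gamma^\ast_t\,].
\]
Summing over $t$ and combining with the bound from the first step yields $L \ge e^{-P\epsilon}(1-\tfrac1m)T$, i.e.\ $\epsilon \ge \tfrac1P\ln\tfrac{(1-1/m)T}{L} = \Omega\!\left(\tfrac1P\log\tfrac{T}{L}\right)$ since $m$ is a constant.

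The hard part is conceptual rather than computational: differential privacy is a per-client statement, and when $D = R = 0$ all voting patterns with at least $P$ peers of a fixed client carry identical peer-votes, so the usual packing strategy --- many mutually adjacent voting patterns, each enjoying the loss guarantee, forced into disjoint behaviors --- is simply not available. The argument has to draw its diversity from the client's preference sequence instead, and it has to use the empty voting pattern as the neighbour: no loss guarantee is claimed there, yet the independence argument of the first step shows $\mathcal{A}$ still pays $\Omega(T)$ loss on it for a worst-case $\gamma^\ast$. In carrying this out one must be careful to verify that under the empty pattern the feedback really conveys nothing about future preferences (so that $b_t$ and $\gamma_t$ are independent), that the deletion chain from $\mathcal{V}^\ast$ to $\mathcal{V}\langle\emptyset\rangle$ has length exactly $P$, and that $\ln\tfrac{T}{L} - O(1) = \Omega(\log\tfrac{T}{L})$ throughout the regime $L = o(T)$.
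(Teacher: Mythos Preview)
Your proof is correct but takes a genuinely different route from the paper's. The paper proves the intermediate Lemma~\ref{lemma:lower-bound-privacy-simple-setting} by an inductive, round-by-round construction: for any two clients Alice and Bob that differ only in round~$t$, it builds a chain of $2P+1$ voting patterns through a ``middle'' pattern $\mathcal{V}$ containing \emph{both} Alice's and Bob's $P$ peers; since the two executions $\mathcal{A}(\text{Alice},\mathcal{V})$ and $\mathcal{A}(\text{Bob},\mathcal{V})$ are identically distributed up to round~$t$, one of the two endpoints must incur $\tfrac12 e^{-P\epsilon}$ expected loss in round~$t$, and induction over rounds yields a client with total expected loss $\ge \tfrac12 Te^{-P\epsilon}$. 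Your argument instead chains from the $P$-peer pattern $\mathcal{V}^\ast$ down to the \emph{empty} voting pattern and uses a simple averaging/independence argument there to get $(1-\tfrac1m)T$ loss for some $\gamma^\ast$, then applies group privacy once. Your approach is shorter, avoids the per-round induction, and works verbatim for every $m\ge 2$ (the paper's proof is written for $m=2$). The paper's approach, on the other hand, never leaves the range $[P,2P]$ of voters, so it remains valid even in models that exclude the empty voting pattern or that only require differential privacy among patterns with at least one voter; your proof relies on $\mathcal{V}\langle\emptyset\rangle$ being within the scope of Definition~\ref{definition:differential-privacy}, which is fine here but worth stating explicitly.
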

Theorem~\ref{theorem:lower-bound-privacy-simple-setting} directly follows from the following Lemma~\ref{lemma:lower-bound-privacy-simple-setting}, by plugging $L$ worst-case expected loss into it:
\begin{lemma}\label{lemma:lower-bound-privacy-simple-setting}
If $D = R = 0$, then the worst-case expected loss is at least $\frac{1}{2}Te^{-P\epsilon}$ for any recommendation algorithm that preserves $\epsilon$-differential privacy, even for an oblivious adversary.
\end{lemma}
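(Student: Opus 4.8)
The plan is to exhibit a single admissible client/voting-pattern pair on which every $\epsilon$-differentially private algorithm suffers $\Omega(Te^{-P\epsilon})$ expected loss, by reducing to the \emph{empty} voting pattern (where the algorithm is effectively blind) and then transporting the resulting trivial loss bound via group differential privacy. For a sequence $\sigma=(o_1,\dots,o_T)$ with $o_t\in\mathcal{B}_t$, let $C_\sigma$ be the client that likes exactly $o_t$ and dislikes all other objects in round $t$, so $D_{C_\sigma}=0$; and let $\mathcal{V}_\sigma$ be the voting pattern of $P$ voters in which every voter votes on $o_t$ in round $t$. Each such voter has distance $0\le R$ to $C_\sigma$, so all $P$ voters are peers and $P_{C_\sigma,\mathcal{V}_\sigma,R}=P$; since also $|\mathcal{U}|=P\le n$, every pair $(C_\sigma,\mathcal{V}_\sigma)$ is an admissible instance (and if $P>n$ the lemma is vacuous).

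\textbf{Step 1: the empty pattern is blind.} I would draw $\sigma$ with each $o_t$ independent and uniform on $\mathcal{B}_t$ and bound the loss of $\mathcal{A}$ run against $C_\sigma$ with the empty voting pattern $\mathcal{V}\langle\emptyset\rangle$. Before choosing $b_t$, the algorithm has seen only $b_1,\dots,b_{t-1}$ and their feedback, all of which (inductively) are functions of $o_1,\dots,o_{t-1}$ and $\mathcal{A}$'s coins; since $o_t$ is independent of these and uniform on an $m$-element set, $\Pr[b_t=o_t]\le 1/m$, so the expected loss in round $t$ is at least $1-1/m\ge 1/2$ (using $m\ge 2$). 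Summing over the $T$ rounds, $\mathbb{E}_\sigma\big[\mathbb{E}[\text{loss of }\mathcal{A}(C_\sigma,\mathcal{V}\langle\emptyset\rangle)]\big]\ge T/2$. Averaging over $\sigma$ is essential here, since for any single fixed $\sigma$ an algorithm could trivially output $\sigma$.

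\textbf{Step 2: transport via group privacy.} Deleting the $P$ voters of $\mathcal{V}_\sigma$ one at a time exhibits $\mathcal{V}_\sigma$ and $\mathcal{V}\langle\emptyset\rangle$ as $P$-step adjacent (each intermediate pattern just has all its voters voting on $o_t$, so consecutive pairs satisfy Definition~\ref{definition:adjacent-voting-pattern}). Chaining Definition~\ref{definition:differential-privacy} along this path (using symmetry of adjacency) gives $\Pr[\mathcal{A}(C_\sigma,\mathcal{V}\langle\emptyset\rangle)\in S]\le e^{P\epsilon}\Pr[\mathcal{A}(C_\sigma,\mathcal{V}_\sigma)\in S]$ for every $S\subseteq\mathcal{B}_1\times\dots\times\mathcal{B}_T$. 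For fixed $\sigma$ the loss is the deterministic function $b\mapsto|\{t:b_t\ne o_t\}|$ of the recommendation sequence, so applying this inequality to the sets $\{b:\text{loss}(b)\ge k\}$ and summing over $k=1,\dots,T$ yields $\mathbb{E}[\text{loss of }\mathcal{A}(C_\sigma,\mathcal{V}_\sigma)]\ge e^{-P\epsilon}\,\mathbb{E}[\text{loss of }\mathcal{A}(C_\sigma,\mathcal{V}\langle\emptyset\rangle)]$. Taking expectations over $\sigma$ and combining with Step 1 gives $\mathbb{E}_\sigma[\mathbb{E}[\text{loss of }\mathcal{A}(C_\sigma,\mathcal{V}_\sigma)]]\ge\tfrac12 Te^{-P\epsilon}$, so some fixed $\sigma^\star$ attains $\mathbb{E}[\text{loss of }\mathcal{A}(C_{\sigma^\star},\mathcal{V}_{\sigma^\star})]\ge\tfrac12 Te^{-P\epsilon}$; since $(C_{\sigma^\star},\mathcal{V}_{\sigma^\star})$ is a fixed (hence obliviously chosen) admissible instance, this is the claimed bound on the worst-case expected loss.

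\textbf{Main obstacle.} The only delicate point is the blindness claim in Step 1: because the model is interactive and $\mathcal{A}$ does receive per-round feedback, one must verify that this feedback carries no information about the object liked in the current round, which is exactly why $\sigma$ is generated with fresh independent randomness in each round. The rest — chaining group privacy across the $P$ voter deletions and passing from a tail-probability bound to an expectation bound — is routine.
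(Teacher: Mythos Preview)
Your proof is correct, but it follows a genuinely different route from the paper's.

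The paper fixes two objects per round and builds the hard client $C^*$ \emph{inductively}: for each round $t$ it considers two clients that agree everywhere except in round $t$, and bridges their respective peer voting patterns through a ``middle'' pattern $\mathcal{V}$ containing \emph{both} sets of $P$ peers (so $2P$ voters). Because the two executions $\mathcal{A}(\text{Alice},\mathcal{V})$ and $\mathcal{A}(\text{Bob},\mathcal{V})$ coincide up to round $t$, one can compare the round-$t$ recommendation probabilities directly and then move $P$ steps in each direction by group privacy; this yields $\max\{p_A,p_B\}\ge\tfrac12 e^{-P\epsilon}$, and the inductive choice of preference in each round builds a single deterministic client witnessing the bound.

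Your argument instead fixes a \emph{random} client $C_\sigma$ and compares $\mathcal{V}_\sigma$ to the \emph{empty} voting pattern in one shot: with no voters the algorithm is blind to the independently uniform $o_t$, so the averaged loss is at least $(1-1/m)T\ge T/2$, and a single $P$-step group-privacy chain transports this to $\mathcal{V}_\sigma$; averaging over $\sigma$ and extracting a maximizer gives the oblivious instance. This is more direct (no induction, no per-round case analysis) and even yields the slightly sharper constant $1-1/m$. The paper's approach, on the other hand, never needs to evaluate the algorithm on the degenerate empty pattern, and its ``middle voting pattern'' device is what later drives the more delicate lower bound in the general $D+R>0$ setting, where a straight comparison to an empty pattern would not capture the dependence on $D$ and $R$.
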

\begin{proof}
In this proof, we consider a particular setup in which there are two objects $\alpha_t$ and $\beta_t$ in each round $t$. Given a client $C$, let $\tilde{\mathcal{V}}_C$ denote the voting pattern which contains only $P$ voters who are all $C$'s peers. We will construct a client $C^*$ for any recommendation algorithm $\mathcal{A}$, such that the expected loss of $\mathcal{A}(C^*, \tilde{\mathcal{V}}_{C^*})$ is at least $\frac{1}{2}Te^{-P\epsilon}$.

First consider a particular round $t$, we have the following claim:
\begin{claim}\label{claim:lower-bound-privacy-simple-setting}
If two clients Alice and Bob have the same preferences in all but the $t$th round, then one of $\mathcal{A}(\text{Alice}, \tilde{\mathcal{V}}_\text{Alice})$ and $\mathcal{A}(\text{Bob}, \tilde{\mathcal{V}}_\text{Bob})$ has at least $\frac{1}{2}e^{-P\epsilon}$ expected loss in the $t$th round.
\end{claim}
To prove this claim, assume that Alice likes $\alpha_t$ but Bob likes $\beta_t$. We construct a chain of $2P + 1$ voting patterns:
\begin{equation*}
\mathcal{V}_A^{(1)}, \dotsc, \mathcal{V}_A^{(P)}, \mathcal{V}, \mathcal{V}_B^{(P)}, \dotsc, \mathcal{V}_B^{(1)}.
\end{equation*}
The voting in the $t$th round of these voting patterns is as follows: in the middle voting pattern $\mathcal{V}$, there are $P$ voters voting on $\alpha_t$ (who are Alice's peers) and another $P$ voters voting on $\beta_t$ (who are Bob's peers). To get voting pattern $\mathcal{V}_A^{(P)}$, we remove one of Bob's peers from $\mathcal{V}$. We keep on doing this until we get $\mathcal{V}_A^{(1)}$, which contains only Alice's $P$ peers. Similarly, we construct $\mathcal{V}_B^{(P)}, \dotsc, \mathcal{V}_B^{(1)}$ by repeatedly removing Alice's peers. The voting in all the other $T - 1$ rounds is simple: in every voting pattern, all the voters vote on the (common) object that is liked by both Alice and Bob in every round. The reader can verify that $\mathcal{V}_A^{(1)} = \tilde{\mathcal{V}}_{\text{Alice}}$ and $\mathcal{V}_B^{(1)} = \tilde{\mathcal{V}}_{\text{Bob}}$.

We will use the middle voting pattern $\mathcal{V}$ as a bridge to relate the execution $\mathcal{A}(\text{Alice}, \mathcal{V}_A^{(1)})$ with the execution $\mathcal{A}(\text{Bob}, \mathcal{V}_B^{(1)})$. Since Alice and Bob have the same preferences in the first $t - 1$ rounds, the executions $\mathcal{A}(\text{Alice}, \mathcal{V})$ and $\mathcal{A}(\text{Bob}, \mathcal{V})$ have the same distribution of recommending objects in the $t$th round. Denote the common probability that the object $\beta_t$ is recommended in the $t$th round by $p$.

Let $p_A$ be the probability that $\beta_t$ is recommended in the $t$th round of the execution $\mathcal{A}(\text{Alice}, \mathcal{V}_A^{(1)})$. Notice that $p_A$ is also the expected loss of $\mathcal{A}(\text{Alice}, \mathcal{V}_A^{(1)})$ in the $t$th round. Since $\mathcal{V}_A^{(1)}$ and $\mathcal{V}$ are $P$-step adjacent voting patterns, we have $p_A \ge p \cdot e^{-P\epsilon}$. Similarly, let $p_B$ be the probability that $\alpha_t$ is recommended in the $t$th round of the execution $\mathcal{A}(\text{Bob}, \mathcal{V}_B^{(1)})$, then $p_B$ is the expected loss of $\mathcal{A}(\text{Bob}, \mathcal{V}_B^{(1)})$ in the $t$th round and $p_B \ge (1 - p) \cdot e^{-P\epsilon}$. In summary, we always have $\max\{p_A, p_B\} \ge \max\{p, 1 - p\} \cdot e^{-P\epsilon} \ge \frac{1}{2}e^{-P\epsilon}$ and therefore at least one of Alice and Bob incurs at least $\frac{1}{2}e^{-P\epsilon}$ expected loss in the $t$th round. This proves our claim.

We now inductively construct the client $C^*$ who incurs $\frac{1}{2}e^{-P\epsilon}$ expected loss in every round when running with $\tilde{\mathcal{V}}_{C^*}$. By the above claim, there exists a client $C_1$ who incurs at least $\frac{1}{2}e^{-P\epsilon}$ expected loss in the first round when running with $\tilde{\mathcal{V}}_{C_1}$. This finishes the base case.

Assume $C_k$ is a client who incurs at least $\frac{1}{2}e^{-P\epsilon}$ expected loss in every one of the first $k$ rounds when running with $\tilde{\mathcal{V}}_{C_k}$. We construct another client $C_k'$ who has the same preferences with $C_k$ in all but the $(k + 1)$th round. Then the expected loss in every one of the first $k$ rounds is the same for both $\mathcal{A}(C_k, \tilde{\mathcal{V}}_{C_k})$ and $\mathcal{A}(C_k', \tilde{\mathcal{V}}_{C_k'})$ (because the voting in the first $k$ rounds is the same in both $\tilde{\mathcal{V}}_{C_k}$ and $\tilde{\mathcal{V}}_{C_k'}$), and by the above claim, one of them has at least $\frac{1}{2}e^{-P\epsilon}$ expected loss in the $(k + 1)$th round. In other words, one of $\mathcal{A}(C_k, \tilde{\mathcal{V}}_{C_k})$ and $\mathcal{A}(C_k', \tilde{\mathcal{V}}_{C_k'})$ has at least $\frac{1}{2}e^{-P\epsilon}$ expected loss in every one of the first $k + 1$ rounds. This finishes the induction step.
\end{proof}

\subsection{Algorithm}
We propose the following Algorithm~\ref{algorithm:p-REC-simple} for the simple setting where $D = R = 0$. As we will see, it is a special case of the general p-REC algorithm in Section~\ref{section:p-REC}. Therefore, we call it the p-REC$_\text{sim}$ algorithm (``sim'' is short for ``simple'').
\begin{algorithm}[h]
	\SetKwInOut{Input}{Input}
	\SetKwInOut{Output}{Output}
	\SetKwInOut{Initialization}{Initialization}
	\SetKwData{Weight}{weight}
	\SetKwData{Obj}{obj}
	\SetKwData{Feedback}{feedback}
	\SetKwFunction{Main}{Main}
	\SetKwFunction{RecommendByWeights}{RecommendByWeight}
	\SetKwFunction{UpdateWeights}{UpdateWeight}
	
	\Input{A client $C$, a voting pattern $\mathcal{V}\langle\mathcal{U}\rangle$}
	\Output{Recommend an object from $\mathcal{B}_t$ to client $C$ in each round $t$}
	\Initialization{$\gamma \gets \frac{m}{3T - 1}$, $\lambda \gets 2m\ln{T}$, $\rho \gets \frac{1}{2m}$, $\Weight[i] \gets 1$ for each $i \in \mathcal{U}$}
	
	\BlankLine
	
	\SetKwBlock{MainProcedure}{Procedure \Main{}}{end}
	\MainProcedure{
		\ForEach{round $t = 1, \dotsc, T$}{
			$\Obj \gets$ \RecommendByWeights{$\Weight[\;]$}\;
			Recommend object $\Obj$ to the client $C$\;
			$\Feedback \gets$ the client $C$'s feedback on object $\Obj$\;
			\UpdateWeights{$\Weight[\;]$, $\Obj$, $\Feedback$}\;
		}
	}
	
	\SetKwBlock{RecommendByWeightsProcedure}{Procedure \RecommendByWeights{$\Weight[\;]$}}{end}
	\RecommendByWeightsProcedure{
		\ForEach{object $j \in \mathcal{B}_t$}{
			$x_{j, t} \gets \frac{\sum_{i \in \mathcal{U}_{j, t}}{\Weight[i]}}{\sum_{i \in \mathcal{U}}{\Weight[i]}}$, where $\mathcal{U}_{j, t}$ is the set of voters who vote on object $j$ in round $t$\;
		}
		Independently draw a Bernoulli random variable $Z_t$ with $\Pr[Z_t = 1] = \gamma$\;
		\eIf{$Z_t = 1$}{
			Independently draw an object $\Obj$ from $\mathcal{B}_t$ uniformly at random\;
		}
		{
			Independently draw an object $\Obj$ from $\mathcal{B}_t$ according to the following distribution: each object $j \in \mathcal{B}_t$ is drawn with probability proportional to $\phi(x_{j, t})$, where
			$
			\phi(x) = \begin{cases}
			0 & \text{if $x \le \rho$,}\\
			e^{\lambda x} - e^{\lambda \rho} & \text{otherwise\;}
			\end{cases}
			$
		}
		\Return $\Obj$\;
	}

	\SetKwBlock{UpdateWeightsProcedure}{Procedure \UpdateWeights{$\Weight[\;]$, $\Obj$, $\Feedback$}}{end}
	\UpdateWeightsProcedure{
		\eIf{$\Feedback =$ ``dislike''}{
			$\Weight[i] \gets 0$ for every voter $i$ who votes on object $\Obj$\;
		}
		{
			$\Weight[i] \gets 0$ for every voter $i$ who does not vote on object $\Obj$\;
		}
	}

	\caption{The p-REC$_\text{sim}$ algorithm.}\label{algorithm:p-REC-simple}
\end{algorithm}

The p-REC$_\text{sim}$ algorithm maintains a weight value $\mathsf{weight}[i]$ for each voter $i$, and it recommends objects according to voters' weight in each round by invoking the procedure \texttt{RecommendByWeight()}. When it receives the client's feedback, it invokes the procedure \texttt{UpdateWeight()} to update voters' weight. In p-REC$_\text{sim}$, each voter's weight is either $1$ or $0$. A voter with $0$ weight has no impact on the algorithm's output, and once a voter's weight is set to $0$, it will never be reset to $1$. Therefore, we can think of that \texttt{UpdateWeight()} works by ``kicking out'' voters from the system. We call the voters who have not been kicked out (i.e., those who have non-zero weight) \emph{surviving voters}.

The p-REC$_\text{sim}$ algorithm shares a similar structure to the classic Weighted Average algorithm for the PEA problem~\cite{cesa2006prediction}, as they both introduce weight to voters and output according to the weight. Our core contribution is the dedicated probability of recommending objects. In each round $t$, p-REC$_\text{sim}$ recommends object $j$ with probability $\gamma \cdot \frac{1}{m} + (1 - \gamma) \cdot \frac{\phi(x_{j, t})}{\sum_{k \in \mathcal{B}_t}{\phi(x_{k, t})}}$, where $x_{j, t}$ is the fraction of surviving voters voting on object $j$ in round $t$. We have:
\begin{theorem}\label{theorem:p-REC-sim-loss-and-privacy}
If $D = R = 0$, then the p-REC$_\text{sim}$ algorithm guarantees $O(\log\frac{n}{P})$ worst-case expected loss and it preserves $O(\frac{1}{P}\log{T})$-differential privacy, even for an adaptive adversary.
\end{theorem}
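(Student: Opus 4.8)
The combination $D=R=0$ forces every peer of $C$ to vote, in every round, on $C$'s unique liked object, so the \texttt{UpdateWeight} rule can never remove a peer: on ``like'' feedback it removes non-voters of the liked object, and on ``dislike'' feedback it removes voters of a disliked object. Hence at least $P$ voters survive throughout, and since $\sum_{j\in\mathcal{B}_t}x_{j,t}=1$ over $m$ objects some object has $x_{j,t}>\tfrac1{2m}=\rho$, so the exploitation distribution is always well defined. I would bound the expected loss by splitting each round on the exploration coin $Z_t$. The rounds with $Z_t=1$ contribute at most $\gamma T=\tfrac{mT}{3T-1}=O(1)$ in expectation. In a round with $Z_t=0$ that incurs a loss, the recommended object $b_t$ is disliked and satisfies $\phi(x_{b_t,t})>0$, hence $x_{b_t,t}>\rho$; since the feedback is ``dislike'', the more-than-$\rho$ fraction of surviving voters who voted on $b_t$ is removed, so the surviving count drops by a factor at least $1-\rho$. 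Starting below $n$ and never dropping below $P$, this can happen at most $\log_{1/(1-\rho)}(n/P)=O(m\log\tfrac nP)$ times, giving worst-case expected loss $O(\log\tfrac nP)$.

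\textbf{Privacy: setup.} Both output distributions have full support (each round gives each object probability $\ge\gamma/m$), so it suffices to prove $|\mathcal{E}(b)|\le\epsilon$ for every recommendation sequence $b$ with $\epsilon=O(\tfrac1P\log T)$: summing $\tfrac{\Pr[b]}{\Pr'[b]}\le e^{\epsilon}$ over $b\in S$ then gives Definition~\ref{definition:differential-privacy}, and the reverse direction is symmetric. Fix adjacent $\mathcal{V}\langle\mathcal{U}\rangle,\mathcal{V}\langle\mathcal{U}'\rangle$, say $\mathcal{U}'=\mathcal{U}\cup\{i^\ast\}$, and fix $b$. Conditioning on $b_1,\dots,b_{t-1}$ the surviving sets are determined, and because the removal decisions for voters in $\mathcal{U}$ are identical in both executions, the surviving set in the primed execution is either $S_t$ or $S_t\cup\{i^\ast\}$; once $i^\ast$ is removed it stays removed, so $\mathcal{E}_t(b)=0$ unless $i^\ast$ survives into round $t$. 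When it does, writing $W=|S_t|\ge P$, every object's fraction changes between the two executions by at most $\tfrac1{W+1}\le\tfrac1P$ — moving up for the object $j^\ast_t$ that $i^\ast$ votes on and down for all others.

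\textbf{Privacy: per-round analysis.} The argument then rests on three points. (i) A short calculation from the monotonicity of $\phi$ shows that when $b_t=j^\ast_t$ one has $\Pr[b_t\mid\cdot]\le\Pr'[b_t\mid\cdot]$, i.e.\ $\mathcal{E}_t(b)\le0$; combined with the \texttt{UpdateWeight} rule this means that, among the rounds in which $i^\ast$ still survives, those with $b_t\ne j^\ast_t$ all receive ``dislike'' feedback and those with $b_t=j^\ast_t$ all receive ``like'' feedback, so the positive part of $\mathcal{E}(b)$ (resp.\ of $-\mathcal{E}(b)$) comes only from ``dislike'' rounds (resp.\ ``like'' rounds), plus the single round in which $i^\ast$ is removed. (ii) If the recommended object's fraction is $\le\rho$ in whichever execution makes it smaller, that probability equals exactly $\gamma/m$, and one checks $\mathcal{E}_t(b)$ has the harmless sign (often it is $0$); likewise, in a ``like'' round where the recommended object's fraction exceeds $1-\rho$, that object is the only one with positive $\phi$ in both executions, so $\mathcal{E}_t(b)=0$. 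Hence every round with $\mathcal{E}_t(b)\ne0$ and $i^\ast$ active is a round in which the recommended object's fraction exceeds $\rho$ and the feedback removes at least a $\rho$-fraction of the survivors, i.e.\ $W$ contracts by a constant factor. (iii) For such a round I would bound $|\mathcal{E}_t(b)|$ using the $\gamma/m$ floor, the calibration $\lambda=2m\ln T$, $\rho=\tfrac1{2m}$ (so $e^{\lambda\rho}=T$ and $\sum_k\phi(x_{k,t})\ge\phi(1/m)=T^2-T$), and the $\phi$-estimates of Lemmas~\ref{lemma:phi(x)-property-1}--\ref{lemma:phi(x)-property-3}: the fine bound $\phi(\rho+s)\le e^{\lambda\rho}\,\lambda s\,e^{\lambda s}$ for objects whose fraction sits within $O(1/W)$ of $\rho$, Lemma~\ref{lemma:phi(x)-property-2} for objects well above $\rho$, and Lemma~\ref{lemma:phi(x)-property-1} to compare the normalizers $\sum_k\phi(x_{k,t})$ and $\sum_k\phi(x'_{k,t})$. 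This yields $|\mathcal{E}_t(b)|=O\!\left(\tfrac{\log T}{W}\right)$, with no stray additive constant.

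\textbf{Privacy: summation, and the main obstacle.} By point (ii) the rounds contributing a nonzero $\mathcal{E}_t(b)$ (before $i^\ast$ leaves) occur at a geometrically decaying sequence of values of $|S_t|$ that bottoms out at $\ge P$; there are only $O(m\log\tfrac nP)$ of them, and $\sum O\!\left(\tfrac{\log T}{|S_t|}\right)$ is a geometric series dominated by its last term $O\!\left(\tfrac{\log T}{P}\right)$. Adding the single removal round (also $O(\tfrac{\log T}P)$ by the same estimates) gives $|\mathcal{E}(b)|=O(\tfrac1P\log T)$, hence $O(\tfrac1P\log T)$-differential privacy; the adversary may be adaptive since the whole argument conditions on $b_1,\dots,b_{t-1}$ round by round. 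The delicate step is (iii)--(iv): a crude bound on a single $\mathcal{E}_t(b)$ is only $O(\log T)$, which summed over $\Theta(\log\tfrac nP)$ active rounds would overshoot by a logarithmic factor. The gain comes precisely from the truncated-and-shifted exponential form of $\phi$ — truncation at $\rho$ kills light objects, while the exponential growth calibrated so that $e^{\lambda\rho}=T$ makes near-threshold objects contribute only $O(\tfrac{\log T}{|S_t|})$ — together with the combinatorial observation that every ``dangerous'' round shrinks the survivor population by a constant factor, so that the per-round losses telescope into a geometric sum rather than accumulating $T$ terms.
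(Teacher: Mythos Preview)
Your loss argument is correct and matches the paper's proof (Theorem~\ref{theorem:upper-bound-loss-simple-setting}) essentially verbatim.

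For privacy, your overall architecture is also the paper's: a per-round bound $|\mathcal{E}_t(b)| = O(\lambda/W_t)$ (your point (iii), the paper's Lemma~\ref{lemma:single-round-privacy-leakage-simple-setting}), a contraction $W_{t+1} \le (1-\Theta(1/m))W_t$ whenever $\mathcal{E}_t(b)\neq 0$ (your point (ii), the paper's Lemma~\ref{lemma:round-with-non-zero-privacy-leakage-simple-setting}), and then a geometric sum dominated by its last term $O(\lambda/P)$. So the skeleton is right. Two issues, though.

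\emph{The sign-splitting in (i) is partly wrong and entirely unnecessary.} Your claim that $b_t=j^\ast_t$ implies $\mathcal{E}_t(b)\le 0$ is correct, but the companion claim that the negative part of $\mathcal{E}(b)$ ``comes only from like rounds'' amounts to asserting $b_t\neq j^\ast_t\Rightarrow \mathcal{E}_t(b)\ge 0$, and that is false. With $m=3$ and fractions $(0.3,0.5,0.2)$, adding one voter to the third object can \emph{increase} the recommendation probability of the first object, because the drop in $\phi$ at the dominant coordinate $0.5$ shrinks the normalizer more than the numerator. So dislike rounds can contribute negatively. Fortunately you never actually need the sign: the paper simply bounds $|\mathcal{E}_t(b)|$ and shows contraction whenever it is nonzero, and your points (ii)--(iv) already do exactly that. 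Drop (i).

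\emph{The threshold in (ii) is off by the inter-execution slack.} You argue that if $\mathcal{E}_t(b)\neq 0$ then the recommended object's fraction exceeds $\rho=\tfrac{1}{2m}$, but the two executions' fractions differ by up to $1/W_t\le 1/(6m)$, so the correct conclusion is only $x_{b_t,t}>\tfrac{1}{3m}$ (this is precisely what Lemma~\ref{lemma:round-with-non-zero-privacy-leakage-simple-setting} proves). The same slack issue hits your like-round argument via the $1-\rho$ threshold. The paper handles the like case differently and more cleanly: rather than thresholding $x_{b_t}$ near $1$, it observes that if $\mathcal{E}_t(b)\neq 0$ there must exist some \emph{other} object $\xi\neq b_t$ with $\Pr[\xi\mid\cdot]\neq\Pr'[\xi\mid\cdot]$, hence $x_{\xi,t}>\tfrac{1}{3m}$, and on ``like'' feedback all voters on $\xi$ are removed. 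This avoids the $1-\rho$ case analysis entirely and gives the contraction factor $1-\tfrac{1}{3m}$ uniformly.

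With these two fixes your argument coincides with the paper's.
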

According to Theorem~\ref{theorem:lower-bound-loss}, p-REC$_\text{sim}$'s loss is within a constant factor to the optimal. Then by Theorem~\ref{theorem:lower-bound-privacy-simple-setting}, p-REC$_\text{sim}$'s $\epsilon$ is also within a constant factor to the optimal $\epsilon$ among all the algorithms that guarantee $O(T^\beta)$ worst-case expected loss.

\paragraph{Loss analysis of the p-REC$_\text{sim}$ algorithm.} In the remaining parts of this section, we prove Theorem~\ref{theorem:p-REC-sim-loss-and-privacy}. First, we analyze p-REC$_\text{sim}$'s loss:
\begin{theorem}\label{theorem:upper-bound-loss-simple-setting}
If $D = R = 0$, then the p-REC$_\text{sim}$ algorithm guarantees $O(\log\frac{n}{P})$ worst-case expected loss, even for an adaptive adversary.
\end{theorem}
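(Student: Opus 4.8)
The plan is to track the potential $W_t := \sum_{i \in \mathcal{U}}\mathsf{weight}[i]$, the number of surviving voters at the start of round $t$, and show that it is squeezed between $P$ and $n$ while each ``bad'' non-exploration recommendation forces it to shrink by a constant factor. First I would record three structural facts. (i) Weights only move from $1$ to $0$, so $W_t$ is non-increasing and $W_1 = |\mathcal{U}| \le n$. (ii) Because $D = R = 0$, in every round the client likes a unique object $j^*_t$, and every one of the $P$ peers votes on $j^*_t$ (a peer has distance $0$, so it never votes on a disliked object); hence no peer is ever removed by the weight-update step --- a ``dislike'' update removes only voters on the recommended (disliked) object, a ``like'' update removes only voters not on $j^*_t$, and in neither case is a peer touched --- so $W_t \ge P$ for all $t$. (iii) Since each surviving voter votes on exactly one object of $\mathcal{B}_t$, $\sum_{j \in \mathcal{B}_t} x_{j,t} = 1$, so by pigeonhole some object has $x_{j,t} \ge 1/m > \rho$, which makes $\sum_{k \in \mathcal{B}_t}\phi(x_{k,t}) > 0$ and the recommendation distribution well-defined in every round.

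Next I would split the expected loss over the two types of rounds. In an exploration round ($Z_t = 1$, probability $\gamma$) the loss is at most $1$, and the expected number of such rounds is $\gamma T = \tfrac{mT}{3T-1} = O(1)$, so exploration contributes only $O(1)$.

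The heart of the argument is a deterministic bound on the non-exploration rounds. Fix any realization of the algorithm's coins and of the adaptive adversary's choices, and let $B$ be the number of non-exploration rounds in which a disliked object is recommended. In such a round the algorithm draws object $j$ with probability proportional to $\phi(x_{j,t})$, so $\phi(x_{j,t}) > 0$, i.e.\ $x_{j,t} > \rho = \tfrac{1}{2m}$; moreover the feedback is ``dislike'', so the update step removes exactly the $x_{j,t} W_t$ surviving voters on $j$, giving $W_{t+1} = (1 - x_{j,t})\,W_t < (1 - \tfrac{1}{2m})\,W_t$. Every other round has $W_{t+1} \le W_t$. Combining with (i) and (ii), $P \le W_{T+1} \le n\,(1 - \tfrac{1}{2m})^{B}$, hence $B \le \ln(n/P)\big/\!\big(-\ln(1-\tfrac{1}{2m})\big) \le 2m\ln\tfrac{n}{P} = O(\log\tfrac{n}{P})$, using $-\ln(1-x)\ge x$. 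Since this holds in every realization, the expected loss from non-exploration rounds is $O(\log\tfrac{n}{P})$, and adding the $O(1)$ from exploration gives the claimed bound.

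The main obstacle is identifying the right potential and charging scheme rather than any calculation: the point is that the truncation threshold $\rho = \tfrac{1}{2m}$ in $\phi$ is exactly what guarantees a deterministically-recommended disliked object carries a constant fraction of the surviving voters, so $W_t$ drops by a constant factor each time we err, while $D = R = 0$ pins $W_t$ inside $[P, n]$. It is worth emphasizing that the exponential shape of $\phi$ and the value of $\lambda$ play no role in the loss bound (they matter only for privacy); all the loss proof uses about $\phi$ is the equivalence $\phi(x) > 0 \iff x > \rho$. The adaptive adversary requires no extra care because the bound on $B$ is per-realization; the only delicate point is the well-definedness of the recommendation distribution, which is exactly what fact (iii) handles.
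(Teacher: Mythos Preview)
Your proof is correct and follows essentially the same approach as the paper: bound the number of non-exploration loss rounds by a potential argument on the surviving-voter count $W_t$ (which drops by a factor $1-\rho$ at each such round and is pinned in $[P,n]$), and charge exploration rounds by their expected count $\gamma T = O(1)$. Your write-up is more detailed than the paper's (you spell out why peers survive under $D=R=0$, why the recommendation distribution is well-defined, and that the bound on $B$ is per-realization), but the argument is the same.
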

\begin{proof}
First consider all the rounds with $Z_t = 0$. In these rounds, p-REC$_\text{sim}$ kicks out at least $\rho$ fraction of surviving voters whenever the client incurs a loss, because only the objects voted by at least $\rho$ fraction of surviving voters can be recommended. Since the number of surviving voters is initially at most $n$ and always at least $P$, it follows that the number of loss rounds is at most $\log_{\frac{1}{1 - \rho}}{\frac{n}{P}} \le 2m\ln{\frac{n}{P}}$. For those rounds with $Z_t = 1$, since there are $\gamma T$ such rounds on expectation, they cause at most $\gamma T \le \frac{m}{2}$ additional expected loss.
\end{proof}
A high probability result can also be derived by applying Chernoff's bound on the random variables $Z_t$'s.

\paragraph{Privacy analysis of the p-REC$_\text{sim}$ algorithm.} Next, we analyze p-REC$_\text{sim}$'s privacy. We aim to show that
\begin{theorem}\label{theorem:upper-bound-privacy-simple-setting}
If $D = R = 0$, then the p-REC$_\text{sim}$ algorithm preserves $O(\frac{1}{P}\log{T})$-differential privacy, even for an adaptive adversary.
\end{theorem}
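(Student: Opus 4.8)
The plan is to bound the log‑likelihood ratio $\mathcal{E}(b)=\sum_{t=1}^{T}\mathcal{E}_t(b)$ (and its negative) by $O(\tfrac1P\log T)$ for every recommendation sequence $b$ and every pair of adjacent voting patterns; by Definition~\ref{definition:differential-privacy} this is exactly the claim. Since adjacency is symmetric it suffices to treat $\mathcal{U}'=\mathcal{U}\cup\{i_0\}$ for one extra voter $i_0$ and to bound $\sum_t|\mathcal{E}_t(b)|$. I would first do the bookkeeping: running $\mathcal{A}(C,\mathcal{V}\langle\mathcal{U}\rangle)$ and $\mathcal{A}(C,\mathcal{V}\langle\mathcal{U}'\rangle)$ in lockstep along the fixed history $b$, the surviving‑voter sets $S_t,S'_t$ at the start of round $t$ are deterministic functions of $b_1,\dots,b_{t-1}$ (the client $C$, hence all feedback, is fixed); a one‑line induction gives $S'_t\in\{S_t,\,S_t\cup\{i_0\}\}$, and once $i_0$ is kicked out $S'_{t'}=S_{t'}$ and $\mathcal{E}_{t'}(b)=0$ for all later $t'$. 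Because $D=R=0$, the client likes a unique object $\ell_t$ each round and all $\ge P$ peers vote on $\ell_t$, so no peer is ever kicked out: $s_t:=|S_t|\ge P$ and $s_t$ is non‑increasing. The whole privacy budget will be charged to $\lambda/s_t$, which starts below $\lambda/P$ and grows only as $s_t$ shrinks.

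The heart of the argument is the per‑round bound. Fix a round $t$ with $i_0$ alive, let $j^*$ be $i_0$'s vote in round $t$ and $c$ the number of voters in $S_t$ voting on $b_t$. Writing the recommendation probability as $q_{b_t,t}=\tfrac\gamma m+(1-\gamma)\tfrac{\phi(x_{b_t,t})}{\sum_k\phi(x_{k,t})}$ (and likewise $q'_{b_t,t}$), the elementary inequality $\tfrac{a+bx}{a+b}\le\max\{1,x\}$ reduces $\mathcal{E}_t(b)$ to controlling $\tfrac{\phi(x'_{b_t,t})}{\phi(x_{b_t,t})}$ together with the ratio of the two normalizers $\sum_k\phi(x_{k,t})$ and $\sum_k\phi(x'_{k,t})$. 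Lemma~\ref{lemma:phi(x)-property-1} handles the normalizers term by term (only $j^*$'s term grows, all others shrink, since $x'_{j,t}=x_{j,t}\cdot\tfrac{s}{s+1}$ for $j\ne j^*$), and when $\phi(x_{b_t,t})$ and $\phi(x'_{b_t,t})$ are bounded away from $0$, Lemma~\ref{lemma:phi(x)-property-2} gives $\tfrac{\phi(x'_{b_t,t})}{\phi(x_{b_t,t})}\le e^{2\lambda(x'_{b_t,t}-x_{b_t,t})}$ with $|x'_{b_t,t}-x_{b_t,t}|\le\tfrac1{s+1}$. Using that $s_{t+1}$ equals $c$ after a correct recommendation and $s-c$ after a loss, I expect the "generic" rounds to satisfy $\mathcal{E}_t(b)\le \tfrac{C\lambda}{s_{t+1}}-\tfrac{C\lambda}{s_t}$ for a constant $C=C(m)$; then telescoping over the (prefix of) rounds in which $i_0$ is alive gives $\sum_t\mathcal{E}_t(b)\le \tfrac{C\lambda}{s_{T+1}}-\tfrac{C\lambda}{s_1}\le \tfrac{C\lambda}{P}=O(\tfrac1P\log T)$. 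The same must be carried out for $-\mathcal{E}_t(b)$; there the floor $q_{b_t,t}\ge\gamma/m$ does most of the work.

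I expect the main obstacle to be the round in which $b_t$'s vote‑fraction straddles the truncation threshold $\rho$ between the two executions, i.e.\ $\phi(x_{b_t,t})=0<\phi(x'_{b_t,t})$ — which can occur only when $i_0$ votes on $b_t$. Then $q_{b_t,t}=\gamma/m=\Theta(1/T)$ and a crude bound only yields $\mathcal{E}_t(b)=O(\log T)$, far too weak to pay per round. I would resolve this in two steps. (a) $\phi(x_{b_t,t})=0$ forces $c\le\rho s$; when $b_t=\ell_t$ this forces $c\ge P$, hence $s\ge 2mP$ and $s_{t+1}=c$ with $s/c\in[2m,2m+1)$ — so $s$ drops by a factor $\ge 2m$, and such rounds occur at most $\log_{2m}(n/P)=O(\log T)$ times (when $b_t\ne\ell_t$, $i_0$ is kicked out and there are no further nonzero terms). (b) Since $c\le\rho s$ and $\sum_j c_j=s$, convexity of $\phi$ forces some object $j_0\ne b_t$ to be voted by at least $\tfrac{s(1-\rho)}{m-1}$ surviving voters, so $x'_{j_0,t}$ exceeds $\rho$ by a constant and $\sum_k\phi(x'_{k,t})\ge\phi(x'_{j_0,t})\ge\tfrac12T^{\kappa}$ with $\kappa=\tfrac{2m-1}{m-1}(1-o(1))>2$, whereas $x'_{b_t,t}-\rho<\tfrac1{s+1}\le\tfrac1{2mP}$ gives $\phi(x'_{b_t,t})<T^{1+1/P}$. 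Hence $\tfrac{\phi(x'_{b_t,t})}{\sum_k\phi(x'_{k,t})}<2T^{1+1/P-\kappa}$, and because $P\ge 6m$ makes $1/P$ strictly smaller than $\kappa-2$, each such round contributes $\mathcal{E}_t(b)\le\ln\!\bigl(1+(3T-1)\cdot 2T^{1+1/P-\kappa}\bigr)=O\!\bigl(T^{-\Omega(1)}\bigr)$ (indeed $O(\tfrac1P\log T\cdot T^{-\Omega(1)})$ once $c\ge\log T$). Summed over the $O(\log T)$ such rounds this is a lower‑order term $\le O(\tfrac1P\log T)$, and adding it to the telescoping bound completes the proof. (Lemma~\ref{lemma:phi(x)-property-3} is not needed here; it plays the analogous role of aggregating perturbations when the weights are no longer $0/1$, in the general algorithm of Section~\ref{section:p-REC}.)
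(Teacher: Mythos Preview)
Your proposal has a genuine gap in the ``generic'' case, and an unnecessary detour in the ``problematic'' case.

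\textbf{The gap.} Your per-round analysis (via Lemmas~\ref{lemma:phi(x)-property-1} and~\ref{lemma:phi(x)-property-2}) only yields $|\mathcal{E}_t(b)|\le O(\lambda/s_t)$, the same bound as the paper's Lemma~\ref{lemma:single-round-privacy-leakage-simple-setting}. For your telescoping inequality $\mathcal{E}_t(b)\le C\lambda/s_{t+1}-C\lambda/s_t$ to hold with $C=O_m(1)$, you need $1/s_{t+1}-1/s_t\ge\Omega(1/s_t)$ whenever $\mathcal{E}_t(b)\neq 0$, i.e.\ that $s_t$ drops by a \emph{constant factor} in every such round. Merely knowing that $s_{t+1}\in\{c,s-c\}$ is not enough: if $c$ were allowed to be $s-O(1)$ while $|\mathcal{E}_t|=\Theta(\lambda/s)$, the telescoping increment would be $O(1/s^2)$ and the bound would fail by a factor of $s$. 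The missing ingredient is precisely Lemma~\ref{lemma:round-with-non-zero-privacy-leakage-simple-setting}: whenever $\mathcal{E}_t\neq 0$, some object other than $b_t$ must have $\phi>0$ in at least one execution, which forces its vote fraction above $\rho=1/(2m)$ and hence forces $\min\{c,s-c\}\ge s/(3m)$. You have not stated or proven this; it is the crux of the argument, and once you have it, your telescoping and the paper's geometric-series summation are the same computation.

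\textbf{The detour.} The threshold-straddling case $\phi(x_{b_t,t})=0<\phi(x'_{b_t,t})$ does \emph{not} need a separate treatment. The paper absorbs it uniformly: Lemma~\ref{lemma:phi(x)-property-1} gives $\phi(x_{j^*})\le e^{\lambda/W_t}\phi(x'_{j^*})+(e^{\lambda/W_t}-1)e^{\lambda\rho}$ with an additive error, and the parameter choices $\gamma=\tfrac{m}{3T-1}$, $\lambda=2m\ln T$, $\rho=\tfrac{1}{2m}$ are tuned exactly so that this additive error, divided by the normalizer $\ge\phi(1/m)$, is dominated by $(e^{3\lambda/W_t}-1)\cdot\tfrac{\gamma}{m(1-\gamma)}$, which is swallowed by the $\gamma/m$ floor. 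Your elaborate bound on $\phi(x'_{b_t,t})/\sum_k\phi(x'_{k,t})$ via $T^{1+1/P-\kappa}$ is therefore unnecessary. Finally, your parenthetical remark that Lemma~\ref{lemma:phi(x)-property-3} is not needed here is incorrect: the paper uses it in Lemma~\ref{lemma:ratio-of-sum-of-phi} to control the normalizer ratio, which feeds into Lemma~\ref{lemma:single-round-privacy-leakage-simple-setting}.
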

To show Theorem~\ref{theorem:upper-bound-privacy-simple-setting}, consider the executions p-REC$_\text{sim}(C, \mathcal{V}\langle\mathcal{U}\rangle)$ and p-REC$_\text{sim}(C, \mathcal{V}\langle\mathcal{U}'\rangle)$, where $C$ is any client and $\mathcal{V}\langle\mathcal{U}\rangle$ and $\mathcal{V}\langle\mathcal{U}'\rangle$ are any pair of adjacent voting patterns ($\mathcal{U}$ contains one more voter than $\mathcal{U}'$). To show that p-REC$_\text{sim}$ preserves $O(\frac{1}{P}\log{T})$-differential privacy, it is sufficient to show that $|\mathcal{E}(b)| = O(\frac{1}{P}\log{T})$ for any recommendation sequence $b = (b_1, \dotsc, b_T)$. From now on, we will consider a fixed $b$, a fixed $C$ and a fixed pair of $\mathcal{V}\langle\mathcal{U}\rangle$ and $\mathcal{V}\langle\mathcal{U}'\rangle$.

Given the recommendation sequence $b = (b_1, \dotsc, b_T)$, let
\begin{equation*}
W_t(b) = \sum_{i \in \mathcal{U}}\mathsf{weight}[i]
\end{equation*}
be the number of surviving voters at the beginning of round $t$ in execution p-REC$_\text{sim}(C, \mathcal{V}\langle\mathcal{U}\rangle)$, conditioned on that the recommendations in the first $t - 1$ rounds are $b_1, \dotsc, b_{t - 1}$. Since p-REC$_\text{sim}$ never kicks out the client's peers, $W_t(b) \ge P \ge 6m$.

First, we upper-bound the ``privacy leakage'' in each single round:
\begin{lemma}\label{lemma:single-round-privacy-leakage-simple-setting}
For any round $t$, $|\mathcal{E}_t(b)| \le 3\lambda \cdot \frac{1}{W_t(b)}$.
\end{lemma}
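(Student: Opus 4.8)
The plan is to fix the conditioning $b_1,\dots,b_{t-1}$ and compare, object by object, the two round-$t$ conditional recommendation distributions $q_j := \Pr[b_t = j \mid b_1,\dots,b_{t-1}]$ and $q'_j := \Pr'[b_t = j \mid b_1,\dots,b_{t-1}]$, $j \in \mathcal{B}_t$; proving $e^{-3\lambda/W_t(b)} \le q_j/q'_j \le e^{3\lambda/W_t(b)}$ for every $j$ gives the lemma, since $b$ is arbitrary. The first observation is that, once $b_1,\dots,b_{t-1}$ are fixed, each past feedback and hence each \texttt{UpdateWeight} call is determined, so the sets of surviving voters at the start of round $t$ are \emph{deterministic} sets $S_t \supseteq S'_t$ with $S_t \setminus S'_t \subseteq \{v^*\}$, where $v^*$ is the lone voter in $\mathcal{U} \setminus \mathcal{U}'$. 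If $v^* \notin S_t$ the round-$t$ distributions coincide and $\mathcal{E}_t(b) = 0$, so I would assume $|S_t| = W_t := W_t(b)$ and $|S'_t| = W_t - 1$.

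Next I would record the elementary inputs. Writing $x_{j,t}$ and $x'_{j,t}$ for the fractions of $S_t$ and $S'_t$ voting on $j$, and $j^*$ for the single object $v^*$ votes on in round $t$, a short count (bounding the vote count of any $j\neq j^*$ by $W_t-1$, and using that $j^*$ gets at least $v^*$'s vote) yields $|x_{j,t}-x'_{j,t}|\le 1/W_t$ for all $j$, with $x'_{j,t}\ge x_{j,t}$ for $j\neq j^*$ and $x'_{j^*,t}\le x_{j^*,t}$. By pigeonhole some object has fraction $\ge 1/m = 2\rho$, so both normalizers $\Phi := \sum_k\phi(x_{k,t})$ and $\Phi'$ are at least $\phi(1/m) = e^{\lambda/m}-e^{\lambda\rho}$, which with the chosen parameters equals $T^2-T$ and will dominate every error term. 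Summing Lemma~\ref{lemma:phi(x)-property-1} over objects and dividing by this lower bound shows $\Phi/\Phi'$ and $\Phi'/\Phi$ are both at most $e^{\lambda/W_t}(1+o(1))$. Finally, since each $q_j$ is the \emph{same} uniform floor $\gamma/m$ plus $(1-\gamma)\phi(x_{j,t})/\Phi$, the elementary inequality $\tfrac{c+a}{c+b}\le a/b$ (valid for $c>0$, $a\ge b>0$) reduces bounding $q_j/q'_j$ to bounding $\tfrac{\phi(x_{j,t})}{\phi(x'_{j,t})}\cdot\tfrac{\Phi'}{\Phi}$ whenever $\phi(x'_{j,t})>0$, and symmetrically; the floor $\gamma/m$ being identical only ever helps.

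With these in hand I would split into cases according to where the relevant fractions sit relative to the cutoff $\rho$. (i) If both $x_{j,t}$ and $x'_{j,t}$ are at least $\rho+\tfrac{\ln 2}{\lambda}$, Lemma~\ref{lemma:phi(x)-property-2} gives $\tfrac{\phi(\cdot)}{\phi(\cdot')}\le e^{2\lambda/W_t}$, and multiplying by the normalizer ratio gives $q_j/q'_j\le e^{2\lambda/W_t}\cdot e^{\lambda/W_t} = e^{3\lambda/W_t}$; for the direction where one $\phi$ is positive but small compared with its partner one instead uses $(2e^{u}-1)e^{u}\le e^{3u}$ (which reduces to $e^{u}(e^{u}-1)^2\ge 0$). (ii) If one of the two fractions is $\le\rho$, then by the $1/W_t$-closeness both lie within $O(1/W_t)$ of $\rho$, so $\phi(\cdot)\le e^{\lambda\rho}(e^{\lambda/W_t}-1)$; since one of $q_j,q'_j$ is then exactly $\gamma/m$, we get $q_j/q'_j\le 1+\tfrac{m(1-\gamma)}{\gamma}\cdot\tfrac{\phi(\cdot)}{\Phi}\le 1+\tfrac{3T-1-m}{T-1}(e^{\lambda/W_t}-1)\le 1+3(e^{\lambda/W_t}-1)\le e^{3\lambda/W_t}$, where $\tfrac{3T-1-m}{T-1}\le 3$ is precisely where $m\ge 2$ is used. (iii) In the remaining ``sliver'' case, some fraction lies in $(\rho,\rho+\tfrac{\ln 2}{\lambda})$; I would keep that position as a parameter $s$, observe that $q_j/q'_j$ is monotone in $s$, and conclude that its supremum is attained as $s\to 0^+$, which reproduces the boundary bound of case (ii).

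The step I expect to be the main obstacle is exactly case (iii) together with the $x=\rho$ boundary: there Lemma~\ref{lemma:phi(x)-property-2} does not apply (it needs the argument to exceed $\rho+\tfrac{\ln 2}{\lambda}$), and a crude estimate that bounds $\phi(x_{j,t})$ at the far end of the sliver while bounding $\phi(x'_{j,t})$ by $0$ overshoots $e^{3\lambda/W_t}$ — the fix is to treat the ratio $q_j/q'_j$ jointly rather than bounding the numerator and $\Phi'/\Phi$ separately, exploiting the monotonicity in $s$ so that the extremal configuration is the already-handled boundary. Driving the constant down to exactly $3$ (rather than $4$ or worse) is what forces the precise calibration $\gamma=\tfrac{m}{3T-1}$, $\lambda=2m\ln T$, $\rho=\tfrac1{2m}$, together with $m\ge 2$ and $P\ge 6m$ (the last so that $W_t\ge 6m$, keeping $\lambda/W_t\le\tfrac13\ln T$ so that $\Phi\ge T^2-T$ swamps the Lemma~\ref{lemma:phi(x)-property-1} error terms).
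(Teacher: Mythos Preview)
Your setup matches the paper's: condition on $b_{<t}$, assume the extra voter $v^*$ still survives (else $\mathcal{E}_t(b)=0$), record $|x_j-x'_j|\le 1/W_t$ and $\Phi,\Phi'\ge\phi(1/m)$. The decomposition after that is different. The paper does \emph{not} split on where $x_j,x'_j$ sit relative to $\rho$; it applies Lemma~\ref{lemma:phi(x)-property-1} once to the numerator, obtaining the \emph{additive} form $\phi(x_{j^*})\le e^{\lambda/W_t}\phi(x'_{j^*})+(e^{\lambda/W_t}-1)e^{\lambda\rho}$, and proves separately (Lemma~\ref{lemma:ratio-of-sum-of-phi}, via Lemmas~\ref{lemma:phi(x)-property-2} and~\ref{lemma:phi(x)-property-3} and a case analysis on which object $v^*$ voted for) that $e^{-2\lambda/W_t}\le\Phi/\Phi'\le e^{2\lambda/W_t}$. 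Combining gives $\phi(x_{j^*})/\Phi\le e^{3\lambda/W_t}\phi(x'_{j^*})/\Phi'+(e^{\lambda/W_t}-1)e^{\lambda\rho}/\phi(1/m)$, and the additive error is absorbed by the floor term $(e^{3\lambda/W_t}-1)\gamma/(m(1-\gamma))$ --- this is precisely what the calibration of $\gamma$ is for.

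Your plan swaps the roles: Lemma~\ref{lemma:phi(x)-property-2} on the numerator ($e^{2\lambda/W_t}$) and only $e^{\lambda/W_t}$ on the normalizer. Two things break. First, because Lemma~\ref{lemma:phi(x)-property-2} is purely multiplicative, once you drop the floor via $(c+a)/(c+b)\le a/b$ there is no slack left to absorb the $(1+o(1))$ your normalizer estimate carries; the product is $e^{3\lambda/W_t}(1+o(1))$, not $e^{3\lambda/W_t}$. Second, in your case~(iii) the ratio $\phi(x_j)/\phi(x'_j)=e^{\lambda/W_t}+(e^{\lambda/W_t}-1)e^{\lambda\rho}/\phi(x'_j)$ is genuinely unbounded as $x'_j\downarrow\rho$, so the floor cannot be dropped there, and your monotonicity-in-$s$ reduction is not well posed ($\Phi$ and $\Phi'$ also move with the fractions; it is not clear what is held fixed). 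Ironically, your crude normalizer bound is stronger than you use it for: summing Lemma~\ref{lemma:phi(x)-property-1} and using $\Phi\ge\phi(1/m)$ already yields $\Phi'/\Phi\le e^{\lambda/W_t}+m(e^{\lambda/W_t}-1)/(T-1)\le e^{2\lambda/W_t}$ (the last step is just $m/(T-1)\le 1\le e^{\lambda/W_t}$), which is the conclusion of the paper's Lemma~\ref{lemma:ratio-of-sum-of-phi} without its case analysis. Had you kept the floor and used Lemma~\ref{lemma:phi(x)-property-1} (not Lemma~\ref{lemma:phi(x)-property-2}) on the numerator, the argument would close with constant exactly~$3$ and no case split on $j$.
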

Lemma~\ref{lemma:single-round-privacy-leakage-simple-setting} upper-bounds the ``privacy leakage'' in any single round. This lemma would be easy to prove if we were to recommend object $j$ in round $t$ with probability $\propto \exp(\lambda x_{j, t})$, because if there were $W_t(b)$ surviving voters, then the values of $x_{j, t}$ could only differ by $\frac{1}{W_t(b)}$ in the executions of p-REC$_\text{sim}(C, \mathcal{V}\langle\mathcal{U}\rangle)$ and p-REC$_\text{sim}(C, \mathcal{V}\langle\mathcal{U}'\rangle)$. By some relatively straightforward (but rather tedious) calculations, it can be shown that Lemma~\ref{lemma:single-round-privacy-leakage-simple-setting} is also true if we recommend object $j$ in round $t$ with probability $\gamma \cdot \frac{1}{m} + (1 - \gamma) \cdot \frac{\phi(x_{j, t})}{\sum_{k \in \mathcal{B}_t}{\phi(x_{k, t})}}$. Roughly speaking, this is because that the distortion $\gamma$ is small, and $\phi(x) \approx e^{\lambda x}$.
\begin{proof}[Proof of Lemma~\ref{lemma:single-round-privacy-leakage-simple-setting}]
Recall that $\mathcal{U}$ contains one more voter than $\mathcal{U}'$. If the extra voter in $\mathcal{U}$ has zero weight in the $t$th round (i.e., the extra voter has been kicked out), then we have $\Pr[b_t | b_1, \dotsc, b_{t - 1}] = \Pr'[b_t | b_1, \dotsc, b_{t - 1}]$ and the proof is done. In the remaining parts of this proof, we assume that the extra voter has $1$ weight.

Consider the $t$th round. Index the $m$ candidate objects in $\mathcal{B}_t$ by $\{1, \dotsc, m\}$. We use $x_j$ to denote the fraction of surviving voters voting on the $j$th object in the execution p-REC$_\text{sim}(C, \mathcal{V}\langle\mathcal{U}\rangle)$, where $j = 1, \dotsc, m$. Similarly, we use $x_j'$ to denote the corresponding fraction in the execution p-REC$_\text{sim}(C, \mathcal{V}\langle\mathcal{U}'\rangle)$. Since there are $W_t(b)$ surviving voters in the execution p-REC$_\text{sim}(C, \mathcal{V}\langle\mathcal{U}\rangle)$, the fractions $x_j$ and $x_j'$ can differ by at most $\frac{1}{W_t(b)}$, i.e., $|x_j - x_j'| \le \frac{1}{W_t(b)}$ for any $j = 1, \dotsc, m$.

Suppose the index of the recommended object $b_t$ is $j^*$, according to the p-REC$_\text{sim}$ algorithm,
\begin{equation*}
\Pr[b_t | b_1, \dotsc, b_{t - 1}] = \gamma \cdot \frac{1}{m} + (1 - \gamma) \cdot \frac{\phi(x_{j^*})}{\sum_{j = 1}^{m}{\phi(x_j)}}
\end{equation*}
and
\begin{equation*}
\Pr{}'[b_t | b_1, \dotsc, b_{t - 1}] = \gamma \cdot \frac{1}{m} + (1 - \gamma) \cdot \frac{\phi(x_{j^*}')}{\sum_{j = 1}^{m}{\phi(x_j')}}.
\end{equation*}
For simplicity, let us write $W_t(b)$ as $W_t$ hereafter in this proof. Our goal is to show that $e^{-3\lambda / W_t} \le \frac{\Pr[b_t | b_1, \dotsc, b_{t - 1}]}{\Pr'[b_t | b_1, \dotsc, b_{t - 1}]} \le e^{3\lambda / W_t}$, which is equivalent to
\begin{equation}\label{equation:upper-bound-equation-1}
\begin{cases}
\frac{\phi(x_{j^*})}{\sum_{j = 1}^{m}{\phi(x_j)}} \le e^{3\lambda / W_t}\frac{\phi(x_{j^*}')}{\sum_{j = 1}^{m}{\phi(x_j')}} + \frac{(e^{3\lambda / W_t} - 1)\gamma}{m(1 - \gamma)},\\
\frac{\phi(x_{j^*}')}{\sum_{j = 1}^{m}{\phi(x_j')}} \le e^{3\lambda / W_t}\frac{\phi(x_{j^*})}{\sum_{j = 1}^{m}{\phi(x_j)}} + \frac{(e^{3\lambda / W_t} - 1)\gamma}{m(1 - \gamma)}.
\end{cases}
\end{equation}

Recall that $|x_{j^*} - x_{j^*}'| \le \frac{1}{W_t}$, by Lemma~\ref{lemma:phi(x)-property-1}, we have
\begin{equation*}
\phi(x_{j^*}) \le e^{\lambda / W_t}\phi(x_{j^*}') + (e^{\lambda / W_t} - 1)e^{\lambda \rho}.
\end{equation*}
We also have $\frac{\sum_{j = 1}^{m}{\phi(x_j)}}{\sum_{j = 1}^{m}{\phi(x_j')}} \ge e^{-2\lambda / W_t}$ (see below, Lemma~\ref{lemma:ratio-of-sum-of-phi}). It then follows that
\begin{align*}
\frac{\phi(x_{j^*})}{\sum_{j = 1}^{m}{\phi(x_j)}} & \le \frac{e^{\lambda / W_t}\phi(x_{j^*}') + (e^{\lambda / W_t} - 1)e^{\lambda \rho}}{\sum_{j = 1}^{m}{\phi(x_j)}}\\
& = \frac{e^{\lambda / W_t}\phi(x_{j^*}')}{\sum_{j = 1}^{m}{\phi(x_j)}} + \frac{(e^{\lambda / W_t} - 1)e^{\lambda \rho}}{\sum_{j = 1}^{m}{\phi(x_j)}}\\
& \le \frac{e^{\lambda / W_t}\phi(x_{j^*}')}{e^{-2\lambda / W_t}\sum_{j = 1}^{m}{\phi(x_j')}} + \frac{(e^{\lambda / W_t} - 1)e^{\lambda \rho}}{\sum_{j = 1}^{m}{\phi(x_j)}}\\
& \le e^{3\lambda / W_t}\frac{\phi(x_{j^*}')}{\sum_{j = 1}^{m}{\phi(x_j')}} + \frac{(e^{\lambda / W_t} - 1)e^{\lambda \rho}}{\phi(\frac{1}{m})}.
\end{align*}
Notice in the last step, we use the fact that $\sum_{j = 1}^{m}{\phi(x_j)} \ge \phi(\frac{1}{m})$ because there always exists one $j$ with $x_j \ge \frac{1}{m}$. One can verify that $\frac{(e^{\lambda / W_t} - 1)e^{\lambda \rho}}{\phi(\frac{1}{m})} \le \frac{(e^{3\lambda / W_t} - 1)\gamma}{m(1 - \gamma)}$ by our choice of $\gamma$ and $\lambda$, hence we have proved the first inequality of \eqref{equation:upper-bound-equation-1}. We can show the second inequality of \eqref{equation:upper-bound-equation-1} in a completely symmetric way and prove this lemma.
\end{proof}

In the proof of Lemma~\ref{lemma:single-round-privacy-leakage-simple-setting}, we use the following lemma:
\begin{lemma}\label{lemma:ratio-of-sum-of-phi}
$e^{-2\lambda / W_t} \le \frac{\sum_{j = 1}^{m}{\phi(x_j)}}{\sum_{j = 1}^{m}{\phi(x_j')}} \le e^{2\lambda / W_t}$ ($x_j$'s and $x_j'$'s are defined in the proof of Lemma~\ref{lemma:single-round-privacy-leakage-simple-setting}).
\end{lemma}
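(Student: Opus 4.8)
The plan is to compare the two fraction vectors $(x_j)$ and $(x_j')$ coordinate by coordinate, and to absorb the ``threshold'' error produced by Lemma~\ref{lemma:phi(x)-property-1} into the floor $\sum_j\phi(x_j)\ge\phi(1/m)$ that the proof of Lemma~\ref{lemma:single-round-privacy-leakage-simple-setting} already records.

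First I would pin down the combinatorics. By the setup of that proof, $\mathcal{U}$ has exactly one extra voter over $\mathcal{U}'$, this voter survives in round $t$ (otherwise the two vectors coincide and the lemma is trivial), and it votes on a single object $j_0$. Writing $c_j$ for the number of surviving voters on object $j$ in the $\mathcal{V}\langle\mathcal{U}'\rangle$-execution (so $\sum_j c_j = W_t-1$), we get $x_j = c_j/W_t$, $x_j' = c_j/(W_t-1)$ for $j\ne j_0$, and $x_{j_0} = (c_{j_0}+1)/W_t$, $x_{j_0}' = c_{j_0}/(W_t-1)$. From this I read off four facts: $x_j\le x_j'$ for every $j\ne j_0$; $x_{j_0}\ge x_{j_0}'$; $|x_j-x_j'|\le 1/W_t$ for all $j$; and — using $\sum_j x_j=\sum_j x_j'=1$ — the aggregate identity $\sum_{j\ne j_0}(x_j'-x_j)=x_{j_0}-x_{j_0}'\le 1/W_t$.

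For the upper bound $\sum_j\phi(x_j)\le e^{2\lambda/W_t}\sum_j\phi(x_j')$ I would use monotonicity of $\phi$ on the coordinates $j\ne j_0$ (so $\phi(x_j)\le\phi(x_j')$) and Lemma~\ref{lemma:phi(x)-property-1} on $j_0$ alone, obtaining $\sum_j\phi(x_j)\le e^{\lambda/W_t}\sum_j\phi(x_j')+(e^{\lambda/W_t}-1)e^{\lambda\rho}$. For the reverse bound $\sum_j\phi(x_j')\le e^{2\lambda/W_t}\sum_j\phi(x_j)$ I would instead apply Lemma~\ref{lemma:phi(x)-property-1} on every $j\ne j_0$ and estimate each additive term by $(e^{\lambda(x_j'-x_j)}-1)e^{\lambda\rho}\le\lambda(x_j'-x_j)e^{\lambda/W_t}e^{\lambda\rho}$ (using $e^u-1\le ue^u$ for $u\ge0$); summing, the aggregate identity collapses all of them into a single term $\le(\lambda/W_t)e^{\lambda/W_t}e^{\lambda\rho}\le(e^{\lambda/W_t}-1)e^{\lambda/W_t}e^{\lambda\rho}$ (using $\lambda/W_t\le e^{\lambda/W_t}-1$), and together with $\phi(x_{j_0}')\le\phi(x_{j_0})$ this yields $\sum_j\phi(x_j')\le e^{\lambda/W_t}\sum_j\phi(x_j)+(e^{2\lambda/W_t}-e^{\lambda/W_t})e^{\lambda\rho}$.

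Thus both inequalities have the form $A\le e^{\lambda/W_t}B+(e^{2\lambda/W_t}-e^{\lambda/W_t})e^{\lambda\rho}$, where $\{A,B\}=\{\sum_j\phi(x_j),\sum_j\phi(x_j')\}$ in the appropriate order (for the first one I inflate its error by the factor $e^{\lambda/W_t}\ge1$). It then remains to absorb the slack into $(e^{2\lambda/W_t}-e^{\lambda/W_t})B$, i.e.\ to check $e^{\lambda\rho}\le B$. Since some object always carries fraction at least $1/m>\rho$, we have $B\ge\phi(1/m)=e^{\lambda/m}-e^{\lambda\rho}$, and the algorithm's settings $\lambda=2m\ln T$, $\rho=1/(2m)$ make this $T^2-T$, which is $\ge T=e^{\lambda\rho}$ for $T\ge2$, as needed. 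I expect the only genuinely delicate point to be the reverse direction: ensuring that the $\Theta(m)$ separate errors coming out of Lemma~\ref{lemma:phi(x)-property-1} telescope into a quantity proportional to $\sum_{j\ne j_0}(x_j'-x_j)$ rather than to $m\cdot\max_j(x_j'-x_j)$ — which is exactly why the sign structure and the aggregate identity must be set up first; everything else is routine.
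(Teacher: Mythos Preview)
Your argument is correct, and it is a genuinely different route from the paper's. The paper splits into cases according to whether the extra voter sits on the object $\ell$ of maximum fraction, and then handles each case by combining Lemma~\ref{lemma:phi(x)-property-2} (the ratio bound $\phi(x)/\phi(x')\le e^{2\lambda(x-x')}$ above the threshold $\rho+\ln 2/\lambda$) with Lemma~\ref{lemma:phi(x)-property-3} (the rearrangement that shifts increments onto the largest coordinate). You avoid both lemmas entirely: you use only Lemma~\ref{lemma:phi(x)-property-1} together with the elementary bound $e^{u}-1\le ue^{u}$, and the sign pattern $x_j\le x_j'$ for $j\ne j_0$ plus the aggregate identity $\sum_{j\ne j_0}(x_j'-x_j)=x_{j_0}-x_{j_0}'\le 1/W_t$ lets the $m-1$ additive errors telescope into a single one. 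What your approach buys is uniformity (no case split, no auxiliary lemmas beyond Lemma~\ref{lemma:phi(x)-property-1}); what the paper's buys is a slightly sharper constant in one direction (in Case~1 it actually gets $\sum_j\phi(x_j)\ge\sum_j\phi(x_j')$) and a cleaner abstract hypothesis ($x_\ell'\ge\rho+\ln 2/\lambda$) in place of your numerical check $e^{\lambda\rho}\le\phi(1/m)$, though both ultimately rely on the specific parameter choices $\lambda=2m\ln T$, $\rho=1/(2m)$.
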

\begin{proof}
Suppose the extra voter votes on the object with index $k$. We then have $x_k > x_k'$, and $x_j < x_j'$ for all the other $j \neq k$. Since $\phi(x)$ is non-decreasing, it then follows that $\phi(x_k) \ge \phi(x_k')$, and $\phi(x_j) \le \phi(x_j')$ for all the other $j \neq k$.

Let $\theta_j = |x_j - x_j'|$, we have
\begin{equation*}
\sum_{j \neq k}{\theta_j} = \theta_k \le \frac{1}{W_t} \le \frac{1}{P} \le \frac{1}{6m}.
\end{equation*}
Let $\ell = \arg\max_j\{x_1', \dotsc, x_j', \dotsc, x_m'\}$, then $x_\ell' \ge \frac{1}{m}$ and $x_\ell \ge x_\ell' - \theta_l \ge x_\ell' - \frac{1}{6m} \ge \frac{5}{6} \cdot \frac{1}{m}$. Depending on whether the extra voter votes on the object with index $\ell$ or not, we have two cases:
\begin{itemize}
\item \textit{Case~1:} $k = \ell$. Partition all the $m$ objects into two disjoint subsets: $\{\ell\}$ and $A = \{1, \dotsc, m\} \setminus \{\ell\}$. We have that $\phi(x_\ell) \ge \phi(x_\ell')$ (because $\ell = k$) and $\sum_{j \in A}{\phi(x_j)} \le \sum_{j \in A}{\phi(x_j')}$ (because $\forall j \in A: \phi(x_j) \le \phi(x_j')$). We then have
\begin{align*}
\frac{\sum_{j = 1}^{m}{\phi(x_j)}}{\sum_{j = 1}^{m}{\phi(x_j')}} & = \frac{\phi(x_\ell) + \sum_{j \in A}{\phi(x_j)}}{\phi(x_\ell') + \sum_{j \in A}{\phi(x_j')}}\\
& \le \frac{\phi(x_\ell)}{\phi(x_\ell')}\notag\\
& \le e^{2\lambda / W_t}. && \tag{by Lemma~\ref{lemma:phi(x)-property-2}}
\end{align*}
For the other direction, it follows that
\begin{align*}
\frac{\sum_{j = 1}^{m}{\phi(x_j)}}{\sum_{j = 1}^{m}{\phi(x_j')}} & = \frac{\phi(x_\ell) + \sum_{j \in A}{\phi(x_j)}}{\phi(x_\ell') + \sum_{j \in A}{\phi(x_j')}}\\
& = \frac{\phi(x_\ell) + \sum_{j \in A}{\phi(x_j)}}{\phi(x_\ell') + \sum_{j \in A}{\phi(x_j + \theta_j)}}\\
& \ge \frac{\phi(x_\ell) + \sum_{j \in A}{\phi(x_j)}}{\phi(x_\ell' + \sum_{j \in A}{\theta_j}) + \sum_{j \in A}{\phi(x_j)}} && \tag{by Lemma~\ref{lemma:phi(x)-property-3}}\\
& = \frac{\sum_{j \in A}{\phi(x_j)} + \phi(x_\ell)}{\sum_{j \in A}{\phi(x_j)} + \phi(x_\ell)} && \tag{since $\sum_{j \in A}{\theta_j = \theta_\ell = x_\ell - x_\ell'}$}\\
& = 1.
\end{align*}

\item \textit{Case~2:} $k \neq \ell$. Partition all the $m$ objects into three disjoint subsets: $\{k\}$, $\{\ell\}$ and $ B = \{1, \dotsc, m\} \setminus \{k, \ell\}$. Since $\phi(x_j) \le \phi(x_j')$ for any $j \in B$, we have $\sum_{j \in B}{\phi(x_j)} \le \sum_{j \in B}{\phi(x_j')}$. It then follows that
\begin{align*}
\frac{\sum_{j = 1}^{m}{\phi(x_j)}}{\sum_{j = 1}^{m}{\phi(x_j')}} & = \frac{\phi(x_k) + \phi(x_\ell) + \sum_{j \in B}{\phi(x_j)}}{\phi(x_k') + \phi(x_\ell') + \sum_{j \in B}{\phi(x_j')}}\\
& \le \begin{cases}
1 & \text{if $\frac{\phi(x_k) + \phi(x_\ell)}{\phi(x_k') + \phi(x_\ell')} \le 1$},\\
\frac{\phi(x_k) + \phi(x_\ell)}{\phi(x_k') + \phi(x_\ell')} & \text{if $\frac{\phi(x_k) + \phi(x_\ell)}{\phi(x_k') + \phi(x_\ell')} > 1$}.
\end{cases}
\end{align*}
Thus, to upper-bound $\frac{\sum_{j = 1}^{m}{\phi(x_j)}}{\sum_{j = 1}^{m}{\phi(x_j')}}$, it is sufficient to show an upper bound for $\frac{\phi(x_k) + \phi(x_\ell)}{\phi(x_k') + \phi(x_\ell')}$. There are two subcases to consider:
\begin{itemize}
\item $x_k' \le x_\ell$. By Lemma~\ref{lemma:phi(x)-property-3}, it follows that
\begin{align}\label{equation:upper-bound-equation-2}
\frac{\phi(x_k) + \phi(x_\ell)}{\phi(x_k') + \phi(x_\ell')} & = \frac{\phi(x_k' + \theta_k) + \phi(x_\ell)}{\phi(x_k') + \phi(x_\ell')}\notag\\
& \le \frac{\phi(x_k') + \phi(x_\ell + \theta_k)}{\phi(x_k') + \phi(x_\ell')}.
\end{align}
Notice that $0 \le (x_\ell + \theta_k) - x_\ell' = \sum_{j \in B}{\theta_j} \le \frac{1}{W_t}$, and $x_\ell' \ge \rho + \frac{\ln{2}}{\lambda}$, according to Lemma~\ref{lemma:phi(x)-property-2}, we have
\begin{equation*}
1 \le \frac{\phi(x_\ell + \theta_k)}{\phi(x_\ell')} \le e^{2\lambda((x_\ell + \theta_k) - x_\ell')} \le e^{2\lambda / W_t}.
\end{equation*}
Substitute this to \eqref{equation:upper-bound-equation-2}, we get $\frac{\sum_{j = 1}^{m}{\phi(x_j)}}{\sum_{j = 1}^{m}{\phi(x_j')}} \le e^{2\lambda / W_t}$.

\item $x_k' > x_\ell$. In this case, since $\phi(x_k) \ge \phi(x_k') \ge \phi(x_\ell)$, $\phi(x_k') \le \phi(x_\ell')$, we have
\begin{equation}\label{equation:upper-bound-equation-3}
\frac{\phi(x_k) + \phi(x_\ell)}{\phi(x_k') + \phi(x_\ell')} \le \frac{\phi(x_k) + \phi(x_k)}{\phi(x_k') + \phi(x_k')} = \frac{\phi(x_k)}{\phi(x_k')}.
\end{equation}
Notice that $x_k' > x_\ell \ge \frac{5}{6} \cdot \frac{1}{m} \ge \rho + \frac{\ln{2}}{\lambda}$, by Lemma~\ref{lemma:phi(x)-property-2}, we have
\begin{equation*}
\frac{\phi(x_k)}{\phi(x_k')} \le e^{2\lambda(x_k - x_k')} \le e^{2\lambda / W_t}.
\end{equation*}
Substitute this to \eqref{equation:upper-bound-equation-3}, we get $\frac{\sum_{j = 1}^{m}{\phi(x_j)}}{\sum_{j = 1}^{m}{\phi(x_j')}} \le e^{2\lambda / W_t}$.
\end{itemize}

Now we have shown one direction of the desired inequality in \textit{Case~2}, for the other direction, it follows that
\begin{align}
\frac{\sum_{j = 1}^{m}{\phi(x_j)}}{\sum_{j = 1}^{m}{\phi(x_j')}} & = \frac{\phi(x_k) + \phi(x_\ell) + \sum_{j \in B}{\phi(x_j)}}{\phi(x_k') + \phi(x_\ell') + \sum_{j \in B}{\phi(x_j')}}\notag\\
& \ge \frac{\phi(x_\ell) + \sum_{j \in B}{\phi(x_j)}}{\phi(x_\ell') + \sum_{j \in B}{\phi(x_j')}} && \tag{since $\phi(x_k) \ge \phi(x_k')$}\notag\\
& = \frac{\phi(x_\ell) + \sum_{j \in B}{\phi(x_j)}}{\phi(x_\ell') + \sum_{j \in B}{\phi(x_j + \theta_j)}}\notag\\
& \ge \frac{\phi(x_\ell) + \sum_{j \in B}{\phi(x_j)}}{\phi(x_\ell' + \sum_{j \in B}{\theta_j}) + \sum_{j \in B}{\phi(x_j)}}.\label{equation:upper-bound-equation-4}
\end{align}
The last step is due to Lemma~\ref{lemma:phi(x)-property-3}. Notice that $0 \le (x_\ell' + \sum_{j \in B}{\theta_j}) - x_\ell = \sum_{j \in B \cup \{\ell\}}{\theta_j} = \theta_k \le \frac{1}{W_t}$, and $x_\ell' \ge \rho + \frac{\ln{2}}{\lambda}$, according to Lemma~\ref{lemma:phi(x)-property-2}, we have
\begin{equation*}
1 \le \frac{\phi(x_\ell' + \sum_{j \in B}{\theta_j})}{\phi(x_\ell)} \le e^{2\lambda / W_t}.
\end{equation*}
Substitute this to \eqref{equation:upper-bound-equation-4}, we get $\frac{\sum_{j = 1}^{m}{\phi(x_j)}}{\sum_{j = 1}^{m}{\phi(x_j')}} \ge e^{-2\lambda / W_t}$.
\end{itemize}
The proof of Lemma~\ref{lemma:ratio-of-sum-of-phi} is done because we have shown $e^{-2\lambda / W_t} \le \frac{\sum_{j = 1}^{m}{\phi(x_j)}}{\sum_{j = 1}^{m}{\phi(x_j')}} \le e^{2\lambda / W_t}$ in both \textit{Case~1} and \textit{Case~2}.
\end{proof}

Next, we show that a constant fraction of surviving voters are kicked out whenever there is non-zero ``privacy leakage:''
\begin{lemma}\label{lemma:round-with-non-zero-privacy-leakage-simple-setting}	For any round $t$, if $|\mathcal{E}_t(b)| \neq 0$, then $W_{t + 1}(b) \le W_t(b) \cdot (1 - \frac{1}{3m})$.
\end{lemma}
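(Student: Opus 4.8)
The plan is to prove the contrapositive: assuming $W_{t+1}(b) > (1 - \tfrac{1}{3m})W_t(b)$, I will show $\mathcal{E}_t(b) = 0$, i.e.\ that object $b_t$ is recommended with exactly the same probability in round $t$ of p-REC$_\text{sim}(C, \mathcal{V}\langle\mathcal{U}\rangle)$ and p-REC$_\text{sim}(C, \mathcal{V}\langle\mathcal{U}'\rangle)$ conditioned on the common history $b_1, \dotsc, b_{t-1}$. Write $i_0$ for the single voter in $\mathcal{U} \setminus \mathcal{U}'$. Since \texttt{UpdateWeight} removes voters deterministically as a function of the recommended object and the client's (fixed) feedback, and the two voting patterns agree on every voter except $i_0$, the surviving voters at the start of round $t$ form one common set $S$ in the $\mathcal{U}'$-execution, and form either $S$ or $S \cup \{i_0\}$ in the $\mathcal{U}$-execution, depending on whether $i_0$ has already been removed. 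If $i_0$ has been removed, the two executions have identical surviving sets at round $t$, hence identical vote fractions $x_{j,t}$, hence $\mathcal{E}_t(b) = 0$ and we are done. So from now on I assume $i_0$ survives, so that $W_t(b) = |S| + 1$ and the $\mathcal{U}'$-execution has $W_t(b) - 1$ surviving voters at round $t$; recall also $W_t(b) \ge P \ge 6m$.

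Next I would split on the client's feedback in round $t$, which is meaningful because $D = 0$ forces the client to like exactly one object $o^*$ in round $t$. If the feedback is ``dislike'' then $b_t \neq o^*$ and \texttt{UpdateWeight} kills precisely the $\mathcal{U}$-surviving voters who voted on $b_t$; the hypothesis $W_{t+1}(b) > (1-\tfrac{1}{3m})W_t(b)$ then forces fewer than a $\tfrac{1}{3m}$ fraction of those voters to have voted on $b_t$, i.e.\ $x_{b_t,t} < \tfrac{1}{3m}$. If the feedback is ``like'' then $b_t = o^*$ and \texttt{UpdateWeight} kills everyone who did \emph{not} vote on $b_t$; the same hypothesis forces more than a $(1 - \tfrac{1}{3m})$ fraction to have voted on $b_t$, so every object $j \neq b_t$ has $x_{j,t} < \tfrac{1}{3m}$. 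In both cases the relevant vote fractions lie strictly below $\tfrac{1}{3m}$, which is strictly below the truncation threshold $\rho = \tfrac{1}{2m}$, so $\phi$ vanishes on them in the $\mathcal{U}$-execution; and in the ``like'' case $x_{b_t,t} > 1 - \tfrac{1}{3m} > \tfrac12 \ge \rho$, so $\phi(x_{b_t,t}) > 0$.

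The key step, and the only genuinely fiddly one, is to transfer this information to the $\mathcal{U}'$-execution, where vote fractions are normalized by $W_t(b) - 1$ rather than $W_t(b)$. For any object $j$, deleting $i_0$ can only decrease the count of voters on $j$ and decreases the total count by exactly one, so the $\mathcal{U}'$-fraction of $j$ is at most $\tfrac{x_{j,t}\,W_t(b)}{W_t(b) - 1}$; since $W_t(b) \ge 6m \ge 3$ and $x_{j,t} < \tfrac{1}{3m}$, this stays below $\tfrac{1}{2m} = \rho$, so $\phi$ of it is still $0$. Symmetrically, a fraction exceeding $1 - \tfrac{1}{3m}$ in the $\mathcal{U}$-execution stays above $\rho$ in the $\mathcal{U}'$-execution (here one pins down the slack using $W_t(b) \ge 6m$ and $m \ge 1$). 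Combining these with the pigeonhole fact that in each execution some object receives at least a $\tfrac1m > \rho$ fraction, so that $\sum_j \phi(x_{j,t}) > 0$ in both, I conclude: in the ``dislike'' case $\phi(x_{b_t,t}) = 0$ in both executions, so $b_t$ is recommended with probability exactly $\gamma/m$ in both; in the ``like'' case $\sum_j \phi(x_{j,t})$ equals the single positive term $\phi(x_{b_t,t})$ in both executions, so $b_t$ is recommended with probability exactly $\gamma/m + (1-\gamma)$ in both. In either case $\Pr[b_t \mid b_1,\dotsc,b_{t-1}] = \Pr'[b_t \mid b_1,\dotsc,b_{t-1}]$, hence $\mathcal{E}_t(b) = 0$, completing the contrapositive. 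I expect the denominator-shift bookkeeping in this last paragraph to be the main obstacle; everything else reduces to reading off \texttt{UpdateWeight} and comparing $\tfrac{1}{3m}$ with $\rho = \tfrac{1}{2m}$.
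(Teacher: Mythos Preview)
Your argument is correct and rests on the same idea as the paper's proof: the truncation of $\phi$ at $\rho=\tfrac{1}{2m}$ together with the fact that vote fractions in the two executions differ by at most $\tfrac{1}{W_t(b)}\le\tfrac{1}{6m}$, so a fraction below $\tfrac{1}{3m}$ in one execution forces $\phi=0$ in both. The only organizational difference is that the paper argues directly (from $|\mathcal{E}_t(b)|\neq 0$ it extracts an object---either $b_t$ in the ``dislike'' case or some auxiliary $\xi\neq b_t$ in the ``like'' case---whose fraction must exceed $\tfrac{1}{3m}$, and those voters get removed), whereas you run the contrapositive and in the ``like'' case show all $\phi(x_{j,t})$ with $j\neq b_t$ vanish so the full distribution collapses to a fixed one; these are equivalent and your denominator-shift bookkeeping is fine given $W_t(b)\ge 6m$.
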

\begin{proof}
Notice that $|\mathcal{E}_t(b)| \neq 0$ iff $\Pr[b_t | b_1, \dotsc, b_{t - 1}] \neq \Pr{}'[b_t | b_1, \dotsc, b_{t - 1}]$. Let $x$ and $x'$ be the fraction of surviving voters voting on the recommended object $b_t$ in execution p-REC$_\text{sim}(C, \mathcal{V}\langle\mathcal{U}\rangle)$ and p-REC$_\text{sim}(C, \mathcal{V}\langle\mathcal{U}'\rangle)$, respectively. Since there are $W_t(b)$ surviving voters, $|x - x'| \le \frac{1}{W_t(b)} \le \frac{1}{P} \le \frac{1}{6m}$.

We claim that $x > \frac{1}{3m}$. Assume for contradiction that $x \le \frac{1}{3m}$. Since $|x - x'| \le \frac{1}{6m}$, both $x$ and $x'$ will be no larger than $\frac{1}{3m} + \frac{1}{6m} = \frac{1}{2m} = \rho$. Notice that $\phi(\zeta) = 0$ for any variable $\zeta \le \rho$, it then follows that $\phi(x) = \phi(x') = 0$ and $\Pr[b_t | b_1, \dotsc, b_{t - 1}] = \Pr'[b_t | b_1, \dotsc, b_{t - 1}] = \gamma \cdot \frac{1}{m}$, contradiction.

If the clients dislikes the recommended object $b_t$, by p-REC$_\text{sim}$'s rule of updating weight, $x > \frac{1}{3m}$ fraction of surviving voters will be kicked out.

If the clients likes $b_t$, then there must exist another object $\xi \in \mathcal{B}_t$ which is different from $b_t$, such that $\Pr[\xi | b_1, \dotsc, b_{t - 1}] \neq \Pr'[\xi | b_1, \dotsc, b_{t - 1}]$. Otherwise, if all the other objects are recommended with the same probability in executions p-REC$_\text{sim}(C, \mathcal{V}\langle\mathcal{U}\rangle)$ and p-REC$_\text{sim}(C, \mathcal{V}\langle\mathcal{U}'\rangle)$, so will be $b_t$, contradiction. By similar arguments, there are at least $\frac{1}{3m}$ fraction of surviving voters voting on the object $\xi$ in both p-REC$_\text{sim}(C, \mathcal{V}\langle\mathcal{U}\rangle)$ and p-REC$_\text{sim}(C, \mathcal{V}\langle\mathcal{U}'\rangle)$. Since p-REC$_\text{sim}$ kicks out all the voters who do not vote on $b_t$ (including those who vote on $\xi$), again we get the desired result.
\end{proof}

Lemma~\ref{lemma:single-round-privacy-leakage-simple-setting} states that $|\mathcal{E}_t(b)|$ is $O(\frac{\lambda}{W_t(b)}) = O(\frac{1}{P}\log{T})$. Lemma~\ref{lemma:round-with-non-zero-privacy-leakage-simple-setting} implies that there can be at most $O(\log{\frac{n}{P}})$ rounds with $|\mathcal{E}_t(b)| \neq 0$. A combination of these two lemmas immediately shows that overall we have $O(\frac{1}{P}\log{T} \cdot \log{\frac{n}{P}})$-differential privacy. With a bit more careful analysis, we can remove the extra $\log{\frac{n}{P}}$ factor and prove Theorem~\ref{theorem:upper-bound-privacy-simple-setting}:
\begin{proof}[Proof of Theorem~\ref{theorem:upper-bound-privacy-simple-setting}]
Let $t_1, \dotsc, t_K$ be the rounds in which $|\mathcal{E}_{t_i}(b)| \neq 0$, we have
\begin{equation*}
|\mathcal{E}(b)| = \left|\sum_{t = 1}^T{\mathcal{E}_t(b)}\right| \le \sum_{t = 1}^T{|\mathcal{E}_t(b)|} = \sum_{i = 1}^{K}{|\mathcal{E}_{t_i}(b)|} \le 3\lambda\sum_{i = 1}^{K}{\frac{1}{W_{t_i}(b)}}.
\end{equation*}
By Lemma~\ref{lemma:round-with-non-zero-privacy-leakage-simple-setting}, and the fact that $W_t(b)$ is non-increasing with respect to $t$, we have $\frac{1}{W_{t_i}(b)} \le \frac{1}{W_{t_{i + 1}}(b)} \cdot \left(1 - \frac{1}{3m}\right)$. Therefore the sequence $\{\frac{1}{W_{t_K}(b)}, \dotsc, \frac{1}{W_{t_1}(b)}\}$ is upper-bounded by the geometric sequence with $\frac{1}{W_{t_K}(b)}$ as the first term and $1 - \frac{1}{3m}$ as the common ratio. This implies that
\begin{equation*}
\sum_{i = 1}^{K}{\frac{1}{W_{t_i}(b)}} \le 3m \cdot \frac{1}{W_{t_K}(b)} \le 3m \cdot \frac{1}{P},
\end{equation*}
and
\begin{equation*}
|\mathcal{E}(b)| \le 3\lambda\sum_{i = 1}^{K}{\frac{1}{W_{t_i}(b)}} \le 18m^2 \cdot \frac{1}{P}\log{T} = O\left(\frac{1}{P}\log{T}\right),
\end{equation*}
as desired.
\end{proof}

\section{The General Setting where \texorpdfstring{$D + R \ge 0$}{D + R >= 0}}
\subsection{Lower Bound}\label{section:lower-bound-general-setting}
In this section, we prove Theorem~\ref{theorem:lower-bound-privacy}. If $0 \le D + R < 6\ln{T}$ and the target loss $L = O(T^\beta)$, then $\Omega(\log\frac{T}{L}) = \Omega(D + R + \log\frac{T}{L})$ and hence Theorem~\ref{theorem:lower-bound-privacy} is implied by Theorem~\ref{theorem:lower-bound-privacy-simple-setting}. When $D + R \ge 6\ln{T}$, we have the following Theorem~\ref{theorem:lower-bound-privacy-general-setting}. Theorem~\ref{theorem:lower-bound-privacy} is then proved because $\Omega(D + R) = \Omega(D + R + \log\frac{T}{L})$ if $D + R \ge 6\ln{T}$.
\begin{theorem}\label{theorem:lower-bound-privacy-general-setting}
For any recommendation algorithm that guarantees $L = o(T)$ worst-case expected loss and preserves $\epsilon$-differential privacy, if $D + R \ge 6\ln{T}$, then $\epsilon = \Omega(\frac{1}{P}(D + R))$, even for an oblivious adversary.
\end{theorem}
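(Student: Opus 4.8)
The plan is to follow the template of the $D=R=0$ case (Lemma~\ref{lemma:lower-bound-privacy-simple-setting}), constructing for an arbitrary $\epsilon$-DP algorithm an oblivious adversary that forces $\Omega(T)$ expected loss whenever $\epsilon = o(\tfrac1P(D+R))$, and then combining with Theorem~\ref{theorem:lower-bound-privacy-simple-setting} via the case distinction $D+R<6\ln T$ versus $D+R\ge 6\ln T$ to obtain Theorem~\ref{theorem:lower-bound-privacy}. The new leverage comes from two sources of ``slack'' that a positive $D$ or $R$ hands the adversary. First, in a round where the client likes \emph{both} objects every recommendation is loss-free, so the algorithm's behaviour in such a round is completely unconstrained by the loss guarantee yet still leaks into the output, and $D$ such rounds are available. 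Second, a peer may vote on objects the client dislikes in up to $R$ rounds, so a single extra voter can stay a peer while being the lone dissenter, hence the natural ``swing vote'', in $\Theta(R)$ rounds. I would isolate a block of $g=\Theta(D+R)$ such \emph{special} rounds, keep $P$ ``core'' peers that vote unanimously with the client off the block, and add one extra voter whose $g$ votes on the block encode an arbitrary labeling; the goal is to show the output is forced to depend on this labeling strongly enough that $\epsilon$ must absorb $\Theta(g/P)$.

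Before the formal argument I would give the illustrative example the discussion section promises: with $m=2$ and the setup above, a low-loss algorithm either gains nothing from the core peers on a diverse special round, or (on a mis-vote special round) must actually hedge toward the dissenter, since the client's taste there could be ``esoteric''; either way the extra voter's vote perturbs the recommendation distribution by a $\Theta(1/P)$ multiplicative factor per special round, and since the two voting patterns ``extra voter votes $\sigma$'' and ``extra voter votes all-core'' are $O(P)$-step adjacent (each of the $P$ core slots and the one extra voter can be re-routed in a bounded number of add/remove steps, independently of $g$), $\epsilon$-DP plus summation over the $g$ special rounds yields $\epsilon=\Omega(\tfrac1P(D+R))$. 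To make this rigorous I would set up a short hybrid exactly as in the simple setting: a chain of $O(P)$ adjacent voting patterns, all valid for the chosen client $C$ (all with $\ge P$ peers and $D_C=\Theta(D)$, taking $C$ to like both objects on every special round), connecting the all-zeros pattern $\mathcal U_{\mathbf 0}$ to the all-ones pattern $\mathcal U_{\mathbf 1}$ through the union of their core-peer sets, which is itself a valid pattern for $C$ since on special rounds $C$ likes both objects. For each labeling $\ell$ there is a companion client $C_\ell$ that agrees with $C$ off the block and likes only $\ell_s$ on special round $s$, so that $C_\ell$ is a ``restriction'' of $C$ and the executions $\mathcal A(C_\ell,\cdot)$ and $\mathcal A(C,\cdot)$ couple on the event ``always recommend from $C_\ell$'s liked set on the block''. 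The low-loss guarantee pins $\mathcal A(C_\ell,\mathcal U_\ell)$ to output $\ell$ on the block with high probability, which transfers through the coupling; disjointness of ``output $\mathbf 0$'' and ``output $\mathbf 1$'' together with $\epsilon$-DP along the chain then delivers the bound.

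The crux, and where the real work lies, is that a plain union bound only forces the ``wrong label'' probability under $\mathcal U_{\mathbf 1}$ to be $O(L)$ (equivalently $O((D+R)L/T)$ after averaging the placement of the special rounds as in the proof of Theorem~\ref{theorem:lower-bound-loss}), which recovers only the $\log\tfrac TL$ term rather than the full $D+R$. To get the $e^{-\Omega(D+R)}$-type probability needed for the claimed bound one must show the $\Theta(D+R)$ per-round ``avoid the adversary's object'' events are close to independent, not merely that their error probabilities sum to something small. The route I expect to work is a packing / fingerprinting argument over the $2^{\Theta(D+R)}$ labelings of the special rounds: any single low-loss algorithm must, on average over labelings, output something within small Hamming distance of the corresponding target string, so its output distribution on the block cannot place more than $2^{-\Theta(D+R)}$ mass on any fixed string, and DP then has to bridge a near-$1$ probability and a $2^{-\Theta(D+R)}$ probability across only $O(P)$ adjacency steps. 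Additional care is needed to treat the diversity contribution and the radius contribution by their (slightly different) mechanisms and to check that the hypothesis $D+R\ge 6\ln T$ is exactly what makes this exponentially small probability dominate the logarithmic slack coming from the $o(T)$ loss bound; the footnote-level subtlety (a messier bound for general $o(T)$ loss, a clean one for $O(T^\beta)$ loss) is presumably an artefact of how tight this packing step can be made.
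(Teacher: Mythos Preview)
Your route diverges fundamentally from the paper's, and the divergence exposes a real gap.

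The paper does \emph{not} use a packing or output-distinguishability argument. It fixes $m=2$, takes $P=1$ (one voter $U$ in $\mathcal V\langle\mathcal U\rangle$, one voter $U'$ in $\mathcal V\langle\mathcal U'\rangle$), and in each round conditions on the recommendation history $b_{<t}$ to read off the four ``follow'' probabilities $p_\alpha,p_\beta,p_\alpha',p_\beta'$. A case analysis then shows: if the algorithm does \emph{not} follow its lone voter with probability $>3/4$ (Case~1), the adversary can pick the client's like to force $\Omega(1)$ expected loss that round; if it \emph{does} follow (Case~2), then making $U$ and $U'$ vote on different objects yields per-round relative entropy $\mathbb E[\mathcal E_t\mid b_{<t}]\ge 0.13$. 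The adversary realises this by randomly choosing, in each round, one of four ``settings'' (the two that punish Case~1, and two that exploit Case~2 by consuming either a unit of $D$ or a unit of $R$), with probability $(D+R)/(2T)$ allotted to the latter; the condition $D+R\ge 6\ln T$ is used only for a Chernoff bound certifying that the configuration is legal with high probability. Summing the per-round entropies and using $L=o(T)$ to cap the number of Case~1 rounds gives $\mathbb E[\mathcal E]=\Omega(D+R)$, hence some recommendation sequence with $\mathcal E(b)=\Omega(D+R)$; the factor $1/P$ comes at the very end by replacing each single voter with $P$ clones.

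Your packing plan breaks at the step ``the low-loss guarantee pins $\mathcal A(C_\ell,\mathcal U_\ell)$ to output $\ell$ on the block with high probability, which transfers through the coupling''. The coupling between $\mathcal A(C,\cdot)$ and $\mathcal A(C_\ell,\cdot)$ preserves only the event ``output $\ell$ \emph{exactly} on the block'' (after the first disagreement the feedbacks differ and the executions decouple), whereas the loss bound, even after randomising the block's placement, only gives ``output within Hamming radius $g/4$ of $\ell$''. The exact-match probability can be $2^{-\Theta(g)}$ while the loss is still $o(T)$. And if you try to skip the coupling and work directly with the diverse client $C$, the obstacle is sharper: on every special round $C$ likes both objects, so the loss guarantee imposes \emph{no} constraint on $\mathcal A(C,\mathcal U_\ell)$'s output there. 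Nothing then forces $\Pr[\text{ball}(\ell)\mid C,\mathcal U_\ell]$ to be large, and the packing sum $\sum_\ell \Pr[\text{ball}(\ell)\mid C,\mathcal U_0]\ge N e^{-O(P\epsilon)}$ has no teeth. The same failure mode appears for the $R$ contribution: with a non-diverse client $C_0$ and peers who mis-vote on the block, the loss guarantee pins the output to $C_0$'s preferences, which are the \emph{same} under every $\mathcal U_\ell$, so again the packing cannot separate the labelings.

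The missing idea is precisely the paper's: you must argue at the level of the algorithm's conditional per-round behaviour (the probabilities $p_\alpha,p_\beta$), not at the level of the aggregate output string. The dichotomy ``either the algorithm ignores its voter (and loses) or it follows its voter (and leaks when the two voters disagree)'' is what converts the $D+R$ available disagreement rounds into $\Omega(D+R)$ total relative entropy; a packing over output strings cannot see this because on the relevant rounds the loss constraint is vacuous for the client you need to fix.
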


Before proving Theorem~\ref{theorem:lower-bound-privacy-general-setting}, we first explain the intuition behind the proof by a simple illustrative example. Imagine that there is one client Alice, and two voting patterns $\mathcal{V}_1$ and $\mathcal{V}_2$. Both $\mathcal{V}_1$ and $\mathcal{V}_2$ contain only one voter named Bob, but Bob may vote differently in $\mathcal{V}_1$ and $\mathcal{V}_2$. We let Bob be Alice's peer in both $\mathcal{V}_1$ and $\mathcal{V}_2$. For simplicity let us set $R = 0$, so Bob never votes on any object that Alice dislikes. By Definition~\ref{definition:adjacent-voting-pattern}, $\mathcal{V}_1$ and $\mathcal{V}_2$ are $2$-step voting patterns.

Now consider a particular round $t$ with two candidate objects. If Alice likes only one of the objects, then there is only one way for Bob to cast vote; otherwise Bob will no longer be a peer of Alice. However, if Alice likes both objects, then Bob can vote on different objects in $\mathcal{V}_1$ and $\mathcal{V}_2$ without breaking the promise that he is Alice's peer. Since Bob is the only information source of the system, an recommendation algorithm $\mathcal{A}$ has to somehow ``follow'' Bob, and hence the distributions of the executions $\mathcal{A}(\text{Alice}, \mathcal{V}_1)$ and $\mathcal{A}(\text{Alice}, \mathcal{V}_2)$ will be different. If Alice's diversity of preferences is $D$, then this situation can happen for $D$ times, which results an $\epsilon \propto D$. The linear dependency of $\epsilon$ on $R$ is for a similar reason.

\begin{proof}[Proof of Theorem~\ref{theorem:lower-bound-privacy-general-setting}]
We first prove Theorem~\ref{theorem:lower-bound-privacy-general-setting} for the case where $P = 1$ and $6\ln{T} \le D + R \le T$. Once we solve this case, the other cases can be easily solved and we discuss them in the end of this proof.
	
To show Theorem~\ref{theorem:lower-bound-privacy-general-setting} for the case where $P = 1$, it is sufficient to show that for any given algorithm $\mathcal{A}$, we can construct a client $C$ and a pair of $2$-step adjacent voting patterns $\mathcal{V}\langle\mathcal{U}\rangle, \mathcal{V}\langle\mathcal{U}'\rangle$, such that $\ln{\frac{\Pr[b]}{\Pr'[b]}} = \Omega(D + R)$ for some recommendation sequence $b \in \mathcal{B}_1 \times \dotsb \times \mathcal{B}_T$ and a sufficiently large $T$.
	
We consider a particular setup in which there are two objects $\alpha_t$ and $\beta_t$ in each round $t$. We construct the client $C$ by setting $C$'s preferences on $\alpha_t$'s and $\beta_t$'s. We will always ensure that $C$ likes both $\alpha_t$ and $\beta_t$ in at most $D$ rounds, hence that $D_C \le D$. For the voting patterns $\mathcal{V}\langle\mathcal{U}\rangle$ and $\mathcal{V}\langle\mathcal{U}'\rangle$, we let each of them contain one voter. Let $U$ and $U'$ be the only voter in voting pattern $\mathcal{V}\langle\mathcal{U}\rangle$ and $\mathcal{V}\langle\mathcal{U}'\rangle$, respectively. We ensure that both $U$ and $U'$ vote on at most $R$ objects that are disliked by the client $C$, hence both $U$ and $U'$ are $C$'s peers.
	
In sake of convenience, for any client $C$ and any pair of voting patterns $\mathcal{V}\langle\mathcal{U}\rangle$ and $\mathcal{V}\langle\mathcal{U}'\rangle$ as described above, we call the triple $(C, \mathcal{V}\langle\mathcal{U}\rangle, \mathcal{V}\langle\mathcal{U}'\rangle)$ a \emph{configuration}. We will introduce a random procedure of generating configurations for any given algorithm $\mathcal{A}$, and then use it to demonstrate the existence of one ``bad'' configuration that has large privacy leakage.
	
We construct a ``bad'' configuration round by round. Imagine that we are running the two executions $\mathcal{A}(C, \mathcal{V}\langle\mathcal{U}\rangle)$ and $\mathcal{A}(C, \mathcal{V}\langle\mathcal{U}'\rangle)$ simultaneously, and now $\mathcal{A}$ is at the beginning of the $t$th round, with the previous recommendation history being $(b_1, \dotsc, b_{t - 1})$. Let us write $b_{<t} = (b_1, \dotsc, b_{t - 1})$ for short. At this point, $\mathcal{A}$ will calculate the probabilities of recommending $\alpha_t$ and $\beta_t$, based on the voter $U$'s (or $U'$'s) vote.
	
The voter $U$ may vote on either $\alpha_t$ or $\beta_t$, and $\mathcal{A}$'s probability of recommending objects depends on how $U$ vote. We use $p_\alpha$ to denote the probability that the algorithm $\mathcal{A}$ ``follows'' the voter $U$ when $U$ votes on $\alpha_t$. That is, $p_\alpha$ is $\mathcal{A}$'s probability of recommending $\alpha_t$ when $U$ votes on $\alpha_t$, conditioned on the recommendation history is $b_{<t}$. Similarly, we define $p_\beta$ to be the probability that $\mathcal{A}$ ``follows'' the voter $U$ when $U$ votes on $\beta_t$. We define $p_\alpha'$ and $p_\beta'$ to be the counterparts of $p_\alpha$ and $p_\beta$ with voter $U$ replaced by voter $U'$. I.e., $p_\alpha'$($p_\beta'$) is the probability that $\mathcal{A}$ ``follows'' the voter $U'$ when $U'$ votes on $\alpha_t$($\beta_t$).
	
In order to better explain the intuition behind this proof, let us temporarily assume an adaptive adversary, therefore the adversary could see the recommendation history $b_{<t}$ and calculate $p_\alpha$, $p_\beta$, $p_\alpha'$ and $p_\beta'$ by ``simulating'' $\mathcal{A}$. It can then (adaptively) construct a ``bad'' configuration by choosing one of the following $4$ settings for round $t$, based on the values of $p_\alpha$, $p_\beta$, $p_\alpha'$ and $p_\beta'$:
\begin{itemize}
\item \textit{Setting~1}: the client likes $\alpha_t$, dislikes $\beta_t$, and both $U$ and $U'$ vote on $\alpha_t$;
\item \textit{Setting~2}: the client dislikes $\alpha_t$, likes $\beta_t$, and both $U$ and $U'$ vote on $\beta_t$;
\item \textit{Setting~3}: the client likes both $\alpha_t$ and $\beta_t$, $U$ votes on $\alpha_t$, and $U'$ votes on $\beta_t$;
\item \textit{Setting~4}: the client likes $\alpha_t$, dislikes $\beta_t$, $U$ votes on $\alpha_t$, and $U'$ votes on $\beta_t$.
\end{itemize}
The adaptive construction is as follows:
\begin{itemize}
\item \textit{Case~1}: $0 \le \min\{p_\alpha, p_\beta\} \le \frac{3}{4}$. We have two more subcases:
\begin{itemize}
\item \textit{Case~1.a}: If $p_\alpha \le \frac{3}{4}$, then the adversary chooses \textit{Setting~1}, in which case we have
\begin{align*}
\mathbb{E}[\mathcal{L}_t | b_{<t}] & = \Pr[\beta_t | b_{<t}]\\
& = 1 - p_\alpha\\
& \ge \frac{1}{4}.
\end{align*}
\item \textit{Case~1.b}: Else we must have $p_\beta \le \frac{3}{4}$, in which case the adversary chooses \textit{Setting~2}. We have
\begin{align*}
\mathbb{E}[\mathcal{L}_t | b_{<t}] & = \Pr[\alpha_t | b_{<t}]\\
& = 1 - p_\beta\\
& \ge \frac{1}{4}.
\end{align*}
\end{itemize}
In summary, in \textit{Case~1}, the adversary can always force the algorithm to have a constant expected loss in round $t$.
\item \textit{Case~2}: $\frac{3}{4} < \min\{p_\alpha, p_\beta\} \le 1$.In this case, algorithm $\mathcal{A}$ ``follows'' voter $U$ with a large probability (at least $\frac{3}{4}$) regardless how $U$ votes. We have two more subcases:
\begin{itemize}
\item \textit{Case~2.a}: $0 \le p_\beta' \le \frac{1}{2}$, in which case $\mathcal{A}$ ``follows'' voter $U'$ with only a small probability (at most $\frac{1}{2}$) if $U'$ votes on $\beta_t$. Recall that $\mathcal{A}$ ``follows'' voter $U$ with probability at least $\frac{3}{4}$ no matter which object $U$ votes on. Thus, if the adversary lets both $U$ and $U'$ vote on $\beta_t$, the two distributions $\Pr[\cdot | b_{<t}]$ and $\Pr'[\cdot | b_{<t}]$ will be sufficiently different, and the algorithm $\mathcal{A}$ will ``leak privacy.'' Specifically, the adversary will choose \textit{Setting~2} for \textit{Case~2.a} and we have
\begin{align*}
\mathbb{E}[\mathcal{E}_t | b_{<t}] & = \Pr[\alpha_t | b_{<t}] \cdot \ln{\frac{\Pr[\alpha_t | b_{<t}]}{\Pr'[\alpha_t | b_{<t}]}} + \Pr[\beta_t | b_{<t}] \cdot \ln{\frac{\Pr[\beta_t | b_{<t}]}{\Pr'[\beta_t | b_{<t}]}}\\
& = (1 - p_\beta)\ln{\frac{1 - p_\beta}{1 - p_\beta'}} + p_\beta\ln{\frac{p_\beta}{p_\beta'}}\\
& \ge \frac{1}{4}\ln{\frac{1 / 4}{1 / 2}} + \frac{3}{4}\ln{\frac{3 / 4}{1 / 2}} \tag{$\because p_\beta \ge \frac{3}{4}$ and $p_\beta' \le \frac{1}{2}$}\\
& > 0.13.
\end{align*}
That is, the expected ``privacy leakage'' in the $t$th round is at least a constant.
\item \textit{Case~2.b}: $\frac{1}{2} < p_\beta' \le 1$, in which case algorithm $\mathcal{A}$ ``follows'' voter $U'$ with a large probability (at least $\frac{1}{2}$) too if $U'$ votes on $\beta_t$. Since $\mathcal{A}$ ``follows'' voter $U$ with probability at least $\frac{3}{4}$ no matter which object $U$ votes on, to make the distributions $\Pr[\cdot | b_{<t}]$ and $\Pr'[\cdot | b_{<t}]$ be sufficiently different and force the algorithm to ``leak privacy,'' the adversary can let $U$ vote on $\alpha_t$ and $U'$ vote on $\beta_t$ --- this is how \textit{Setting~3} and \textit{Setting~4} are constructed. Notice that the adversary cannot keep choosing \textit{Setting~3} or \textit{Setting~4}, because the client likes more than one objects in at most $D$ rounds, and $U$ and $U'$ can vote on at most $R$ objects that are disliked by the client. In our construction, the adversary will choose \textit{Setting~3} for the first $D$ rounds that are in \textit{Case~2.b}, and then choose \textit{Setting~4} for the next $R$ rounds that are in \textit{Case~2.b}. The adversary arbitrarily chooses \textit{Setting~1} or \textit{Setting~2} for the remaining rounds that are in \textit{Case~2.b}.
			
If \textit{Setting~3} or \textit{Setting~4} is chosen, we have
\begin{align*}
\mathbb{E}[\mathcal{E}_t | b_{<t}] & = \Pr[\alpha_t | b_{<t}] \cdot \ln{\frac{\Pr[\alpha_t | b_{<t}]}{\Pr'[\alpha_t | b_{<t}]}} + \Pr[\beta_t | b_{<t}] \cdot \ln{\frac{\Pr[\beta_t | b_{<t}]}{\Pr'[\beta_t | b_{<t}]}}\\
& = p_\alpha\ln{\frac{p_\alpha}{1 - p_\beta'}} + (1 - p_\alpha)\ln{\frac{1 - p_\alpha}{p_\beta'}}\\
& \ge \frac{3}{4}\ln{\frac{3 / 4}{1 / 2}} + \frac{1}{4}\ln{\frac{1 / 4}{1 / 2}} \tag{$\because p_\alpha \ge \frac{3}{4}$ and $p_\beta' > \frac{1}{2}$}\\
& > 0.13.
\end{align*}
If \textit{Setting~1} or \textit{Setting~2} is chosen, we have
\begin{equation*}
\mathbb{E}[\mathcal{E}_t | b_{<t}] \ge 0
\end{equation*}
because the relative entropy is always non-negative.
\end{itemize}
\end{itemize}
	
It can be shown that with this adaptive construction, $\mathbb{E}[\mathcal{E}] = \Omega(D + R)$ for any algorithm $\mathcal{A}$, which implies the existence of one recommendation sequence $b$ such that $\mathcal{E}(b) = \ln{\frac{\Pr[b]}{\Pr'[b]}} = \Omega(D + R)$. To see why $\mathbb{E}[\mathcal{E}] = \Omega(D + R)$, we first notice that there cannot be too many rounds in \textit{Case~1} (including \textit{Case~1.a} and \textit{Case~1.b}) on expectation, because $\mathcal{A}$ has to ensure $o(T)$ worst-case expected loss. Therefore most of the rounds must be in \textit{Case~2.a} or \textit{Case~2.b}. If there are many rounds in \textit{Case~2.a}, then $\mathbb{E}[\mathcal{E}]$ must be large because $\mathbb{E}[\mathcal{E}_t | b_{<t}]$ is at least a constant in every \textit{Case~2.a} round, and $\mathbb{E}[\mathcal{E}_t | b_{<t}] \ge 0$ in all the other rounds (the relative entropy is always non-negative). Otherwise, there must be many rounds in \textit{Case~2.b}. In this case, the adversary can choose \textit{Setting~3} and \textit{Setting~4} for $\Omega(D + R)$ times, and we will have $\mathbb{E}[\mathcal{E}] = \Omega(D + R)$.
	
The drawback of the above arguments is of course the assumption of an adaptive adversary. To remove this assumption, an (oblivious) adversary will choose \textit{Setting~1}, \textit{Setting~2}, \textit{Setting~3} or \textit{Setting~4} randomly in each round, without looking at the recommendation history. In other words, we are to construct a random configuration.
	
The random construction is as follows. In each round $t$, the adversary independently draws a random variable $X_t$ such that
\begin{equation*}
X_t = \begin{dcases}
0 & \text{with probablitiy $\frac{D + R}{2T}$},\\
1 & \text{with probablitiy $\frac{1}{2}(1 - \frac{D + R}{2T})$},\\
2 & \text{with probablitiy $\frac{1}{2}(1 - \frac{D + R}{2T})$}.\\
\end{dcases}
\end{equation*}
The adversary chooses \textit{Setting~1} for the $t$th round if $X_t = 1$, and it chooses \textit{Setting~2} for the $t$th round round if $X_t = 2$. If $X_t = 0$, the adversary does the following: if it has chosen \textit{Setting~3} for less than $D$ times before the $t$th round, chooses \textit{Setting~3}; else chooses \textit{Setting~4}. Notice that in this random construction, it is possible to get an ``illegal'' configuration (i.e., a configuration with more than $R$ rounds in \textit{Setting~4}). However, since $\Pr[X_t = 0] = \frac{D + R}{2T}$, the expected number of rounds in which $X_t = 0$ is $\frac{1}{2}(D + R)$. By Chernoff's bound, the probability that we have more than $D + R$ rounds with $X_t = 0$ is at most $\frac{1}{T}$ (here the condition $D + R \ge 6\ln{T}$ is used). In other words, our construction generates a ``legal'' configuration with high probability.
	
This oblivious construction is analogous to the aforementioned adaptive one in the following sense: previously, if $\mathcal{A}$ is in \textit{Case~1.a}, \textit{Case~1.b} or \textit{Case~2.a}, it will encounter a ``bad'' setting (which is \textit{Setting~1}, \textit{Setting~2} and \textit{Setting~2}, respectively) \emph{for sure}; now $\mathcal{A}$ encounters a ``bad'' setting with a constant probability $\frac{1}{2}(1 - \frac{D + R}{2T}) \ge \frac{1}{4}$. Moreover, in the oblivious construction, we also ensure that there are sufficiently many rounds ($\frac{1}{2}(D + R)$ rounds on expectation) in which \textit{Setting~3} or \textit{Setting~4} is chosen.
	
Due to this analogousness between these two constructions, we can derive a similar lower bound for $\mathbb{E}[\mathcal{E}]$ and prove Theorem~\ref{theorem:lower-bound-privacy-general-setting} for the case of $P = 1$ and $6\ln{T} \le D + R \le T$:
	
Let $\mathcal{F}$ be the universe of all possible configurations, for any configuration $F = (C, \mathcal{V}\langle\mathcal{U}\rangle, \mathcal{V}\langle\mathcal{U}'\rangle) \in \mathcal{F}$, we use $\Pr[\cdot | F]$, $\Pr'[\cdot | F]$ and $\mathbb{E}[\cdot | F]$ to denote the corresponding probabilities and expectations in the executions of $\mathcal{A}(C, \mathcal{V}\langle\mathcal{U}\rangle)$ and $\mathcal{A}(C, \mathcal{V}\langle\mathcal{U}'\rangle)$.
	
Given any subset $\mathcal{G} \subseteq \mathcal{F}$, we define the following ``conditional'' relative entropies over $\mathcal{G}$:
\begin{equation*}
\mathbb{E}[\mathcal{E}_t | \mathcal{G}] = \sum_{F \in \mathcal{G}}{\mathbb{E}[\mathcal{E}_t | F] \cdot \Pr[F | \mathcal{G}]}\\
\end{equation*}
and
\begin{align*}
\mathbb{E}[\mathcal{E} | \mathcal{G}] & = \sum_{t = 1}^T{\mathbb{E}[\mathcal{E}_t | \mathcal{G}]}\\
& = \sum_{F \in \mathcal{G}}{\mathbb{E}[\mathcal{E} | F] \cdot \Pr[F | \mathcal{G}]}.
\end{align*}
We emphasize that $\mathbb{E}[\mathcal{E}_t | \mathcal{G}]$ and $\mathbb{E}[\mathcal{E} | \mathcal{G}]$ themselves are not necessarily to be relative entropies, although $\mathbb{E}[\mathcal{E}_t | F]$ and $\mathbb{E}[\mathcal{E}| F]$ are relative entropies for each single $F \in \mathcal{F}$.
	
Let $\mathcal{F}_0 \subseteq \mathcal{F}$ be the universe of all the ``legal'' configurations (i.e., the configurations with at most $R$ rounds in \textit{Setting~4}). We will show that $\mathbb{E}[\mathcal{E} | \mathcal{F}_0] = \Omega(D + R)$. We can then conclude that there exists one ``legal'' configuration $F \in \mathcal{F}_0$, such that $\ln{\frac{\Pr[b | F]}{\Pr'[b | F]}} = \Omega(D + R)$ for some recommendation sequence $b \in \mathcal{B}_1 \times \dotsb \times \mathcal{B}_T$.
	
Define $I_t^{(1.a)}$, $I_t^{(1.b)}$, $I_t^{(2.a)}$, $I_t^{(2.b)}$ to be the indicator random variables such that
\begin{itemize}
\item $I_t^{(1.a)} = 1$ iff the $t$th round is in \textit{Case~1.a};
\item $I_t^{(1.b)} = 1$ iff the $t$th round is in \textit{Case~1.b};
\item $I_t^{(2.a)} = 1$ iff the $t$th round is in \textit{Case~2.a};
\item $I_t^{(2.b)} = 1$ iff the $t$th round is in \textit{Case~2.b}.
\end{itemize}
Since the expected loss of $\mathcal{A}(C, \mathcal{V}\langle\mathcal{U}\rangle)$ is at least $\frac{1}{4}$ if the $t$th round is in \textit{Case~1.a} and \textit{Setting~1} is chosen (i.e., $\mathbb{E}[\mathcal{L}_t | I_t^{(1.a)} = 1 \text{ and } X_t = 1] \ge \frac{1}{4}$), we have
\begin{align}\label{equation:lower-bound-equation-1}
\mathbb{E}[\mathcal{L}_t] & \ge \frac{1}{4} \cdot \Pr[I_t^{(1.a)} = 1, X_t = 1]\notag\\
& = \frac{1}{4} \cdot \Pr[I_t^{(1.a)} = 1] \cdot \Pr[X_t = 1]\notag\\
& = \frac{1}{8}\left(1 - \frac{D + R}{2T}\right) \cdot \Pr[I_t^{(1.a)} = 1]\notag\\
& \ge \frac{1}{16} \cdot \mathbb{E}[I_t^{(1.a)}].
\end{align}
The second step is because $I_t^{(1.a)}$ and $X_t$ are independent (although $I_t^{(1.a)}$ may depend on $X_s$ for $s < t$). Take summation over $t$ on both sides of \eqref{equation:lower-bound-equation-1}, we get
\begin{equation}\label{equation:lower-bound-equation-2}
\mathbb{E}\left[\sum_{t = 1}^T{\mathcal{L}_t}\right] \ge \frac{1}{16} \cdot \mathbb{E}\left[\sum_{t = 1}^T{I_t^{(1.a)}}\right].
\end{equation}
Similarly, since the expected loss of $\mathcal{A}(C, \mathcal{V}\langle\mathcal{U}\rangle)$ is at least $\frac{1}{4}$ if the $t$th round is in \textit{Case~1.b} and \textit{Setting~2} is chosen (i.e., $\mathbb{E}[\mathcal{L}_t | I_t^{(1.b)} = 1 \text{ and } X_t = 2] \ge \frac{1}{4}$), we have
\begin{equation}\label{equation:lower-bound-equation-3}
\mathbb{E}\left[\sum_{t = 1}^T{\mathcal{L}_t}\right] \ge \frac{1}{16} \cdot \mathbb{E}\left[\sum_{t = 1}^T{I_t^{(1.b)}}\right].
\end{equation}
Combine \eqref{equation:lower-bound-equation-2} and \eqref{equation:lower-bound-equation-3}, we get
\begin{equation}\label{equation:lower-bound-equation-4}
\mathbb{E}\left[\sum_{t = 1}^T{\mathcal{L}_t}\right] \ge \frac{1}{32} \cdot \left(\mathbb{E}\left[\sum_{t = 1}^T{I_t^{(1.a)}}\right] + \mathbb{E}\left[\sum_{t = 1}^T{I_t^{(1.b)}}\right]\right).
\end{equation}
	
Recall that the expectation of $\mathcal{E}_t$ is at least $0.13$ if the $t$th round is in \textit{Case~2.a} and \textit{Setting~2} is chosen (i.e., $\mathbb{E}[\mathcal{E}_t | I_t^{(2.a)} = 1 \text{ and } X_t = 2] \ge 0.13$), and that expectation is at least $0$ in other cases (because the relative entropy is always non-negative), we have
\begin{align}\label{equation:lower-bound-equation-5}
\mathbb{E}[\mathcal{E}_t | \mathcal{F}_0] & \ge 0.13 \cdot \Pr[I_t^{(2.a)} = 1, X_t = 2 | \mathcal{F}_0]\notag\\
& \ge 0.13 \cdot \left(\Pr[I_t^{(2.a)} = 1, X_t = 2] - \frac{1}{T}\right)\notag\\
& = 0.13 \cdot \left(\Pr[I_t^{(2.a)} = 1] \cdot \Pr[X_t = 2] - \frac{1}{T}\right)\notag\\
& = 0.13 \cdot \left(\frac{1}{2}\left(1 - \frac{D + R}{2T}\right) \cdot \Pr[I_t^{(2.a)} = 1] - \frac{1}{T}\right)\notag\\
& \ge 0.13 \cdot \left(\frac{1}{4} \cdot \mathbb{E}[I_t^{(2.a)}] - \frac{1}{T}\right).
\end{align}
The second step is because
\begin{align*}
& \quad\ \Pr[I_t^{(2.a)} = 1, X_t = 2 | \mathcal{F}_0]\\
& = \frac{1}{\Pr[\mathcal{F}_0]}(\Pr[I_t^{(2.a)} = 1, X_t = 2] - \Pr[I_t^{(2.a)} = 1, X_t = 2 | \mathcal{F} \setminus \mathcal{F}_0] \cdot \Pr[\mathcal{F} \setminus \mathcal{F}_0])\\
& \ge \Pr[I_t^{(2.a)} = 1, X_t = 2] - \Pr[\mathcal{F} \setminus \mathcal{F}_0]\\
& \ge \Pr[I_t^{(2.a)} = 1, X_t = 2] - \frac{1}{T},
\end{align*}
and the third step is because $I_t^{(2.a)}$ and $X_t$ are independent (although $I_t^{(2.a)}$ may depend on $X_s$ for $s < t$). Take summation over $t$ on both sides of \eqref{equation:lower-bound-equation-5}, we get
\begin{align}\label{equation:lower-bound-equation-6}
\mathbb{E}[\mathcal{E} | \mathcal{F}_0] & = \mathbb{E}\left[\sum_{t = 1}^T{\mathcal{E}_t} \bigg| \mathcal{F}_0\right]\notag\\
& \ge 0.13 \cdot \left(\frac{1}{4} \cdot \mathbb{E}\left[\sum_{t = 1}^T{I_t^{(2.a)}}\right] - 1\right).
\end{align}
	
Similarly, by the fact that the expectation of $\mathcal{E}_t$ is at least $0.13$ if the $t$th round is in \textit{Case~2.b} and \textit{Setting~3} or \textit{Setting~4} is chosen (i.e., $\mathbb{E}[\mathcal{E}_t | I_t^{(2.b)} = 1 \text{ and } X_t = 0] \ge 0.13$), we have
\begin{align}\label{equation:lower-bound-equation-7}
\mathbb{E}[\mathcal{E}_t | \mathcal{F}_0] & \ge 0.13 \cdot \Pr[I_t^{(2.b)} = 1, X_t = 0 | \mathcal{F}_0]\notag\\
& \ge 0.13 \cdot \left(\Pr[I_t^{(2.b)} = 1] \cdot \Pr[X_t = 0] - \frac{1}{T}\right)\notag\\
& = 0.13 \cdot \left(\frac{D + R}{2T} \cdot \Pr[I_t^{(2.b)} = 1] - \frac{1}{T}\right)\notag\\
& = 0.13 \cdot \left(\frac{D + R}{2T} \cdot \mathbb{E}[I_t^{(2.b)}] - \frac{1}{T}\right).
\end{align}
Take summation over $t$ on both sides of \eqref{equation:lower-bound-equation-7}, we obtain
\begin{equation}\label{equation:lower-bound-equation-8}
\mathbb{E}[\mathcal{E} | \mathcal{F}_0] \ge 0.13 \cdot \left(\frac{D + R}{2T} \cdot \mathbb{E}\left[\sum_{t = 1}^T{I_t^{(2.b)}}\right] - 1\right).
\end{equation}
	
We are now ready to derive a lower bound for $\mathbb{E}[\mathcal{E} | \mathcal{F}_0]$:
\begin{itemize}
\item If $\mathbb{E}\left[\sum_{t = 1}^T{I_t^{(2.b)}}\right] \ge \frac{T}{2}$, by \eqref{equation:lower-bound-equation-8}, we have
\begin{align*}
\mathbb{E}[\mathcal{E} | \mathcal{F}_0] & \ge 0.13 \cdot \left(\frac{1}{4}(D + R) - 1\right)\\
& = \Omega(D + R).
\end{align*}
\item If $\mathbb{E}\left[\sum_{t = 1}^T{I_t^{(2.b)}}\right] < \frac{T}{2}$, then
\begin{equation*}
\mathbb{E}\left[\sum_{t = 1}^T{I_t^{(1.a)}}\right] + \mathbb{E}\left[\sum_{t = 1}^T{I_t^{(1.b)}}\right] + \mathbb{E}\left[\sum_{t = 1}^T{I_t^{(2.a)}}\right] > \frac{T}{2}.
\end{equation*}
By \eqref{equation:lower-bound-equation-4}, we have
\begin{align*}
\mathbb{E}\left[\sum_{t = 1}^T{I_t^{(2.a)}}\right] & > \frac{T}{2} - \left(\mathbb{E}\left[\sum_{t = 1}^T{I_t^{(1.a)}}\right] + \mathbb{E}\left[\sum_{t = 1}^T{I_t^{(1.b)}}\right]\right)\\
& \ge \frac{T}{2} - 32 \cdot \mathbb{E}\left[\sum_{t = 1}^T{\mathcal{L}_t}\right].
\end{align*}
Substitute this to \eqref{equation:lower-bound-equation-6}, we get
\begin{align*}
\mathbb{E}[\mathcal{E} | \mathcal{F}_0] & \ge 0.13 \cdot \left(\frac{1}{4} \cdot \left(\frac{T}{2} - 32 \cdot \mathbb{E}\left[\sum_{t = 1}^T{\mathcal{L}_t}\right]\right) - 1\right)\\
& \ge 0.13 \cdot \left(\frac{1}{4} \cdot \left(\frac{T}{2} - o(T)\right) - 1\right)\\
& = \Omega(T)\\
& = \Omega(D + R).
\end{align*}
The second step is because
\begin{align*}
\mathbb{E}\left[\sum_{t = 1}^T{\mathcal{L}_t}\right] & = \mathbb{E}\left[\sum_{t = 1}^T{\mathcal{L}_t} \bigg| \mathcal{F}_0\right] \cdot \Pr[\mathcal{F}_0] + \mathbb{E}\left[\sum_{t = 1}^T{\mathcal{L}_t} \bigg| \mathcal{F} \setminus \mathcal{F}_0\right] \cdot \Pr[\mathcal{F} \setminus \mathcal{F}_0]\\
& \le \mathbb{E}\left[\sum_{t = 1}^T{\mathcal{L}_t} \bigg| \mathcal{F}_0\right] + T \cdot \frac{1}{T}\\
& \le \max_{F \in \mathcal{F}_0}{\mathbb{E}\left[\sum_{t = 1}^T{\mathcal{L}_t} \bigg| F\right]} + 1\\
& \le L + 1\\
& = o(T).
\end{align*}
\end{itemize}
	
We have shown that Theorem~\ref{theorem:lower-bound-privacy-general-setting} holds if $6\ln{T} \le D + R \le T$ and $P = 1$. Now we prove Theorem~\ref{theorem:lower-bound-privacy-general-setting} for the cases where $D + R > T$ and/or $P > 1$.
	
\paragraph{When $D + R > T$ (still assuming $P = 1$).} When $D + R > T$, we can choose another pair of $D'$ and $R'$, such that $D' \le R$, $R' \le R$ and $D' + R' = T$. If we use $D'$ and $R'$ (instead of $D$ and $R$) to do the above construction, we can show that the $\mathbb{E}[\mathcal{E} | \mathcal{F}_0] = \Omega(D' + R')$. Recall that $D \in \{0, \dotsc, T\}$ and $R \in \{0, \dotsc, T\}$, thus $D + R \le 2T$ and $\Omega(D' + R') = \Omega(D + R)$. This finishes the proof.
	
\paragraph{When $P > 1$.} In the above proof for the case of $P = 1$, we construct two voting patterns $\mathcal{V}\langle\mathcal{U}\rangle$ and $\mathcal{V}\langle\mathcal{U}'\rangle$, each of which contains a single voter $U$ and $U'$, respectively. We can extend this proof to the case of $P > 1$ by considering a voting pattern containing $P$ voters who vote exactly the same as $U$, and another voting pattern containing $P$ voters who vote exactly the same as $U'$. Such a pair of voting patterns are $2P$-step adjacent to each other, and we will get an additional factor of $\Theta(\frac{1}{P})$ in our lower bound for the $\epsilon$.
\end{proof}

\subsection{Algorithm}\label{section:p-REC}
We propose the following p-REC algorithm for the general setting where $D + R \ge 0$. The p-REC algorithm is a generalized version of the p-REC$_\text{sim}$ algorithm, and it shares a similar structure as that of p-REC$_\text{sim}$, except that the procedure \texttt{UpdateWeight()} is replaced by \texttt{UpdateCreditAndWeight()}. In fact, we can get back the p-REC$_\text{sim}$ algorithm by setting $D = R = 0$ in the p-REC algorithm.
\begin{algorithm}[h]
	\SetKwInOut{Input}{Input}
	\SetKwInOut{Output}{Output}
	\SetKwInOut{Initialization}{Initialization}
	\SetKwData{Weight}{weight}
	\SetKwData{DCredit}{credit$^{(D)}$}
	\SetKwData{RCredit}{credit$^{(R)}$}
	\SetKwData{Obj}{obj}
	\SetKwData{Feedback}{feedback}
	\SetKwFunction{Main}{Main}
	\SetKwFunction{RecommendByWeight}{RecommendByWeight}
	\SetKwFunction{UpdateCreditAndWeight}{UpdateCreditAndWeight}
	
	\Input{A client $C$, a voting pattern $\mathcal{V}\langle\mathcal{U}\rangle$}
	\Output{Recommend an object from $\mathcal{B}_t$ to client $C$ in each round $t$}
	\Initialization{$\gamma \gets \frac{m}{(3T / (R + 1)) - 1}$; $\lambda \gets 2m\ln\frac{T}{R + 1}$; $\rho \gets \frac{1}{2m}$; for each $i \in \mathcal{U}$: $\DCredit[i] \gets 2D$, $\RCredit[i] \gets 2R + 1$, $\Weight[i] \gets 1$}
	
	\BlankLine
	
	\SetKwBlock{MainProcedure}{Procedure \Main{}}{end}
	\MainProcedure{
		\ForEach{round $t = 1, \dotsc, T$}{
			$\Obj \gets$ \RecommendByWeight{$\Weight[\;]$}\;
			Recommend object $\Obj$ to the client $C$\;
			$\Feedback \gets$ the client $C$'s feedback on object $\Obj$\;
			\UpdateCreditAndWeight{$\DCredit[\;]$, $\RCredit[\;]$, $\Weight[\;]$, $\Obj$, $\Feedback$}\;
		}
	}
	
	\SetKwBlock{UpdateCreditAndWeightProcedure}{Procedure \UpdateCreditAndWeight{$\DCredit[\;]$, $\RCredit[\;]$, $\Weight[\;]$, $\Obj$, $\Feedback$}}{end}
	\UpdateCreditAndWeightProcedure{
		\eIf{$\Feedback =$ ``dislike''}{
			$\RCredit[i] \gets \RCredit[i] - 1$ for every voter $i$ who votes on $\Obj$\;
		}
		{
			$\DCredit[i] \gets \DCredit[i] - 1$ for every voter $i$ who does not vote on $\Obj$\;
		}
		\ForEach{voter $i \in \mathcal{U}$}{
			\eIf{$\RCredit[i] > 0$ and $\DCredit[i] + \RCredit[i] > 0$}{
				$\Weight[i] \gets 1$\;
			}
			{
				$\Weight[i] \gets 0$\;
			}
		}
	}
	
	\caption{Privacy-preserving RECommendation (p-REC) algorithm.}\label{algorithm:p-REC}
\end{algorithm}

In the beginning of the p-REC algorithm, each voter $i \in \mathcal{U}$ is initialized with two \emph{credit} values $\mathsf{credit}^{(D)}[i] = 2D$ (which we call \emph{$D$-credit}) and $\mathsf{credit}^{(R)}[i] = 2R + 1$ (which we call \emph{$R$-credit}). In each round $t$, the algorithm recommends objects in the same way as the p-REC$_\text{sim}$ algorithm by invoking the \texttt{RecommendByWeight()} procedure. After it receives the client's feedback, the algorithm updates each voter's credit and then calculate his/her weight by invoking the \texttt{UpdateCreditAndWeight()} procedure.

To see the intuition behind the p-REC algorithm, let us analyze why the p-REC$_\text{sim}$ algorithm fails in the general setting where $D + R \ge 0$. If we run p-REC$_\text{sim}$ in the general setting, we may end up with a situation where all the client's peers are kicked out from the system. A client's peer can be (wrongly) ``kicked out'' in two scenarios:
\begin{itemize}
\item when the client likes more than one objects in some round, the peer votes on one such object, but another such object is recommended;
\item when the peer votes on an object that the client dislikes, and that object is recommended to the client.
\end{itemize}
However, since these two scenarios can happen for at most $D + R$ times, a natural idea is to give a voter $D + R$ more ``chances'' before we kick out him/her. Motivated by this, we could initialize each voter $i$ with $D + R + 1$ credit, and deduct $i$'s credit by $1$ when $i$ is caught to vote on an object the client dislikes, or when the client likes the recommended object but $i$ does not vote on it. We kick out a voter only when he/she has no credit.

For some technical reasons that will be clear later, the p-REC algorithm needs to introduce two types of credit ($D$-credit and $R$-credit), and deducts different types of credit in different situations. It also initializes each voter with $2D$ (instead of $D$) $D$-credit and $2R + 1$ (instead of $R + 1$) $R$-credit.

\paragraph{Loss analysis of the p-REC algorithm.} In the remaining parts of this section, we prove Theorem~\ref{theorem:p-REC-loss-and-privacy}. We first analyze the loss of the p-REC algorithm, we have:
\begin{theorem}
The p-REC algorithm guarantees $O((R + 1)\log{\frac{n}{P}})$ worst-case expected loss, even for an adaptive adversary.
\end{theorem}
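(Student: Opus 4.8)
The plan is to mimic the loss analysis of p-REC$_\text{sim}$ (Theorem~\ref{theorem:upper-bound-loss-simple-setting}), replacing the ``one loss $\Rightarrow$ kick out a $\rho$-fraction of surviving voters'' accounting by a potential-function argument that tolerates the $R$-credit counter. As before I would split the loss into (i) the rounds with $Z_t = 1$ (uniformly random recommendations) and (ii) the rounds with $Z_t = 0$ in which the client incurs a loss. Part (i) is immediate: the expected number of $Z_t = 1$ rounds is $\gamma T$, and with the value $\gamma = \frac{m}{3T/(R+1) - 1}$ one checks $\gamma T = \frac{m(R+1)T}{3T - (R+1)} \le m(R+1) = O(R+1)$ using $R \le T$ and $m = O(1)$; each such round costs at most $1$. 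So the whole task reduces to bounding the number $N$ of rounds with $Z_t = 0$ and ``dislike'' feedback.

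First I would verify that the $P$ peers always retain weight $1$, which is exactly where the choices $\mathsf{credit}^{(D)}[i] = 2D$ and $\mathsf{credit}^{(R)}[i] = 2R+1$ are used. A peer loses an $R$-credit only when it is caught voting on a recommended object the client dislikes; by the distance-$\le R$ promise this happens at most $R$ times, so its $R$-credit stays $\ge R+1 > 0$. A peer loses a $D$-credit only in a ``like''-feedback round where it abstains from the recommended object: either the peer votes on another \emph{liked} object (possible in at most $D_C \le D$ rounds, since the client then likes $\ge 2$ objects) or it votes on a \emph{disliked} non-recommended object, and the latter events together with the $R$-credit decrements correspond to distinct ``peer votes on a disliked object'' incidents, of which there are at most $R$. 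Hence the total number of credit decrements is at most $D + R$, so the peer's total credit stays $\ge 2D + 2R + 1 - (D+R) = D+R+1 > 0$ and both weight conditions remain satisfied. In particular the number $W_t$ of surviving voters satisfies $W_t \ge P$ always, and, more usefully, $\Phi_t := \sum_{i\ \text{surviving}} \mathsf{credit}^{(R)}[i] \ge P(R+1)$ at all times.

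The crux is bounding $N$ via $\Phi_t$. Initially $\Phi_1 = |\mathcal{U}|(2R+1) \le n(2R+1)$, and $\Phi_t$ is non-increasing because voters are only ever kicked out and $R$-credits only ever decremented (and only in loss rounds). In a $Z_t = 0$ loss round the recommended object $b_t$ satisfies $\phi(x_{b_t,t}) > 0$, hence $x_{b_t,t} > \rho$, so more than $\rho W_t$ surviving voters voted on $b_t$ and each loses one $R$-credit on the ``dislike'' feedback; since every surviving voter contributes at most $2R+1$ to $\Phi_t$, we have $W_t \ge \Phi_t/(2R+1)$, and therefore $\Phi_{t+1} \le \Phi_t - \rho W_t \le \Phi_t\bigl(1 - \tfrac{\rho}{2R+1}\bigr)$. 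Combining $\Phi_1 \le n(2R+1)$, $\Phi_t \ge P(R+1)$, and $(R+1)/(2R+1) \ge \tfrac12$ gives $\bigl(1 - \tfrac{\rho}{2R+1}\bigr)^N \ge \tfrac{P}{2n}$, i.e.\ $N \le \tfrac{2R+1}{\rho}\ln\tfrac{2n}{P} = O\bigl((R+1)\log\tfrac{n}{P}\bigr)$ using $\rho = \tfrac{1}{2m}$, $m = O(1)$ and $-\ln(1-x) \ge x$. Adding parts (i) and (ii) yields the claimed $O((R+1)\log\frac nP)$ expected loss; note the special case $R = 0$ reproduces the p-REC$_\text{sim}$ bound, since then $\mathsf{credit}^{(R)}$ is binary and $\Phi_t = W_t$.

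The main obstacle is precisely this last step. A naive per-voter count of $R$-credit decrements (each voter can be decremented at most $2R+1$ times, each loss round forces $\ge \rho P$ decrements) only yields $N = O(n(R+1)/P)$, which is far too weak. One must notice that the additive drop $\rho W_t$ becomes a \emph{multiplicative} drop once we normalize by $\Phi_t$ via $W_t \ge \Phi_t/(2R+1)$, and that using the sharper floor $\Phi_t \ge P(R+1)$ (peers survive with $R$-credit $\ge R+1$), rather than the crude $\Phi_t \ge P$, is what keeps the final bound logarithmic in $n/P$ and only linear — not super-linear — in $R$. Everything in the argument is per-execution and makes no use of obliviousness, so an adaptive adversary is handled automatically.
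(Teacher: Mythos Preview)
Your proposal is correct and follows essentially the same approach as the paper: your potential $\Phi_t$ is exactly the paper's $C_t^{(R)} = \sum_{i \in \mathcal{U}} \mathsf{weight}[i]\cdot\mathsf{credit}^{(R)}[i]$, and the multiplicative-drop argument $C_{t+1}^{(R)} \le C_t^{(R)}\bigl(1 - \tfrac{\rho}{2R+1}\bigr)$ together with the floor $C_t^{(R)} \ge (R+1)P$ is identical. You additionally spell out why peers never lose their weight (total decrements $\le D+R$), a point the paper simply asserts.
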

\begin{proof}
The proof of this theorem is very similar to the proof of Theorem~\ref{theorem:upper-bound-loss-simple-setting}. First consider all the rounds with $Z_t = 0$. Let $W_t$ be the number of surviving voters in the $t$th round and $C_t^{(R)} = \sum_{i \in \mathcal{U}}{\mathsf{weight}[i] \cdot \mathsf{credit}^{(R)}[i]}$ be the total $R$-credit of all these $W_t$ surviving voters in the $t$th round. If the client incurs a loss in the $t$th round, at least $\rho$ fraction of surviving voters will get their $R$-credit deducted by $1$. That is,
\begin{equation}\label{equation:upper-bound-loss-general-setting}
C_{t + 1}^{(R)} \le C_t^{(R)} - \rho W_t.
\end{equation}
Since each voter has at most $2R + 1$ $R$-credit, we have $C_t^{(R)} \le (2R + 1)W_t$. Substitute this back to \eqref{equation:upper-bound-loss-general-setting}, we get
\begin{equation*}
C_{t + 1}^{(R)} \le C_t^{(R)} \cdot \left(1 - \frac{\rho}{2R + 1}\right).
\end{equation*}
In other words, $C_t^{(R)}$ decreases by at least $\frac{\rho}{2R + 1}$ fraction whenever the client incurs a loss. Notice that the total $R$-credit of all the surviving voters are initially at most $(2R + 1)n$ and always at least $(R + 1)P$ (the client's $P$ peers always have at least $R + 1$ $R$-credit), it then follows that the number of loss rounds with $Z_t = 0$ can be at most $\frac{2R + 1}{\rho}\ln{\frac{(2R + 1)n}{(R + 1)P}} = O((R + 1)\log{\frac{n}{P}})$.

For those rounds with $Z_t = 1$, they cause at most $\gamma T = O(R)$ additional expected loss.
\end{proof}
Here we see the reason of introducing two types of credit: if we only had one type of credit, the upper bound of loss would be $O((D + R + 1)\log{\frac{n}{P}})$, which is not only linear to $R$, but also to $D$ --- this is logically wrong, because if a client likes more objects, the loss should only be smaller.

\paragraph{Privacy analysis of the p-REC algorithm.} Next we analyze the privacy of the p-REC algorithm, we have:
\begin{theorem}\label{theorem:upper-bound-privacy-general-setting}
The p-REC algorithm preserves $O(\frac{1}{P}(D + R + 1)\log{\frac{T}{R + 1}})$-differential privacy, even for an adaptive adversary.
\end{theorem}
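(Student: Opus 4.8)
The plan is to transplant the three-step privacy argument used for p-REC$_\text{sim}$ (proof of Theorem~\ref{theorem:upper-bound-privacy-simple-setting}), with every ``a voter is kicked out'' event replaced by ``a unit of a voter's credit is spent.'' Fix a client $C$, an adjacent pair $\mathcal{V}\langle\mathcal{U}\rangle,\mathcal{V}\langle\mathcal{U}'\rangle$ (with $\mathcal{U}$ holding the extra voter), and a recommendation sequence $b=(b_1,\dots,b_T)$; as in the simple setting it suffices to show $|\mathcal{E}(b)|=O(\frac1P(D+R+1)\log\frac{T}{R+1})$. Write $W_t(b)$ for the number of surviving voters at the start of round $t$ conditioned on the first $t-1$ recommendations being $b_1,\dots,b_{t-1}$. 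The first step is to re-establish $W_t(b)\ge P$: a peer of $C$ has its $R$-credit decremented only in rounds where it votes on the (disliked) recommended object, hence at most $R$ times ever, so its $R$-credit always stays $\ge R+1>0$; it has its $D$-credit decremented only in rounds where $C$ likes $b_t$ but the peer votes elsewhere, which can only be a diversity round of $C$ or one of the $\le R$ rounds in which the peer votes on a disliked object, hence at most $D+R$ times ever, so its $D$-credit always stays $\ge D-R$ and therefore $D$-credit $+$ $R$-credit $\ge(D-R)+(R+1)=D+1>0$. Thus peers are never kicked out and $W_t(b)\ge P\ge 6m$; this is exactly where the enlarged budgets $2D$ and $2R+1$ are needed.

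Second, I would prove the single-round bound $|\mathcal{E}_t(b)|\le 3\lambda/W_t(b)$, which is literally Lemma~\ref{lemma:single-round-privacy-leakage-simple-setting}: since \texttt{RecommendByWeight} is structurally unchanged, $|x_j-x_j'|\le 1/W_t(b)$ still holds and the proof goes through word for word, the only arithmetic to recheck being the step $\frac{(e^{\lambda/W_t(b)}-1)e^{\lambda\rho}}{\phi(1/m)}\le\frac{(e^{3\lambda/W_t(b)}-1)\gamma}{m(1-\gamma)}$, which remains valid for the new $\gamma,\lambda,\rho$ because $e^{3x}-1\ge 3(e^x-1)$ and $m\ge 2$. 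Third, I would prove the analogue of Lemma~\ref{lemma:round-with-non-zero-privacy-leakage-simple-setting}: if $\mathcal{E}_t(b)\ne0$ then at least $\frac1{3m}W_t(b)$ surviving voters have a credit decremented in round $t$. The argument is the same: non-zero leakage forces the fraction of surviving voters voting on $b_t$ (or, when $C$ likes $b_t$, on some other object $\xi$ whose recommendation probability differs between the two executions) to exceed $\rho-\frac1{6m}=\frac1{3m}$; then the feedback rule decrements the $R$-credit of all $>\frac1{3m}W_t(b)$ voters on $b_t$ when $C$ dislikes $b_t$, and otherwise decrements the $D$-credit of all $>\frac1{3m}W_t(b)$ voters on $\xi$, none of whom vote on $b_t$.

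The main step is to show $\sum_{t:\mathcal{E}_t(b)\ne0}\frac1{W_t(b)}=O\big(\frac{D+R+1}{P}\big)$; combined with the previous two steps, $|\mathcal{E}(b)|\le\sum_t|\mathcal{E}_t(b)|\le 3\lambda\sum_{t:\mathcal{E}_t(b)\ne0}\frac1{W_t(b)}=O\big(m^2\cdot\frac{D+R+1}{P}\log\frac{T}{R+1}\big)$, which is the claim since $m$ is constant. Because weights only go from $1$ to $0$, $W_t(b)$ is non-increasing in $t$, so the rounds split into contiguous dyadic phases $\mathcal{P}_k=\{t:2^k\le W_t(b)<2^{k+1}\}$ for $k=\lfloor\log_2 P\rfloor,\dots,\lfloor\log_2 n\rfloor$. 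In $\mathcal{P}_k$, each leakage round spends at least $\frac{2^k}{3m}$ credit units of still-surviving voters (step three); accounting $R$-credit and $D$-credit separately --- while a voter survives its $R$-credit drops by at most $2R+1$ and its $D$-credit by at most $2D+2R$ --- and noting that fewer than $2^{k+1}$ voters are alive when $\mathcal{P}_k$ begins, $\mathcal{P}_k$ contains $O(m(D+R+1))$ leakage rounds and hence contributes $O(m(D+R+1))\cdot 2^{-k}$ to the sum; summing this geometric series over $k\ge\lfloor\log_2 P\rfloor$ gives $O\big(m(D+R+1)/P\big)$. This last step is the crux: unlike in the simple setting, a leakage round need not remove a constant fraction of the survivors (credit gets spent while everyone stays), so $W_t(b)$ need not decay geometrically along leakage rounds and the telescoping trick of Theorem~\ref{theorem:upper-bound-privacy-simple-setting} breaks; indeed a naive bound using only $W_t(b)\ge P$ loses a spurious factor of $n/P$. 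The dyadic bookkeeping above salvages the tight bound because each scale can host only $O(D+R+1)$ leakage rounds and the scales themselves form a geometric series in $1/W_t(b)$. A further subtlety, already visible in step one, is that survival must be argued from the combination of the two credits, since a peer's $D$-credit can legitimately become negative.
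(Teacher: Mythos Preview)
Your argument is correct, but it takes a genuinely different route from the paper at the summation step.

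The paper does \emph{not} abandon the telescoping trick. Instead of working with $W_t(b)$, it introduces the total-credit potential
\[
C_t(b)=\sum_{i\in\mathcal{U}}\mathsf{weight}[i]\cdot(\mathsf{credit}^{(D)}[i]+\mathsf{credit}^{(R)}[i]),
\]
observes that $C_t(b)\le(2D+2R+1)W_t(b)$, and rephrases the two lemmas as $|\mathcal{E}_t(b)|\le 3(2D+2R+1)\lambda/C_t(b)$ and $C_{t+1}(b)\le C_t(b)\bigl(1-\frac{1}{3m(2D+2R+1)}\bigr)$ in leakage rounds. Thus $C_t(b)$ \emph{does} decay geometrically, the sum $\sum_i 1/C_{t_i}(b)$ telescopes exactly as in the simple setting, and the lower bound $C_t(b)\ge P(D+R+1)$ (each peer retains at least $D+R+1$ total credit, since its at most $R$ bad-vote rounds are split between $R$-credit and $D$-credit decrements rather than double-counted) gives the claimed $O\!\bigl(\frac{D+R+1}{P}\bigr)$ bound directly.

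Your dyadic bucketing on $W_t(b)$ achieves the same bound without identifying this potential: each scale hosts $O(m(D+R+1))$ leakage rounds because the $<2^{k+1}$ voters alive at the start of the scale carry $O(D+R+1)$ lifetime credit each, and the scales form a geometric series. This is slightly looser in constants (you over-count a voter's credit across phases) and heavier in bookkeeping, but it is a robust technique that applies whenever the obvious potential does not decay multiplicatively. The paper's approach is cleaner precisely because it finds the potential that \emph{does} decay multiplicatively, reducing the general case to a literal rerun of the simple-case proof with $W_t(b)$ replaced by $C_t(b)$. Your remark that ``the telescoping trick \ldots\ breaks'' is thus a statement about $W_t(b)$, not about the problem; the paper's insight is that switching to $C_t(b)$ restores it.
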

The proof of Theorem~\ref{theorem:upper-bound-privacy-general-setting} follows a very similar structure to that of Theorem~\ref{theorem:upper-bound-privacy-simple-setting}. Again, we consider two executions p-REC$(C, \mathcal{V}\langle\mathcal{U}\rangle)$ and p-REC$(C, \mathcal{V}\langle\mathcal{U}'\rangle)$ for a fixed client $C$ and a fixed pair of adjacent voting patterns $\mathcal{V}\langle\mathcal{U}\rangle, \mathcal{V}\langle\mathcal{U}'\rangle$ ($\mathcal{U}$ contains one more voter than $\mathcal{U}'$). Our goal is to show that $|\mathcal{E}(b)| = O(\frac{1}{P}(D + R + 1)\log{\frac{T}{R + 1}})$ for any fixed recommendation sequence $b = (b_1, \dotsc, b_T) \in \mathcal{B}_1 \times \dotsb \times \mathcal{B}_T$.

We define the number of surviving voters $W_t(b)$ in the same way as before. In addition, we define $C_t(b) = \sum_{i \in \mathcal{U}}{\mathsf{weight}[i] \cdot (\mathsf{credit}^{(D)}[i] + \mathsf{credit}^{(R)}[i])}$ to be the total credit ($D$-credit plus $R$-credit) of all the surviving voters at the beginning of round $t$ in the execution $\text{p-REC}(C, \mathcal{V}\langle\mathcal{U}\rangle)$, conditioned on that the recommendations in the first $t - 1$ rounds are $b_1, \dotsc, b_{t - 1}$. We have:
\begin{lemma}\label{lemma:single-round-privacy-leakage-general-setting}
For any round $t$, $|\mathcal{E}_t(b)| \le 3(2D + 2R + 1) \cdot \frac{\lambda}{C_t(b)}$.
\end{lemma}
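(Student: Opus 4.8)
The plan is to mirror the proof of Lemma~\ref{lemma:single-round-privacy-leakage-simple-setting}, but tracking \emph{credit} rather than the number of surviving voters as the quantity that measures how much the two executions can differ. First I would dispose of the trivial case: if the extra voter in $\mathcal{U}$ has zero weight at the start of round $t$ (it has been kicked out), the two conditional distributions coincide and $\mathcal{E}_t(b) = 0$. So assume the extra voter has weight $1$ and credit $c = \mathsf{credit}^{(D)} + \mathsf{credit}^{(R)}$, where $1 \le c \le 2D + 2R + 1$. The key structural observation is that in \texttt{RecommendByWeight()} the sampling probabilities depend only on the weights, which are $0/1$-valued, so the quantities $x_{j,t}$ (fraction of surviving voters voting on object $j$) differ between the two executions by at most $1/W_t(b)$; but now I also want to relate $1/W_t(b)$ to $1/C_t(b)$. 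Since every surviving voter carries at least $1$ unit of credit, $C_t(b) \ge W_t(b)$, hence $\frac{1}{W_t(b)} \le \frac{1}{C_t(b)}$ — but that is too weak by itself. The right bookkeeping is: the extra voter contributes $1$ to $W_t(b)$ and contributes $c \le 2D+2R+1$ to $C_t(b)$, while every common surviving voter contributes the same weight (hence the same amount, either $0$ or $1$) to $W_t(b)$ in both executions; so $|x_j - x_j'| \le \frac{1}{W_t(b)}$ and, using $W_t(b) \ge C_t(b)/(2D+2R+1)$ (each voter has at most $2D+2R+1$ credit), we get $|x_j - x_j'| \le \frac{2D+2R+1}{C_t(b)}$.

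With this bound on $|x_j - x_j'|$ in hand, the argument is \emph{verbatim} the proof of Lemma~\ref{lemma:single-round-privacy-leakage-simple-setting} with $\frac{1}{W_t(b)}$ replaced throughout by $\frac{2D+2R+1}{C_t(b)}$. Concretely: writing $\delta := \frac{2D+2R+1}{C_t(b)}$, Lemma~\ref{lemma:phi(x)-property-1} gives $\phi(x_{j^*}) \le e^{\lambda\delta}\phi(x_{j^*}') + (e^{\lambda\delta}-1)e^{\lambda\rho}$; an analogue of Lemma~\ref{lemma:ratio-of-sum-of-phi} gives $e^{-2\lambda\delta} \le \frac{\sum_j \phi(x_j)}{\sum_j \phi(x_j')} \le e^{2\lambda\delta}$; combining these exactly as before yields
\[
\frac{\phi(x_{j^*})}{\sum_j\phi(x_j)} \le e^{3\lambda\delta}\frac{\phi(x_{j^*}')}{\sum_j\phi(x_j')} + \frac{(e^{\lambda\delta}-1)e^{\lambda\rho}}{\phi(1/m)},
\]
and the $\gamma$-smoothing term absorbs the additive slack provided $\frac{(e^{\lambda\delta}-1)e^{\lambda\rho}}{\phi(1/m)} \le \frac{(e^{3\lambda\delta}-1)\gamma}{m(1-\gamma)}$, which must be re-verified for the p-REC parameter settings $\gamma = \frac{m}{(3T/(R+1))-1}$ and $\lambda = 2m\ln\frac{T}{R+1}$ (the point being that $\lambda\delta$ is small, since $C_t(b) \ge (2R+1)P \ge$ a large multiple of $2D+2R+1$ is \emph{not} generally true — one only knows $C_t(b) \ge (R+1)P$, so one should check $\lambda\delta \le$ a small constant using $C_t(b) \ge (R+1)P$, $P \ge 6m$, and the stated bound on $\lambda$). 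This gives $\frac{\Pr[b_t\mid b_{<t}]}{\Pr'[b_t\mid b_{<t}]} \le e^{3\lambda\delta} = e^{3(2D+2R+1)\lambda/C_t(b)}$, and the symmetric inequality follows by swapping the roles of the two executions, proving the claimed bound $|\mathcal{E}_t(b)| \le 3(2D+2R+1)\frac{\lambda}{C_t(b)}$.

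I expect the main obstacle to be re-establishing the analogue of Lemma~\ref{lemma:ratio-of-sum-of-phi} in this setting: the original proof crucially uses $|x_j - x_j'| \le \frac{1}{W_t} \le \frac{1}{P} \le \frac{1}{6m}$ to guarantee that the ``heaviest'' object still has $x_\ell \ge \frac{5}{6m} \ge \rho + \frac{\ln 2}{\lambda}$, which is what makes Lemma~\ref{lemma:phi(x)-property-2} applicable. Here $|x_j - x_j'|$ is bounded by $\delta = \frac{2D+2R+1}{C_t(b)}$ instead, so I need $\delta \le \frac{1}{6m}$, i.e. $C_t(b) \ge 6m(2D+2R+1)$. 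This does \emph{not} follow from $C_t(b) \ge (R+1)P \ge 6m(R+1)$ alone when $D$ is large, so the proof will likely need the fact that the $D$-credit cannot drop too far relative to the $R$-credit for surviving voters (a surviving voter has $\mathsf{credit}^{(R)} \ge 1$ \emph{and} $\mathsf{credit}^{(D)} + \mathsf{credit}^{(R)} \ge 1$ by the weight-update rule, and is initialized with $2D + 2R + 1$ total credit) — i.e. one should argue that each surviving voter still carries a reasonably large share of its initial credit, or else redo the geometric-series bookkeeping with the weaker threshold. Alternatively, and more cleanly, one can prove Lemma~\ref{lemma:phi(x)-property-2} is invoked only with $\delta$ replaced by the true per-object perturbation, which is at most $\frac{1}{W_t(b)} \le \frac{1}{P} \le \frac{1}{6m}$ regardless of credit; the factor $2D+2R+1$ then enters \emph{only} through the final conversion $\frac{1}{W_t(b)} \le \frac{2D+2R+1}{C_t(b)}$ applied to the exponent, keeping all the $\phi$-lemma hypotheses exactly as in the simple setting. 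I would adopt this second route, as it makes the reduction to the already-proved Lemmas~\ref{lemma:phi(x)-property-1}--\ref{lemma:phi(x)-property-3} and Lemma~\ref{lemma:ratio-of-sum-of-phi} essentially mechanical.
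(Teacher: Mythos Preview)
Your ``second route'' is exactly the paper's proof: since \texttt{RecommendByWeight()} uses only the $0/1$ weights, the argument of Lemma~\ref{lemma:single-round-privacy-leakage-simple-setting} goes through verbatim to give $|\mathcal{E}_t(b)| \le 3\lambda/W_t(b)$, and the paper then simply substitutes $W_t(b) \ge C_t(b)/(2D+2R+1)$. Your first route does hit the obstacle you identified (the bound $\delta \le 1/(6m)$ is not available), so it is good that you abandoned it; the paper's actual proof is two lines long and is precisely your second route.
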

\begin{proof}
We can show $|\mathcal{E}_t(b)| \le 3\lambda \cdot \frac{1}{W_t(b)}$ in exactly the same way as the proof of Lemma~\ref{lemma:single-round-privacy-leakage-simple-setting}. Recall that each voter has at most $2D + 2R + 1$ total credit, we therefore have $C_t(b) \le (2D + 2R + 1)W_t(b)$. Lemma~\ref{lemma:single-round-privacy-leakage-general-setting} is then proved by substituting this back to $|\mathcal{E}_t(b)| \le 3\lambda \cdot \frac{1}{W_t(b)}$.
\end{proof}
\begin{lemma}\label{lemma:round-with-non-zero-privacy-leakage-general-setting}
For any round $t$, if $|\mathcal{E}_t(b)| \neq 0$, then
\begin{equation*}
C_{t + 1}(b) \le C_t(b) \cdot \left(1 - \frac{1}{3m(2D + 2R + 1)}\right).
\end{equation*}
\end{lemma}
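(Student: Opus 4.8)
The plan is to follow the proof of Lemma~\ref{lemma:round-with-non-zero-privacy-leakage-simple-setting} almost verbatim, replacing ``kicking a voter out'' by ``decrementing one of a voter's two credits.'' The first step is to show that whenever $|\mathcal{E}_t(b)| \neq 0$, at least a $\frac{1}{3m}$ fraction of the $W_t(b)$ surviving voters have one of their credits decremented in round $t$. This part of the argument is identical to the simple setting: since p-REC never kicks out $C$'s peers (the $P$ peers always retain at least $R+1$ $R$-credit), we have $W_t(b) \ge P \ge 6m$, so the fractions $x,x'$ of surviving voters voting on the recommended object $b_t$ in the two executions satisfy $|x-x'| \le \frac{1}{W_t(b)} \le \frac{1}{6m}$; hence $|\mathcal{E}_t(b)|\neq 0$ forces $x > \frac{1}{3m}$, as otherwise $\phi(x)=\phi(x')=0$ and the two recommendation distributions would coincide. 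If $C$ dislikes $b_t$, those $>\frac{1}{3m}$ fraction of surviving voters who vote on $b_t$ all have their $R$-credit decremented; if $C$ likes $b_t$, the same reasoning as in Lemma~\ref{lemma:round-with-non-zero-privacy-leakage-simple-setting} exhibits another object $\xi\neq b_t$ on which at least a $\frac{1}{3m}$ fraction of surviving voters vote, and since these voters do not vote on $b_t$ they all have their $D$-credit decremented by \texttt{UpdateCreditAndWeight}.

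The second step is to convert ``$\ge\frac{1}{3m}W_t(b)$ voters lose one unit of credit'' into a multiplicative drop of $C_t(b)$. Because every credit is non-increasing over the execution, a voter's weight can only pass from $1$ to $0$ and never back, so $C_{t+1}(b)$ receives contributions only from voters that already contributed to $C_t(b)$, and no surviving voter's total credit increases. Each of the $\ge\frac{1}{3m}W_t(b)$ affected voters therefore lowers the sum by at least $1$ (by exactly $1$ if it stays surviving, and by its entire remaining credit $\ge 1$ if the decrement pushes it below the survival threshold), giving $C_{t+1}(b) \le C_t(b) - \frac{1}{3m}W_t(b)$. Finally, since each surviving voter carries total credit $\mathsf{credit}^{(D)}[i] + \mathsf{credit}^{(R)}[i] \le 2D+2R+1$ (the initial value, which only decreases), we have $C_t(b) \le (2D+2R+1)W_t(b)$, i.e.\ $W_t(b) \ge C_t(b)/(2D+2R+1)$, and substituting this yields
\begin{equation*}
C_{t+1}(b) \le C_t(b) - \frac{1}{3m}\cdot\frac{C_t(b)}{2D+2R+1} = C_t(b)\left(1 - \frac{1}{3m(2D+2R+1)}\right),
\end{equation*}
as claimed.

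I do not expect a genuine obstacle. The two points that need care are: (i) the passage from ``kick out'' to ``decrement a credit'' must not weaken the ``$\ge\frac{1}{3m}$ fraction'' claim, and it does not, because the case analysis of Lemma~\ref{lemma:round-with-non-zero-privacy-leakage-simple-setting} only uses which voters vote on which object, not what subsequently happens to their weights; and (ii) the bookkeeping in step two, namely that a decremented-but-surviving voter still reduces $C_t(b)$ by a full unit while a decremented-and-kicked-out voter reduces it by at least as much, so that $C_{t+1}(b) \le C_t(b) - \frac{1}{3m}W_t(b)$ holds uniformly. Both are routine, so the whole lemma is essentially a faithful adaptation of the $D=R=0$ argument combined with the crude bound $C_t(b)\le(2D+2R+1)W_t(b)$ already used in Lemma~\ref{lemma:single-round-privacy-leakage-general-setting}.
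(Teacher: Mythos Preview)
Your proposal is correct and follows essentially the same approach as the paper: first reuse the argument of Lemma~\ref{lemma:round-with-non-zero-privacy-leakage-simple-setting} to show that at least a $\tfrac{1}{3m}$ fraction of surviving voters have a credit decremented, deduce $C_{t+1}(b)\le C_t(b)-\tfrac{1}{3m}W_t(b)$, and then combine with $C_t(b)\le(2D+2R+1)W_t(b)$. The paper's proof is exactly this, only stated more tersely; your extra bookkeeping in point~(ii) merely makes explicit a step the paper takes for granted.
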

\begin{proof}
By following the same arguments in the proof of Lemma~\ref{lemma:round-with-non-zero-privacy-leakage-simple-setting}, we can show that at least $\frac{1}{3m}$ fraction of surviving voters get their $D$-credit or $R$-credit deducted by $1$ after the $t$th round. That is,
\begin{equation}\label{equation:upper-bound-privacy-general-setting}
C_{t + 1}(b) \le C_t(b) - \frac{1}{3m}W_t(b).
\end{equation}
Since $C_t(b) \le (2D + 2R + 1)W_t(b)$, substituting this back to \eqref{equation:upper-bound-privacy-general-setting}, we have
\begin{equation*}
C_{t + 1}(b) \le C_t(b) \cdot \left(1 - \frac{1}{3m(2D + 2R + 1)}\right),
\end{equation*}
as desired.
\end{proof}
Now we can prove Theorem~\ref{theorem:upper-bound-privacy-general-setting}:
\begin{proof}[Proof of Theorem~\ref{theorem:upper-bound-privacy-general-setting}]
Let $t_1, \dotsc, t_K$ be the rounds in which $|\mathcal{E}_{t_i}(b)| \neq 0$.
By replacing Lemma~\ref{lemma:single-round-privacy-leakage-simple-setting} and Lemma~\ref{lemma:round-with-non-zero-privacy-leakage-simple-setting} with Lemma~\ref{lemma:single-round-privacy-leakage-general-setting} and Lemma~\ref{lemma:round-with-non-zero-privacy-leakage-general-setting}, respectively, and then following the same arguments in the proof of Theorem~\ref{theorem:upper-bound-privacy-simple-setting}, we can show that
\begin{equation}\label{equation:upper-bound-privacy-general-setting-1}
|\mathcal{E}(b)| \le 3(2D + 2R + 1)\lambda\sum_{i = 1}^{K}{\frac{1}{C_{t_i}(b)}}
\end{equation}
and
\begin{equation}\label{equation:upper-bound-privacy-general-setting-2}
\sum_{i = 1}^{K}{\frac{1}{C_{t_i}(b)}} \le 3m(2D + 2R + 1) \cdot \frac{1}{C_{t_K}(b)}.
\end{equation}
Since $C_{t_K}(b) \ge P(D + R + 1)$, substitute this to \eqref{equation:upper-bound-privacy-general-setting-2}, we have
\begin{equation}\label{equation:upper-bound-privacy-general-setting-3}
\sum_{i = 1}^{K}{\frac{1}{C_{t_i}(b)}} \le 3m(2D + 2R + 1) \cdot \frac{1}{P(D + R + 1)} \le 6m \cdot \frac{1}{P}.
\end{equation}
We get the desired result by substitute \eqref{equation:upper-bound-privacy-general-setting-3} back to \eqref{equation:upper-bound-privacy-general-setting-1}.
\end{proof}
Here the reason of giving each voter $2D$ initial $D$-credit and $2R + 1$ initial $R$-credit becomes clear: if we were to give each voter $D$ initial $D$-credit and $R + 1$ initial $R$-credit, we could only get $C_{t_K}(b) \ge P$ instead of $C_{t_K}(b) \ge P(D + R + 1)$, which would introduce an extra $\Theta(D + R)$ factor in the result.


\end{document}